\documentclass[aps,prresearch,twocolumn,amsmath,amssymb,longbibliography]{revtex4-2}

\usepackage{amsthm}
\usepackage{bm}
\usepackage{graphicx}
\graphicspath{{figures/}}
\usepackage{hyperref}
\usepackage{booktabs}
\usepackage{xcolor}
\usepackage{mathtools}

\newtheorem{proposition}{Proposition}
\newtheorem{corollary}{Corollary}
\newtheorem{remark}{Remark}

\DeclareMathOperator{\Var}{Var}
\DeclareMathOperator{\Cov}{Cov}

\begin{document}

\title{Hierarchical Disorder in Moir\'e Exciton Photoluminescence Probed by Spectral-Descriptor Correlations}

\author{Katsunori Wakabayashi}
\email{WAKABAYASHI.Katsunori@nims.go.jp}
\affiliation{Research Center for Materials Nanoarchitectonics (MANA), 
National Institute for Materials Science (NIMS),
Namiki 1-1, Tsukuba 305-0044, Japan}

\date{\today}

\begin{abstract}
Hyperspectral photoluminescence (PL) maps of moir\'e transition-metal
dichalcogenide heterobilayers encode rich information about the underlying
disorder, but extracting it is hampered by the ambiguity of assigning individual
spectral peaks. We show that the spatial correlations among a set of simple,
peak-decomposition-free spectral descriptors---such as the centroid energy, the
dominant-peak energy, and a sharp-line fraction---provide a peak-decomposition-free
route to infer features of that disorder. Different descriptors act as filters that select
different components of a multi-scale disorder landscape---a smooth,
micron-correlated background and a dense set of localized traps---and therefore
acquire different spatial correlation lengths. The central prediction is a
correlation-length hierarchy, $\xi(E_{\rm cent}) \ge \xi(E_{\rm dom})$, which we
derive by splitting the dominant-peak energy into a smooth background part and a
short-range trap-switching fluctuation. The same picture explains the measured
inter-descriptor correlations, including the near-perfect anticorrelation
$\rho_{\rm S}(\Delta E_{\rm cd}, R_{\rm HL}) \approx -0.978$, which we show to be
a robust geometric trend for spectra dominated by a common emission-envelope asymmetry.
Benchmarked against phenomenological simulations, Hamiltonian diagonalization, and
the measured MoSe$_2$/WSe$_2$ descriptor correlations, the framework turns
descriptor maps into a quantitative, peak-decomposition-free probe of slow disorder
and local traps in moir\'e excitons and, more broadly, in disordered semiconductor
emitters.
\end{abstract}

\maketitle

\section{Introduction}
\label{sec:intro}

Moir\'e heterostructures formed by vertically stacking two-dimensional materials
with a small twist angle or lattice mismatch have become a central platform
for engineering quantum states of electrons and excitons
\cite{andrei2021marvels,mak2022semiconductor,regan2022emerging,geim2013vdw,novoselov2016heterostructures,
cao2018unconventional,cao2018correlated,kennes2021moire}.
In transition metal dichalcogenide (TMD) heterobilayers---building on the strong
photoluminescence of monolayer TMDs \cite{mak2010atomically,splendiani2010emerging,eda2011photoluminescence}
and their rich optical and valley physics \cite{wang2018colloquium,mak2016photonics,schaibley2016valleytronics}---
the long-wavelength moir\'e potential reorganizes interlayer excitons into
localized minibands, enabling moir\'e-trapped exciton physics
including moir\'e Bose-Hubbard models, exciton crystals,
and quantum emitter arrays
\cite{yu2017moire,tran2019evidence,jin2019observation,xu2020correlated,
tang2020simulation,shimazaki2020strongly,
rivera2015observation,rivera2016valley,nagler2017interlayer,okada2018direct,ciarrocchi2022excitonic}.

Among these systems, MoSe$_2$/WSe$_2$ heterobilayers are particularly
rich platforms for studying moir\'e-modulated optical responses.
At low temperatures, the interlayer exciton photoluminescence (PL) is
characteristically complex:
a broad emission envelope near $1.25$--$1.40~$eV coexists with dense,
narrow spectral peaks whose microscopic assignment remains debated,
with proposed contributions from moir\'e-trapped interlayer excitons,
phonon-assisted momentum-indirect emission, donor-acceptor pair excitons,
and disorder-localized states \cite{seyler2019signatures,alexeev2019resonantly,
jin2019observation,brem2020hybridized,choi2021twist,
he2015single,koperski2015single,srivastava2015optically,chakraborty2015voltage,tonndorf2015single}.

\begin{figure*}[t]
\centering
\includegraphics[width=\textwidth]{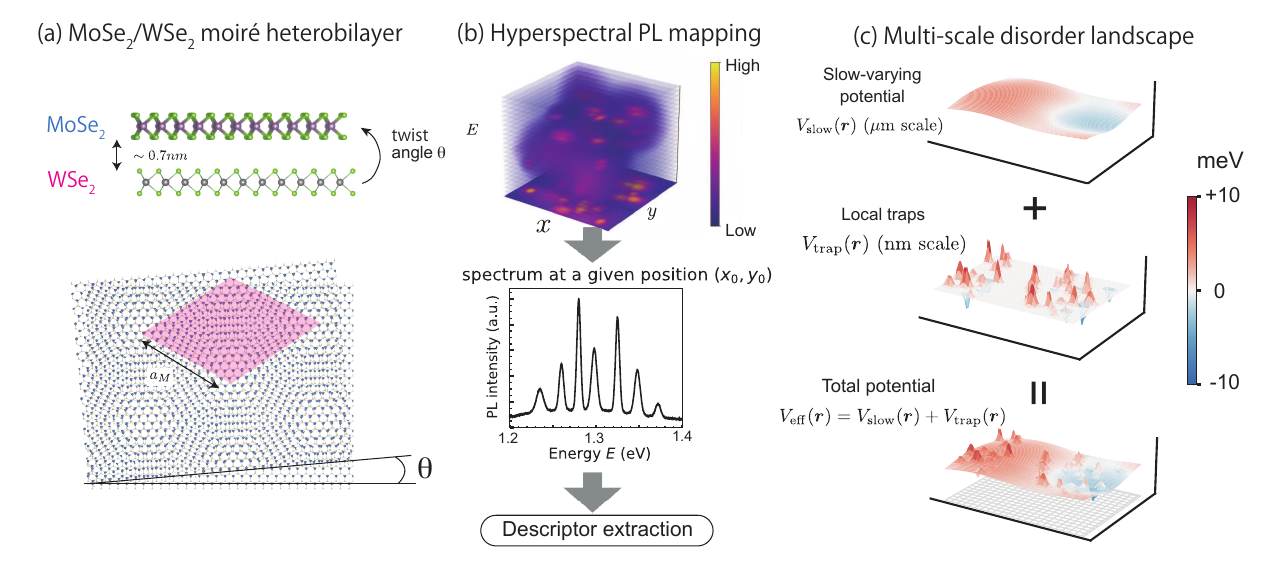}
\caption{
\textbf{Overview of the system, measurement, and theoretical framework.}
(a) MoSe$_2$/WSe$_2$ heterobilayer with interlayer distance
$\sim\!0.7$~nm and twist angle $\theta$,
producing a long-wavelength moir\'e superlattice of period $a_M$.
(b) Hyperspectral photoluminescence mapping yields
a spatially resolved PL spectrum at each map position $(x_0,y_0)$,
from which peak-decomposition-free spectral descriptors
($E_{\mathrm{cent}}$, $E_{\mathrm{dom}}$, $\Delta E_{cd}$, $W_{80}$,
$R_{\mathrm{HL}}$, $F_S$, $R_1$, $S_{\mathrm{spec}}$, $I_{\mathrm{tot}}$)
are extracted without microscopic line assignment.
(c) Theoretical multi-scale disorder landscape: a slowly varying
background potential $V_{\mathrm{slow}}(\mathbf{r})$ at the micron scale
combined with localized traps $V_{\mathrm{trap}}(\mathbf{r})$ at sub-optical length scales
yields the micron-scale effective potential
$V_{\mathrm{eff}}(\mathbf{r}) \simeq V_{\mathrm{slow}}(\mathbf{r}) + V_{\mathrm{trap}}(\mathbf{r})$
that organizes the spectral descriptors hierarchically; the moir\'e term in the
full potential [Eq.~\eqref{eq:Veff}] is coarse-grained over the optical spot and
enters only through the renormalized mass and local density of states.
}
\label{fig:overview}
\end{figure*}

A key observation motivating the present work is that
the spectral complexity is \emph{spatially organized}.
Figure~\ref{fig:overview} sketches the system:
a twisted MoSe$_2$/WSe$_2$ moir\'e heterobilayer [Fig.~\ref{fig:overview}(a)],
its hyperspectral PL map [Fig.~\ref{fig:overview}(b)], and the multi-scale
disorder landscape underlying our model [Fig.~\ref{fig:overview}(c)].

Recent hyperspectral PL mapping experiments by Ahmad et al.\ \cite{ahmad2025hierarchical}
on a MoSe$_2$/WSe$_2$ heterobilayer reveal that
nine peak-decomposition-free spectral descriptors
(centroid energy $E_{\mathrm{cent}}$, dominant-peak energy $E_{\mathrm{dom}}$,
centroid-dominant offset $\Delta E_{cd}$, quantile width $W_{80}$,
high/low ratio $R_{\mathrm{HL}}$, sharp fraction $F_S$,
roughness $R_1$, spectral entropy $S_{\mathrm{spec}}$,
and integrated intensity $I_{\mathrm{tot}}$)
show characteristic micron-scale spatial correlations.
Feature-wise spatial correlation analysis yields 1/e decay lengths of
$1.27$--$2.05~\mu$m for all descriptors,
exceeding the $0.85~\mu$m optical spot size.
Gaussian mixture clustering on the principal component analysis (PCA)-projected descriptor space identifies
three dominant spectral families that form contiguous real-space domains.
At the same time, individual pixels retain a dense multi-peak structure
that varies across the map,
implying an unresolved local spectral manifold below optical resolution.

Here, the \emph{local spectral manifold} denotes the optically unresolved
ensemble of local emitting states---including moir\'e-localized and
trap-related exciton states---that collectively form the PL spectrum
within a single optical spot; we use the term in this physical sense,
not in the differential-geometric one.
More broadly, hyperspectral PL imaging is emerging as a spatially resolved probe
of strain and disorder in TMD systems~\cite{alfrey2026revealing}.

This combination of phenomena---resolved micron-scale spectral landscape
coexisting with unresolved local-spectral-manifold complexity---suggests that a
single disorder scale is insufficient.
The central experimental conclusion is that
the emission is organized \emph{hierarchically}:
a slowly varying micron-scale background shapes the envelope,
while local spectral manifold degrees of freedom produce dense fine structure
within that background.

Despite extensive experimental and numerical studies of moir\'e exciton physics,
a minimal theoretical framework for hierarchical spectral inhomogeneity
as an organizing principle has not been established.
Previous theoretical work has focused primarily on:
(i) single-particle moir\'e miniband structures \cite{wu2018hubbard,wu2017theory,naik2022intralayer,ruiz2019interlayer,naik2018ultraflatbands},
(ii) correlated exciton and electron phases \cite{shimazaki2020strongly,tang2020simulation,regan2020mott,bai2020excitons,ghiotto2021quantum},
(iii) isolated defect-trapped excitons as quantum emitters \cite{parto2021defect,yu2021moire,branny2017deterministic},
or (iv) Gaussian disorder models for energy broadening without spatial structure \cite{brem2020hybridized}.
These approaches do not directly address the \emph{separation of scales}
observed in hyperspectral mapping.

In this paper, we develop a minimal theory for the spatial organization of
spectral descriptors in moir\'e exciton PL---minimal in the sense of containing
the fewest ingredients needed to reproduce the observed descriptor hierarchy,
not minimal in a microscopic sense---built around a single organizing principle:
\emph{different descriptors act as filters for different components of the disorder landscape}.
A descriptor that integrates spectral weight (centroid energy) averages over
local trap fluctuations and tracks the smooth, micron-scale potential $V_\mathrm{slow}$,
while a descriptor that locates spectral peaks (dominant-peak energy) is additionally
randomized by trap-level switching within the optical spot.
This differential filtering is why the descriptor-specific correlation lengths differ,
why the inter-descriptor Spearman matrix has the structure it does,
and why three spectral families emerge from a continuous disorder landscape.
It thereby connects the leading effective disorder parameters $(\xi_s, W_s, W_t)$
to the observable descriptor covariance structure, without any microscopic peak assignment.

The principal contributions of this work are the following.
(i) We formalize the descriptor-based disorder-filter principle introduced above into a quantitative mapping from the descriptor covariance matrix to the disorder parameters $(\xi_s, W_s, W_t)$ \cite{tarantola2005inverse}.
(ii) We derive a \textbf{correlation length hierarchy} $\xi(E_{\mathrm{cent}}) \geq \xi(E_{\mathrm{dom}})$ from the decomposition $E_{\mathrm{dom}} = \bar{V}_s + \delta E_{\mathrm{dom}}$. The trap-switching fluctuation $\delta E_{\mathrm{dom}}$ adds short-range variance, which systematically shortens the correlation length of the dominant-peak energy relative to that of the centroid.
(iii) We show that the nontrivial content of this hierarchy lies not in the mere
fact that the centroid is smoother than the peak position (expected of any
average), but in its magnitude: the ratio $\xi(E_{\mathrm{cent}})/\xi(E_{\mathrm{dom}})$
quantitatively encodes the relative strength of the trap and slow disorder and the
trap density. We turn this into an explicit, peak-decomposition-free protocol
(Table~\ref{tab:param_extract}) that constrains the leading effective disorder
parameters from four measured descriptor observables---the centroid correlation
length, the centroid variance, the $E_{\mathrm{cent}}$--$E_{\mathrm{dom}}$
correlation, and the mean sharp fraction (the last fixing the trap density only
after calibration of the sharp-fraction scale)---applicable, in principle, to
hyperspectral PL maps containing both smooth envelopes and unresolved local fine structure.
(iv) We derive analytically the sign and approximate magnitude of the principal inter-descriptor Spearman correlations. This establishes the $\Delta E_{cd}$--$R_{\mathrm{HL}}$ anti-correlation as a robust spectral-shape relation, valid for a broad class of spectra with a dominant unimodal envelope, and shows the $F_S$--$R_1$ co-variation to follow from both descriptors measuring the same trap-peak density.
(v) We provide an independent model-level check via Hamiltonian diagonalization---without relying on the synthetic two-fluid spectrum construction---recovering the correlation-length hierarchy from the eigenvalue spectrum alone.
(vi) We construct a disorder parameter map that identifies four qualitatively distinct regimes and places the experimental system in the hierarchically disordered regime where both slow and trap disorder exceed the homogeneous linewidth.

Throughout, the theoretical framework is a \emph{spectral-statistical organization theory}:
the Hamiltonian defines a continuum spectral manifold whose coarse-grained statistics
are resolved by the optical spot, and the descriptors are the observables through which
the disorder hierarchy is encoded and constrained.
This is distinct from a quantum-transport or localization theory---the relevant
physics operates at the $\sim\!\mu$m scale of the optical envelope,
not at the nanometer scale of individual eigenstates.

The remainder of this paper is organized as follows.
Section~\ref{sec:model} introduces the effective Hamiltonian and disorder model.
Section~\ref{sec:spectral_theory} develops the Green's function formalism
and local PL spectrum.
Section~\ref{sec:descriptor_theory} derives the statistical properties
of spectral descriptors as disorder filters.
Section~\ref{sec:hierarchy} derives the correlation length hierarchy relation.
Section~\ref{sec:correlations} derives inter-descriptor Spearman correlations.
Section~\ref{sec:phase_diagram} maps the four disorder regimes.
Section~\ref{sec:localization} presents the local-density-of-states (LDOS) two-fluid structure.
Section~\ref{sec:families} interprets spectral families as correlated disorder
domains, and Section~\ref{sec:glass} characterizes the multi-scale spectral
organization that emerges in the hierarchical regime.
Section~\ref{sec:numerics} describes Hamiltonian diagonalization and
phenomenological simulations.
Section~\ref{sec:discussion} discusses limitations and parameter extraction.
Section~\ref{sec:conclusion} concludes.

\section{Model: Effective Exciton Hamiltonian}
\label{sec:model}

This section constructs the effective Hamiltonian on which the rest of the paper
rests. We model the interlayer exciton as a single center-of-mass particle moving
in an effective potential, and assemble that potential from three physically
distinct ingredients---the moir\'e superlattice, a slowly varying long-wavelength
disorder, and sparse local traps---each introduced in turn below. We then identify
the separation of length scales among these ingredients, which is what makes the
descriptor-filter analysis of the following sections tractable.

\subsection{Center-of-Mass Hamiltonian}

We describe the interlayer exciton by its center-of-mass (COM) coordinate
$\mathbf{r}$ in the moir\'e plane.
The internal exciton structure is integrated out, yielding an effective
single-particle problem for the COM motion:

\begin{equation}
H = -\frac{\hbar^2}{2M}\nabla^2 + V_{\mathrm{eff}}(\mathbf{r}),
\label{eq:H}
\end{equation}
where $M$ is the interlayer exciton effective mass.
For MoSe$_2$/WSe$_2$, the electron and hole are spatially separated
into different layers \cite{rivera2015observation,nagler2017interlayer},
giving $M \approx m_e + m_h \approx 1.0$--$1.5~m_0$
\cite{berkelbachtheory,wang2018colloquium,chernikov2014exciton}
where $m_0$ is the free electron mass.

The effective potential is decomposed into three components
acting on different length scales:

\begin{equation}
V_{\mathrm{eff}}(\mathbf{r})
= V_{\mathrm{moir\acute{e}}}(\mathbf{r})
+ V_{\mathrm{slow}}(\mathbf{r})
+ V_{\mathrm{trap}}(\mathbf{r}).
\label{eq:Veff}
\end{equation}
These three components describe physics at three distinct length scales,
as summarized in Table~\ref{tab:scales}.

\begin{table*}[t]
\centering
\begin{ruledtabular}
\begin{tabular}{lccc}
Component & Length Scale & Physical Origin & Role \\
\colrule
$V_{\mathrm{moir\acute{e}}}$ & $a_M \sim 10$--$20~\mathrm{nm}$ & Moir\'e superlattice & Miniband formation \\
$V_{\mathrm{slow}}$ & $\xi_s \sim 1$--$2~\mu\mathrm{m}$ & Strain, twist-angle gradient & Spectral domain formation \\
$V_{\mathrm{trap}}$ & $\sigma_t \lesssim \sigma_{\mathrm{opt}}$ (few--100 nm effective) & Localized traps (defects/strain/charge) & Sharp line generation \\
\end{tabular}
\end{ruledtabular}
\caption{Three disorder components and their characteristic length scales.
Here $\sigma_t$ is an effective trap width spanning the experimentally debated
sharp-emitter origins (atomic defects, strain, or electrostatic disorder),
not a microscopic defect size; the theory requires only $\sigma_t \lesssim \sigma_{\mathrm{opt}} \ll \xi_s$.}
\label{tab:scales}
\end{table*}

\subsection{Moir\'e Potential}

The moir\'e superlattice potential is modeled by the
lowest-order hexagonally symmetric Fourier components:

\begin{equation}
V_{\mathrm{moir\acute{e}}}(\mathbf{r})
= 2V_m \sum_{j=1}^{3} \cos(\mathbf{G}_j \cdot \mathbf{r} + \phi_j),
\label{eq:Vmoire}
\end{equation}
where $\mathbf{G}_j$ are the three inequivalent moir\'e reciprocal lattice vectors
satisfying $|\mathbf{G}_j| = 4\pi/(a_M\sqrt{3})$,
$V_m$ controls the modulation amplitude,
and $\phi_j$ encodes the local stacking registry.
For small twist angle $\theta$ and lattice constant $a_0$,
the moir\'e period is $a_M \approx a_0/\theta$.

For a twist angle $\theta \sim 2$--$4^\circ$, $a_M \sim 10$--$20~$nm,
and the moir\'e potential depth is estimated at $V_m \sim 10$--$50~$meV
from first-principles calculations \cite{wu2018hubbard,wu2017theory,naik2022intralayer,ruiz2019interlayer,yu2017sciAdv}.

This term confines excitons to moir\'e unit cells, creating
localized miniband-like states on the nanometer scale.
For the purposes of spatial PL statistics at the micron scale,
the moir\'e potential can be treated as a fast-varying background
that renormalizes the exciton effective mass and density of states (DOS),
but does not directly set the micron-scale spatial correlations.

\subsection{Long-Wavelength Correlated Disorder}
\label{sec:Vslow}

The dominant component governing micron-scale spectral organization
is a slowly varying, spatially correlated random potential:

\begin{equation}
\langle V_{\mathrm{slow}}(\mathbf{r}) \rangle = 0,
\qquad
\langle V_{\mathrm{slow}}(\mathbf{r})\, V_{\mathrm{slow}}(\mathbf{r}') \rangle
= W_s^2\, \mathcal{C}_s(|\mathbf{r} - \mathbf{r}'|),
\label{eq:Vslow_corr}
\end{equation}
with Gaussian correlation kernel:

\begin{equation}
\mathcal{C}_s(r) = \exp\!\left(-\frac{r^2}{2\xi_s^2}\right).
\label{eq:Cslow}
\end{equation}
The characteristic parameters are:
$W_s \sim 5$--$20~$meV (disorder amplitude)
and $\xi_s \sim 1$--$2~\mu$m (correlation length),
much larger than the moir\'e period $a_M$.

The physical origins of $V_{\mathrm{slow}}$ are diverse.
Spatial variations in the local twist angle $\delta\theta(\mathbf{r})$, directly imaged in twisted TMD bilayers \cite{weston2020atomic,mcgilly2020visualization}, shift the interlayer exciton energy as $\partial E/\partial\theta \cdot \delta\theta$ (the twist-angle dependence of interlayer-exciton properties in TMD heterobilayers is well established \cite{choi2021twist}).
Inhomogeneous strain fields $\varepsilon_{ij}(\mathbf{r})$ couple through the deformation potential $\sim D_{ij}\varepsilon_{ij}$ \cite{conley2013bandgap}.
Electrostatic potential fluctuations arise from trapped charges or dielectric inhomogeneity \cite{raja2019dielectric}.
Finally, moir\'e reconstruction domain walls introduce step-like modulations \cite{weston2020atomic,mcgilly2020visualization,shabani2021deep}.
All of these generically produce smooth fields with correlation lengths
set by the macroscopic sample quality, typically $1$--$5~\mu$m
in high-quality van der Waals assemblies.

We work in the Gaussian approximation \eqref{eq:Vslow_corr}
as the minimal model, realized numerically by the spectral (FFT) method of
Appendix~\ref{app:GRF}; extensions to non-Gaussian slow disorder
are discussed in Section~\ref{sec:discussion}.

\subsection{Local Trap Disorder}
\label{sec:Vtrap}

Local trap potentials are introduced as a sum of attractive Gaussians
centered at random positions $\{\mathbf{R}_i\}$:

\begin{equation}
V_{\mathrm{trap}}(\mathbf{r})
= \sum_{i} u_i\, \exp\!\left(-\frac{|\mathbf{r} - \mathbf{R}_i|^2}{2\sigma_t^2}\right),
\label{eq:Vtrap}
\end{equation}
where $\mathbf{R}_i$ are drawn from a Poisson point process with density $n_t$; $u_i \leq 0$ are random trap depths drawn from a distribution $P(u)$ with mean $\bar{u} < 0$ and characteristic depth scale $W_t$ (the simulations realize $P(u)$ as a half-normal, $u_i = -|\mathcal{N}(0,W_t)|$, so that $W_t$ is the scale of the underlying normal, with second moment $\langle u^2\rangle = W_t^2$ and variance $W_t^2(1-2/\pi)$); and $\sigma_t \ll \xi_s$ is the trap radius.

Each trap acts as a local potential minimum that can bind an exciton
and produce a sharp spectral line.
The trap correlation function is:

\begin{equation}
\langle V_{\mathrm{trap}}(\mathbf{r})\, V_{\mathrm{trap}}(\mathbf{r}') \rangle
= n_t \langle u^2 \rangle \cdot \pi\sigma_t^2\,
\exp\!\left(-\frac{|\mathbf{r} - \mathbf{r}'|^2}{4\sigma_t^2}\right),
\label{eq:Vtrap_corr}
\end{equation}
with correlation length $\sim\sqrt{2}\sigma_t \ll \xi_s$.
This spatial white-noise-like character of the trap disorder
is what distinguishes it from $V_{\mathrm{slow}}$.
Microscopic sources of such localized potentials include native point defects
\cite{komsa2015native,parto2021defect},
strained nanobubbles and strain-localized emitters \cite{branny2017deterministic,koperski2015single},
and adsorbate-induced potential wells.

\subsection{Parameter Hierarchy and Scale Separation}

The effective model assumes a clear separation of length scales,
\begin{equation}
\sigma_t,\ a_M \ll \xi_s \ll L,
\label{eq:hierarchy}
\end{equation}
where $L$ is the sample (or map) size.
In the experiment~\cite{ahmad2025hierarchical}, $L \sim 8~\mu$m and
$\xi_s \sim 2~\mu$m. The two microscopic scales of the model lie far below
$\xi_s$: the moir\'e period $a_M \sim 10$--$20~$nm, and the trap radius $\sigma_t$,
which ranges from the atomic/defect scale (a few nm) up to $\sim$100~nm in our
robustness tests.

We deliberately leave the relative ordering of $\sigma_t$ and $a_M$ unspecified.
Instead, $\sigma_t$ is treated as an \emph{effective} potential width rather than a
microscopic trap dimension, because the origin of the sharp emitters---moir\'e-trapped
excitons, atomic defects, or strain and electrostatic disorder---remains
experimentally debated \cite{seyler2019signatures,brem2020hybridized}. These
possible origins have characteristic sizes ranging from a few nm up to $\sim$100~nm,
so no single microscopic value of $\sigma_t$ is preferred.

The theory requires only $\sigma_t \lesssim \sigma_{\mathrm{opt}} \ll \xi_s$
(with $\sigma_{\mathrm{opt}}$ the optical-spot radius), so that each trap acts as a
sub-spot localized emitter, with many such emitters contributing to a single
measured spectrum. With
this separation, $\sigma_t$ enters the descriptor statistics only through the
product $n_t W_t^2 \sigma_t^2$. This combination is not arbitrary: it is the
variance of the trap potential $V_{\mathrm{trap}}$. Summing the squared
contribution of each well over the Poisson-distributed trap positions gives the
shot-noise variance (Campbell's theorem~\cite{campbell1909discontinuous,campbell1910light})
$\Var V_{\mathrm{trap}} \propto n_t W_t^2 \sigma_t^2$, with the
three factors being the areal density $n_t$, the depth variance $W_t^2$, and the
well area $\sim\sigma_t^2$.
We identify it as the effective trap disorder $\Gamma_t$ in
Sec.~\ref{sec:phase_diagram}, and the descriptor variances that depend on it are
derived in Appendix~\ref{app:variances}.

Because $\sigma_t$ appears only inside this product, its value is not pinned down
on its own: any change in $\sigma_t$ can be offset by a compensating change in the
trap density ($n_t \propto \sigma_t^{-2}$), or in the depth variance $W_t^2$, that
keeps the product---and hence the leading trap-loading contribution to the
descriptor variances---approximately unchanged. In other
words, $\sigma_t$, $n_t$, and $W_t^2$ are \emph{degenerate}: they enter only in
this one combination, so the predictions are insensitive to the precise value of
$\sigma_t$ within the allowed window. The only place
$\sigma_t$ sets an independent length scale is the hierarchy floor
$\xi_t \equiv \sqrt{2}\,\sigma_t$ of Eq.~\eqref{eq:hierarchy_theorem}; even
for $\sigma_t \sim 100~$nm this remains well below $\xi_s$ ($\xi_t/\xi_s \sim 0.07$)
and therefore does not bind the hierarchy. The correlation-length hierarchy and the inter-descriptor
correlations are thus controlled primarily by $(\xi_s, W_s, W_t, n_t)$, with only
weak dependence on $\sigma_t$ within the sub-optical range;
a diagonalization sweep over $\sigma_t = 50$--$125~$nm confirms this explicitly
(Appendix~\ref{app:sigma_robust}).

This separation \eqref{eq:hierarchy} justifies treating the disorder components
as statistically independent and analytically tractable at each scale.
Figure~\ref{fig:disorder_landscape} shows a representative realization of the resulting
hierarchical disorder landscape: the slow potential $V_{\mathrm{slow}}$
[Fig.~\ref{fig:disorder_landscape}(a)], the local trap potential $V_{\mathrm{trap}}$
[Fig.~\ref{fig:disorder_landscape}(b)], and the effective potential
$V_{\mathrm{eff}} = V_{\mathrm{slow}} + V_{\mathrm{trap}}$
[Fig.~\ref{fig:disorder_landscape}(c)].

\begin{figure*}[t]
\centering
\includegraphics[width=\textwidth]{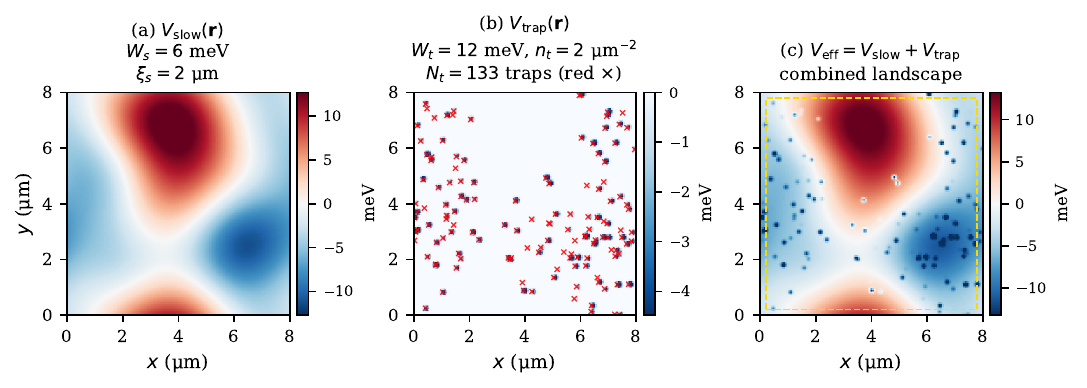}
\caption{
\textbf{Hierarchical disorder landscape.}
(a) Slow correlated disorder $V_{\mathrm{slow}}(\mathbf{r})$ with amplitude
$W_s = 6~$meV and correlation length $\xi_s = 2~\mu$m, generated by
FFT-based Gaussian filtering.
The smooth red/blue domains of size $\sim\xi_s$ represent the slow-disorder
component that gives rise to micron-scale spectral families in the model.
(b) Trap disorder $V_{\mathrm{trap}}(\mathbf{r})$ consisting of
$N_t = 133$ Gaussian potential wells ($W_t = 12~$meV, $\sigma_t = 50~$nm,
$n_t = 2~\mu$m$^{-2}$, consistent with $n_t L^2 \approx 128$ over the
$8\times 8~\mu$m$^2$ domain).
The color scale is saturated at the 2nd percentile of $V_{\mathrm{trap}}$ for
visibility; because the field is near zero between the sparse wells, this
displayed lower bound is much shallower than the deepest individual well
centers, whose bare trap-depth distribution is parameterized by $W_t = 12~$meV.
Red crosses mark the trap centers, which are drawn from
$P(\mathbf{r}) \propto \exp(-V_{\mathrm{slow}}(\mathbf{r})/E_{\mathrm{bias}})$
with $E_{\mathrm{bias}} = 7~$meV, biasing the trap positions, or optically active trap sites, toward slow-potential minima.
(c) Effective potential $V_{\mathrm{eff}} = V_{\mathrm{slow}} + V_{\mathrm{trap}}$,
showing the combined landscape experienced by excitons.
The gold dashed rectangle indicates the $20\times 20$ measurement grid area
(spanning $7.6\times 7.6~\mu$m$^2$, i.e.\ a $0.4~\mu$m pitch matching the
experimental $400$~nm step, with a $0.2~\mu$m margin from each edge).
}
\label{fig:disorder_landscape}
\end{figure*}

\section{Green's Function Formalism and Local PL Spectrum}
\label{sec:spectral_theory}

The spectral descriptors introduced above are all computed from the \emph{local}
PL spectrum $I(E,\mathbf{R})$ --- the emission collected from an optical spot
centered at position $\mathbf{R}$. The goal of this section is to express that
observable directly in terms of the disorder Hamiltonian $H$ of
Sec.~\ref{sec:model}, so that the statistics of the descriptors can later be
traced back to the underlying disorder parameters.

The retarded Green's function is the natural tool for expressing
$I(E,\mathbf{R})$ in terms of $H$. It resolves $H$ into
its spectral content, and its imaginary part yields the local density of states
(LDOS): the spectral weight that the eigenstates of $H$ place at energy $E$ and
position $\mathbf{r}$. The measured local spectrum is then this LDOS filtered by
the finite optical spot and weighted by the radiative coupling of each eigenstate
to light. In this way the formalism converts the eigenstates $\{\psi_n, E_n\}$,
which encode the multi-scale disorder, into the observable $I(E,\mathbf{R})$
\emph{without} fitting or identifying individual spectral peaks --- the
peak-decomposition-free approach on which the descriptor analysis is based.
We build this chain in three steps: the Green's function, the LDOS it generates,
and the radiatively weighted, optically filtered local spectrum.

\subsection{Retarded Green's Function}

The retarded Green's function is the resolvent of the exciton Hamiltonian: its
poles sit at the eigenenergies $E_n$ and its residues carry the eigenfunctions
$\psi_n$, so it packages the entire spectral content of $H$ in a single object
\cite{haug2009quantum},

\begin{equation}
G(\mathbf{r},\mathbf{r}';E)
= \langle \mathbf{r} | (E + i\eta - H)^{-1} | \mathbf{r}' \rangle
= \sum_n \frac{\psi_n(\mathbf{r})\psi_n^*(\mathbf{r}')}{E - E_n + i\eta},
\label{eq:Green}
\end{equation}
where $\psi_n(\mathbf{r})$ and $E_n$ are the eigenstates and eigenenergies of $H$,
and $\eta \to 0^+$ is an infinitesimal broadening.

\subsection{Local Density of States}

The local density of states (LDOS) is:

\begin{equation}
\rho(E,\mathbf{r}) = -\frac{1}{\pi}\,\mathrm{Im}\, G(\mathbf{r},\mathbf{r};E)
= \sum_n |\psi_n(\mathbf{r})|^2\, \delta(E - E_n).
\label{eq:LDOS}
\end{equation}
In practice, the delta function is replaced by a Lorentzian or Gaussian
with linewidth $\eta \sim 1$--$3~$meV (phenomenological exciton lifetime broadening):

\begin{equation}
\rho_\eta(E,\mathbf{r})
= \sum_n |\psi_n(\mathbf{r})|^2\, g_\eta(E - E_n),
\quad
g_\eta(E) = \frac{\eta/\pi}{E^2 + \eta^2}.
\label{eq:LDOS_broadened}
\end{equation}

\subsection{Local PL Spectrum}

The connection to experiment is set by the radiative coupling of the
center-of-mass (COM) exciton eigenstates to light. Because the internal
$1s$ relative wave function is essentially rigid, the light couples to each COM
eigenstate $\psi_n(\mathbf{r})$ through its \emph{coherent} dipole (oscillator-strength)
matrix element, the amplitude integral of the COM wave function over the emission
region, so that the far-field radiative weight of eigenstate $n$ is
$\propto |\!\int\! \psi_n\, d^2\mathbf{r}|^2$ rather than the incoherent probability
density $|\psi_n|^2$; this is the standard result for disorder-localized exciton
emission in quantum wells and related nanostructures
\cite{runge1998spatially,runge1998level,zimmermann2003theory}.
For a local measurement that collects from an optical spot centered at $\mathbf{R}$
with amplitude collection mode $W_{\mathrm{amp}}$, the locally detected PL spectrum is
\begin{equation}
I(E,\mathbf{R})
= \sum_n f_{\mathrm{occ}}(E_n)\, A_n(\mathbf{R})\, g_\eta(E - E_n),
\label{eq:I_eigenstate_full}
\end{equation}
with the coherent radiative weight
\begin{equation}
A_n(\mathbf{R}) = |M_n(\mathbf{R})|^2, \;\;
M_n(\mathbf{R}) = \!\int\! W_{\mathrm{amp}}(\mathbf{r}{-}\mathbf{R})\, \psi_n(\mathbf{r})\, d^2\mathbf{r},
\label{eq:A_weight}
\end{equation}
where $W_{\mathrm{amp}}$ is the (real, positive) collection-mode amplitude and
$W_{\mathrm{opt}} = |W_{\mathrm{amp}}|^2$ is the intensity point-spread function.
For a diffraction-limited Gaussian spot both are Gaussian,
\begin{align}
W_{\mathrm{amp}}(\mathbf{r}) &= \frac{1}{\sqrt{2\pi\sigma_{\mathrm{opt}}^2}}\,
\exp\!\left(-\frac{|\mathbf{r}|^2}{4\sigma_{\mathrm{opt}}^2}\right),
\label{eq:W_amp_gaussian}
\\
W_{\mathrm{opt}}(\mathbf{r}) &= |W_{\mathrm{amp}}(\mathbf{r})|^2
= \frac{1}{2\pi\sigma_{\mathrm{opt}}^2}\,
\exp\!\left(-\frac{|\mathbf{r}|^2}{2\sigma_{\mathrm{opt}}^2}\right),
\label{eq:W_opt_gaussian}
\end{align}
so that $W_{\mathrm{amp}}$ has width $\sqrt{2}\,\sigma_{\mathrm{opt}}$ and the
normalized intensity PSF $W_{\mathrm{opt}}$ has width $\sigma_{\mathrm{opt}} =
\mathrm{FWHM}/(2\sqrt{2\ln 2}) \simeq 0.36~\mu$m (experimental FWHM $=0.85~\mu$m).
The occupation factor is $f_{\mathrm{occ}}(E)$. Only the coarse-graining scale
$\sigma_{\mathrm{opt}}$ enters the descriptor correlations; the overall
normalization sets only the absolute emission intensity.
The coherent weight \eqref{eq:A_weight} preferentially enhances nodeless localized
states and suppresses spatially extended states with sign-changing wave functions.
The mechanism is the cancellation in the amplitude
integral $M_n = \int W_{\mathrm{amp}}\,\psi_n$: a nodeless, well-localized trap
state keeps a single sign across its small extent, so its contributions add up
coherently and $A_n$ is large; an extended state whose wave function oscillates in
sign has its positive and negative contributions largely cancel, so $A_n$ is
small. Thus, in this approximation, localized trap-like states carry larger
radiative weights, whereas extended background states tend to be suppressed by
phase cancellation---a PL-level analogue of the mobility-edge selectivity known
for disordered quantum-well excitons
\cite{runge1998spatially,runge1998level}.

For the analytic descriptor theory it is convenient to use the incoherent
\emph{local-density approximation}, in which $|M_n(\mathbf{R})|^2$ is replaced by
$\int d^2\mathbf{r}\, W_{\mathrm{opt}}(\mathbf{r}-\mathbf{R})\,|\psi_n(\mathbf{r})|^2$,
so that Eq.~\eqref{eq:I_eigenstate_full} becomes the optically filtered LDOS
\begin{equation}
I(E,\mathbf{R})
= \int d^2\mathbf{r}\, W_{\mathrm{opt}}(\mathbf{r} - \mathbf{R})\,
f_{\mathrm{occ}}(E)\, \rho_\eta(E,\mathbf{r}).
\label{eq:I_formal}
\end{equation}
Because both weights are localized within the optical spot, the descriptor
covariance and the correlation-length hierarchy derived below are essentially
the same for the coherent and incoherent forms. We therefore use the transparent
LDOS form \eqref{eq:I_formal} both in the analytic treatment and in the
microscopic diagonalization of Sec.~\ref{sec:key_results}, and verify in
Appendix~\ref{app:optical_weight} that replacing it by the microscopically
motivated coherent radiative form \eqref{eq:A_weight} leaves the descriptor
statistics unchanged within the numerical seed-to-seed uncertainty (the two
weightings differ by $\le 0.01$ for every descriptor).

Approximating $f_{\mathrm{occ}} \approx 1$ for all states in the analyzed PL
window yields the working form
\begin{equation}
I(E,\mathbf{R})
= \sum_n A_n(\mathbf{R})\, g_\eta(E - E_n).
\label{eq:I_eigenstate}
\end{equation}
Here $A_n(\mathbf{R})$ denotes the local optical weight in either representation:
the coherent radiative form of Eq.~\eqref{eq:A_weight}, or the incoherent
local-density approximation
$A_n(\mathbf{R}) = \int d^2\mathbf{r}\, W_{\mathrm{opt}}(\mathbf{r}-\mathbf{R})\,|\psi_n(\mathbf{r})|^2$
[Eq.~\eqref{eq:I_formal}] used in the analytic treatment below.
A full treatment would fix $f_{\mathrm{occ}}$ from the relaxation kinetics between
disorder eigenstates \cite{runge1998level}, which at low temperature can be
strongly energy selective---for example through a mobility edge---and which we do
not attempt to model here. Instead, we adopt the constant-occupation form
$f_{\mathrm{occ}}\simeq 1$ as a deliberate minimal (null) model, \emph{not} as a
thermal-equilibrium approximation, so that the descriptor-correlation hierarchy is
attributed to the disorder-filtering mechanism rather than to occupation-induced
weighting. Within this minimal model, a spatially uniform, smooth energy-dependent
occupation would mainly reshape the common spectral envelope over the analyzed window
($\sim\!150$~meV, i.e.\ $1.25$--$1.40$~eV) and is not expected to change the spatial
correlation hierarchy, since the latter is set by the spatial statistics of the
eigenenergies $E_n$ and weights $A_n(\mathbf{R})$. A strongly energy-selective
occupation, or mobility-edge filtering of the kind emphasized in
Refs.~\cite{runge1998spatially,runge1998level}, would constitute an additional
physical weighting beyond the present minimal model, which we note here as a
limitation.
The optical spot intensity function is modeled as:

\begin{equation}
W_{\mathrm{opt}}(\mathbf{r})
= \frac{1}{2\pi\sigma_{\mathrm{opt}}^2}
\exp\!\left(-\frac{|\mathbf{r}|^2}{2\sigma_{\mathrm{opt}}^2}\right),
\label{eq:Wopt}
\end{equation}
with $\sigma_{\mathrm{opt}} = \mathrm{FWHM}/(2\sqrt{2\ln 2}) \approx 0.36~\mu$m
for a full width at half maximum (FWHM) of the optical spot of $0.85~\mu$m.

\subsection{Spectral Decomposition: Background and Trap Contributions}

A key analytical insight is to decompose $I(E,\mathbf{R})$
into contributions from states with different spatial extents.
We classify eigenstates using the inverse participation ratio (IPR),
which provides a standard measure of real-space localization
\cite{runge1998spatially} and thus naturally separates spatially extended
background states (small IPR) from trap-localized states (large IPR).
We define the localization length
$\ell_n = \mathrm{IPR}_n^{-1/2}$, where
$\mathrm{IPR}_n = \int d^2\mathbf{r}\, |\psi_n(\mathbf{r})|^4$
(with $\psi_n$ normalized).
The IPR works as an inverse effective area: for a state spread uniformly over an
area $A$ one has $|\psi_n|^2 \sim 1/A$ and $\mathrm{IPR}_n \sim 1/A$, so
$\ell_n = \mathrm{IPR}_n^{-1/2}$ measures the linear extent over which the state
has appreciable weight---small $\ell_n$ for tightly localized trap states, large
$\ell_n$ for extended background states \cite{runge1998spatially}.
A cutoff $\ell^*$ (of order the trap size) then separates the two families:

\begin{align}
I(E,\mathbf{R}) &= I_{\mathrm{bg}}(E,\mathbf{R}) + I_{\mathrm{sharp}}(E,\mathbf{R}),
\label{eq:I_decomp}
\\
I_{\mathrm{bg}}(E,\mathbf{R}) &= \sum_{n:\,\ell_n > \ell^*} A_n(\mathbf{R})\, g_\eta(E - E_n),
\label{eq:I_bg}
\\
I_{\mathrm{sharp}}(E,\mathbf{R}) &= \sum_{n:\,\ell_n \leq \ell^*} A_n(\mathbf{R})\, g_\eta(E - E_n),
\label{eq:I_sharp}
\end{align}
where $\ell^*$ is a crossover localization length (see Section~\ref{sec:localization}).
States with $\ell_n > \ell^*$ are spatially extended on the scale of the optical spot
and contribute to the smooth background envelope $I_{\mathrm{bg}}$.
Trap-concentrated states ($\ell_n \leq \ell^*$) contribute narrow lines to $I_{\mathrm{sharp}}$.
The IPR here serves to define this conceptual two-component decomposition, with
$\ell^*$ acting as a practical crossover rather than a sharp microscopic phase
boundary. In the strongly disordered regime of the numerical simulations, where
the IPR is set by the disorder landscape rather than by quantum interference, we
apply an equivalent energy-based classification of the same two families
(Section~\ref{sec:localization}).

\subsection{Optical Coarse-Graining of Slow Disorder}

The slow disorder component enters the measured PL spectrum primarily
through optical coarse-graining. The experimentally measured spectrum
at position $\mathbf{R}$ is not the LDOS at a single microscopic point,
but the LDOS averaged over the optical point-spread function
$W_{\mathrm{opt}}(\mathbf{r}-\mathbf{R})$ [Eq.~\eqref{eq:I_eigenstate}].
Since $V_{\mathrm{slow}}(\mathbf{r})$ varies on the length scale $\xi_s$,
much larger than the optical spot size $\sigma_{\mathrm{opt}}$
($\sigma_{\mathrm{opt}}/\xi_s \ll 1$), the smooth part of the local
spectral manifold within one optical spot is shifted almost uniformly
by the local average of $V_{\mathrm{slow}}$.

We therefore define the optically averaged slow potential as
\begin{equation}
\bar{V}_s(\mathbf{R})
= \int d^2\mathbf{r}\, W_{\mathrm{opt}}(\mathbf{r}-\mathbf{R})\, V_{\mathrm{slow}}(\mathbf{r}).
\label{eq:Born_result}
\end{equation}
This quantity is the energy shift of the background PL envelope observed
at the optical position $\mathbf{R}$, rather than the first-order energy
correction of a single microscopic eigenstate; it requires no assumption
about the detailed profile of individual wavefunctions.

Equivalently, for the background contribution to the LDOS one may write
\begin{equation}
\rho_{\mathrm{bg}}(E,\mathbf{R})
\simeq \rho_{\mathrm{bg}}^{(0)}\!\left(E - \bar{V}_s(\mathbf{R})\right),
\label{eq:bg_shift}
\end{equation}
up to corrections controlled by the small ratio $\sigma_{\mathrm{opt}}/\xi_s$.
This relation expresses the central coarse-graining assumption used below:
\emph{the smooth background within one optical spot is rigidly displaced by
$\bar{V}_s(\mathbf{R})$}, so that centroid-like descriptors, which integrate
spectral weight over the smooth envelope, primarily track $\bar{V}_s(\mathbf{R})$,
whereas peak-position descriptors remain sensitive to additional local
trap-level fluctuations within the same optical spot.

\section{Statistical Theory of Spectral Descriptors}
\label{sec:descriptor_theory}

This section makes the disorder-filter principle quantitative. We first define the
spectral descriptors---simple, peak-decomposition-free functionals of the local
spectrum---and then derive the statistics of the most important ones, showing
explicitly that the centroid energy tracks the slow disorder while the spectral
roughness and entropy track the traps. These results are the analytical basis for
the correlation-length hierarchy and the inter-descriptor correlations derived in
the following sections.

\subsection{Definitions of Spectral Descriptors}

We define the full set of spectral descriptors following the operational
conventions of the experiment~\cite{ahmad2025hierarchical}, so that simulated
and measured descriptors are directly comparable. Each descriptor is a simple
functional of the raw local spectrum $I(E,\mathbf{R})$ that requires no peak
fitting or microscopic line assignment. The set is deliberately chosen so that
different descriptors are sensitive to different aspects of the spectral
shape---its overall position, its width, its asymmetry about the dominant peak,
and its fine structure---and therefore, through the disorder-filter principle,
to different components of the underlying disorder: descriptors that integrate
the whole spectrum (such as the centroid) track the smooth background, whereas
peak- and roughness-based descriptors respond to the sharp, trap-generated fine
structure. We introduce them in turn, in each case stating what the descriptor
measures before giving its definition.

\subsubsection{Centroid Energy}

The centroid is the intensity-weighted mean emission energy---the first moment of
the local spectrum. Because it integrates over the entire spectrum, it averages
out the sub-spot trap fluctuations and tracks the smooth, slowly varying
background set by $V_{\mathrm{slow}}$; it is therefore the descriptor most
directly tied to the micron-scale disorder (Sec.~\ref{sec:hierarchy}). It is
defined by
\begin{equation}
E_{\mathrm{cent}}(\mathbf{R})
= \frac{\int E\, I(E,\mathbf{R})\, dE}{\int I(E,\mathbf{R})\, dE}.
\label{eq:Ecent}
\end{equation}

\subsubsection{Dominant-Peak Energy}

The dominant-peak energy is the position of the tallest spectral feature---the mode of
the local spectrum. In contrast to the centroid, it is set by whichever single
state carries the most optical weight within the optical spot, so it responds to
discrete trap-level switching as the spot is moved and, as shown below, acquires a
shorter correlation length than $E_{\mathrm{cent}}$. It is defined by
\begin{equation}
E_{\mathrm{dom}}(\mathbf{R})
= \operatorname*{arg\,max}_{E}\; I(E,\mathbf{R}).
\label{eq:Edom}
\end{equation}
Here, $\operatorname*{arg\,max}_{E}$ denotes the energy at which the
local PL spectrum reaches its maximum; for discretized spectra, it is
taken as the energy bin with the largest intensity.

\subsubsection{Centroid-Dominant Offset}

The offset between the centroid and the dominant peak measures how far the
intensity-weighted mean lies from the mode, i.e.\ the \emph{skewness} of the
emission envelope about its own maximum. It vanishes for a symmetric line and
takes a definite sign as spectral weight is redistributed to the low- or
high-energy side of the dominant peak, making it a sensitive, peak-decomposition-free
probe of envelope asymmetry. It is defined by
\begin{equation}
\Delta E_{cd}(\mathbf{R}) = E_{\mathrm{cent}}(\mathbf{R}) - E_{\mathrm{dom}}(\mathbf{R}).
\label{eq:DEcd}
\end{equation}

\subsubsection{Quantile Width}

The quantile width characterizes the overall energy spread of the emission
without reference to any individual peak. Using the central-$80\%$ band rather
than a variance or a fitted linewidth makes it robust to weak far tails and to
the multi-peak structure of the local spectrum, so that it reflects the
width of the dominant emission band. Specifically, $W_{80}(\mathbf{R})$ is the
energy width of the band containing the central $80\%$ of the integrated PL
intensity, i.e.\ the interval between the $10$th and $90$th
cumulative-intensity percentiles.

\subsubsection{High/Low Ratio}

The high/low ratio quantifies the asymmetry of the emitted intensity about the
dominant peak---how much spectral weight lies below the main line relative to
above it. It captures the same envelope skew as the offset $\Delta E_{cd}$, but
through the distribution of intensity rather than of energy, which is why the two
turn out to be so tightly linked (Sec.~\ref{sec:correlations}). Following the
experimental convention~\cite{ahmad2025hierarchical}, the spectrum is split at
the dominant peak $E_{\mathrm{dom}}$ (not at a fixed global energy):
\begin{equation}
R_{\mathrm{HL}}(\mathbf{R})
= \frac{\int_{-\infty}^{E_{\mathrm{dom}}} I(E,\mathbf{R})\, dE}
       {\int_{E_{\mathrm{dom}}}^{\infty} I(E,\mathbf{R})\, dE},
\label{eq:RHL}
\end{equation}
so that $R_{\mathrm{HL}} > 1$ when spectral weight accumulates below the dominant
emission peak. This shared $E_{\mathrm{dom}}$ reference is essential: it is what
makes $R_{\mathrm{HL}}$ and $\Delta E_{cd}$ two measures of the same envelope
asymmetry about the dominant peak, and hence tightly anticorrelated. Referencing
the split to $E_{\mathrm{cent}}$ or to a fixed spectral quantile instead measures
a different quantity and weakens or reverses the relation, as expected.

\subsubsection{Sharp Fraction}

The sharp fraction is the fraction of the emitted intensity carried by narrow,
trap-related lines rather than by the broad background envelope. It is a direct,
peak-decomposition-free proxy for the local trap loading---the number and depth
of sharp emitters within the optical spot---and is therefore the descriptor most
sensitive to the trap component of the disorder. It is defined by
\begin{equation}
F_S(\mathbf{R}) = \frac{I_{\mathrm{sharp, tot}}(\mathbf{R})}{I_{\mathrm{tot}}(\mathbf{R})},
\label{eq:FS}
\end{equation}
where $I_{\mathrm{sharp,tot}} = \int I_{\mathrm{sharp}}(E,\mathbf{R})\, dE$ is the
intensity carried by \emph{sharp} spectral features. Operationally, sharp peaks
are identified by local-maximum detection: a local maximum qualifies as sharp if
its prominence exceeds a fraction $p$ of the pixel maximum ($p = 0.02$) and its
width at half-prominence is below a threshold $w$ ($w = 10$~meV), which selects
narrow trap lines while excluding the broad background envelope
(FWHM $\sim 35$~meV); $I_{\mathrm{sharp}}$ is the intensity within $\pm w/2$ of
the qualifying peaks. This peak-detection criterion---rather than a fixed linewidth
threshold or a parametric background subtraction---is robust to its parameters:
varying $p$ over $0.01$--$0.05$ and $w$ over $8$--$15$~meV changes
$\rho_{\mathrm{S}}(F_S,R_1)$ by less than $0.03$ (from $+0.90$ to $+0.92$), while
the mean $\langle F_S\rangle$ shifts only within $0.25$--$0.39$.

\subsubsection{Spectral Roughness}

The spectral roughness is the total variation of the normalized spectrum---its
integrated absolute slope---and so quantifies the amount of sharp structure it
contains: a smooth envelope gives a small $R_1$, while each additional narrow
trap line adds to the total variation. It is thus a peak-detection-free proxy for
the density of fine structure, complementary to the sharp fraction $F_S$ (with
which it is strongly correlated, Sec.~\ref{sec:correlations}). Normalizing by the
integrated intensity makes it independent of the absolute emission strength:
\begin{equation}
R_1(\mathbf{R})
= \frac{\int |dI(E,\mathbf{R})/dE|\, dE}{\int I(E,\mathbf{R})\, dE}.
\label{eq:R1}
\end{equation}

\subsubsection{Spectral Entropy}

The spectral entropy measures how uniformly the emitted intensity is distributed
across energy, treating the normalized spectrum as a probability distribution. It
is small when the emission is concentrated in a few features and approaches unity
for a broad, feature-rich spectrum, and it therefore provides a
peak-decomposition-free measure of local spectral complexity that complements the
quantile width. Normalized by $\ln N$ to lie in $[0,1]$, it reads
\begin{equation}
S_{\mathrm{spec}}(\mathbf{R})
= -\frac{1}{\ln N} \sum_k p_k(\mathbf{R}) \ln p_k(\mathbf{R}),
\label{eq:Sspec}
\end{equation}
where $p_k = I(E_k,\mathbf{R}) / \sum_j I(E_j,\mathbf{R})$ is the
normalized spectral weight on pixel $k$ of the spectrometer,
and $N$ is the total number of energy pixels.

The fine-structure descriptors that involve a derivative or an energy-pixel sum
($R_1$, $S_{\mathrm{spec}}$) are not purely intrinsic quantities: their
\emph{absolute} values depend on the spectral resolution, the energy binning, any
smoothing applied before analysis, and the experimental noise floor. $R_1$
involves an energy derivative and so rescales under smoothing or spectral
broadening; $S_{\mathrm{spec}}$ depends on the effective number of spectral bins
and on the noise floor. We therefore do not treat these absolute values as
instrument-independent, and they should not be compared across datasets unless the
resolution and preprocessing are matched. In our simulations they are evaluated at
the experimental spectral resolution.

What we do compare is one level removed from the absolute values: the spatial
covariance of the descriptors, the inter-descriptor correlation signs, and the
correlation lengths, all computed with the \emph{same} preprocessing applied to
the experimental and simulated spectra. These quantities are expected to be more
robust to a uniform rebinning or broadening, because the same operation acts at
every spatial pixel while the pixel-to-pixel variation is set by the underlying
disorder landscape. We verified this explicitly only for $F_S$, whose
$F_S$--$R_1$ correlation shifts by less than $0.03$ under the threshold variations
above; we do not claim exact invariance for $R_1$ or $S_{\mathrm{spec}}$, and a
full propagation of instrumental noise and resolution through the covariance
matrix is left to future work. Accordingly, $R_1$, $F_S$, and
$S_{\mathrm{spec}}$ are best regarded as \emph{comparative} descriptors defined
together with a specified spectral-preprocessing protocol, rather than as absolute
instrument-independent observables; consistent preprocessing of experiment and
simulation is what makes the comparison meaningful.

\subsection{Centroid Energy: Slow Disorder Dominance}
\label{sec:Ecent_theory}

We now derive the statistical properties of each descriptor from the disorder model.

\begin{proposition}[Centroid energy tracks slow disorder]
\label{prop:Ecent}
When the integrated centroid is dominated by the smooth background envelope
rather than by the integrated weight of the sharp trap lines,
\begin{equation}
E_{\mathrm{cent}}(\mathbf{R}) \approx E_{\mathrm{cent}}^{(0)} + \bar{V}_s(\mathbf{R}),
\label{eq:Ecent_approx}
\end{equation}
where $E_{\mathrm{cent}}^{(0)}$ is the disorder-free centroid energy
and $\bar{V}_s(\mathbf{R})$ is the optical-spot-averaged slow disorder \eqref{eq:Born_result}.
\end{proposition}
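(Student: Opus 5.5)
The argument uses the decomposition \eqref{eq:I_decomp} of the local spectrum into background and sharp parts, together with the rigid-shift coarse-graining relation \eqref{eq:bg_shift}. Write $I_{\mathrm{tot}} = I_{\mathrm{bg,tot}} + I_{\mathrm{sharp,tot}}$, and write the first moments as $\mathcal{M}_{\mathrm{bg}} = \int E\, I_{\mathrm{bg}}\,dE$ and $\mathcal{M}_{\mathrm{sharp}} = \int E\, I_{\mathrm{sharp}}\,dE$. Then the centroid \eqref{eq:Ecent} is exactly a weighted average of the two component centroids:
\begin{equation*}
E_{\mathrm{cent}}(\mathbf{R}) = (1-F_S)\,E_{\mathrm{cent}}^{\mathrm{bg}}(\mathbf{R}) + F_S\,E_{\mathrm{cent}}^{\mathrm{sharp}}(\mathbf{R}),
\end{equation*}
where $F_S$ is the sharp fraction \eqref{eq:FS}. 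The hypothesis of the proposition, that the centroid is dominated by the smooth envelope, will be taken to mean that the correction $F_S\,(E_{\mathrm{cent}}^{\mathrm{sharp}} - E_{\mathrm{cent}}^{\mathrm{bg}})$ is small compared with the slow-disorder scale $W_s$. Equivalently, it can be treated as a small, spatially short-range additive term.

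The main step is the background centroid. Within the incoherent local-density form \eqref{eq:I_formal}, with $f_{\mathrm{occ}}\simeq 1$, substitute $\rho_{\mathrm{bg}}(E,\mathbf{R}) \simeq \rho^{(0)}_{\mathrm{bg}}(E-\bar V_s(\mathbf{R}))$ from \eqref{eq:bg_shift}. Changing variables $E\to E-\bar V_s$ in both numerator and denominator gives $E_{\mathrm{cent}}^{\mathrm{bg}} = E^{(0)}_{\mathrm{cent}} + \bar V_s(\mathbf{R})$ exactly, because the first moment of a translated distribution shifts by the translation. The Lorentzian broadening $g_\eta$ needs care, since its first moment diverges. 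I would handle it by restricting to the finite analyzed window, as the experiment does, and noting that the window edges produce only an $\mathbf{R}$-independent offset up to corrections of order $\bar V_s/\Delta E_{\mathrm{window}}$. Alternatively, a Gaussian $g_\eta$ can be used, whose symmetry makes the shift exact.

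The step I expect to be the main obstacle is justifying the rigid shift at the eigenstate level rather than simply assuming \eqref{eq:bg_shift}. For extended background states, first-order perturbation theory gives $E_n \approx E_n^{(0)} + \int |\psi_n|^2 V_{\mathrm{slow}}$. Weighting by $A_n(\mathbf{R})$ and summing, the spectral first moment picks up $\int W_{\mathrm{opt}}(\mathbf{r}-\mathbf{R})\,\rho_{\mathrm{bg}}\,V_{\mathrm{slow}}$, which reduces to $\bar V_s(\mathbf{R})$ when the background LDOS is locally uniform. Expanding $V_{\mathrm{slow}}$ to second order about $\mathbf{R}$, the correction terms are controlled by $W_s(\sigma_{\mathrm{opt}}/\xi_s)^2$ and $W_s(\ell_n/\xi_s)^2$. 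These are small under the scale separation \eqref{eq:hierarchy}, which justifies the ``$\approx$'' in \eqref{eq:Ecent_approx}.

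Finally, I would show that the discarded sharp-line term does not spoil the result. Trap states sit at $E_{\mathrm{cent}}^{\mathrm{sharp}} \approx \bar V_s + \bar u$, up to fluctuations of order $W_t$, so the difference $E_{\mathrm{cent}}^{\mathrm{sharp}} - E_{\mathrm{cent}}^{\mathrm{bg}}$ is independent of $\bar V_s$ to leading order. The correction therefore contributes only a constant shift, which can be absorbed into $E^{(0)}_{\mathrm{cent}}$ via $\langle F_S\rangle\bar u$, plus a zero-mean fluctuation of order $F_S W_t$ with short-range correlations. This fluctuation is negligible under the stated hypothesis.
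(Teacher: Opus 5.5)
Your proof is correct at the paper's level of rigor: like you, the paper takes the rigid shift \eqref{eq:bg_shift} as an assumption rather than deriving it. It is, however, organized differently.

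The paper argues state by state. It writes $E_{\mathrm{cent}}=\sum_n A_n E_n/\sum_n A_n$ and inserts $E_n\approx E_n^{(0)}+\bar V_s(\mathbf R_n)$ for every significantly weighted eigenstate, trap states included. It then uses the locality of $A_n(\mathbf R)$ to replace $\bar V_s(\mathbf R_n)$ by $\bar V_s(\mathbf R)$. You instead use the exact split $E_{\mathrm{cent}}=(1-F_S)E^{\mathrm{bg}}_{\mathrm{cent}}+F_S E^{\mathrm{sharp}}_{\mathrm{cent}}$. You shift only the background, which is all \eqref{eq:bg_shift} asserts, and carry the trap part as an explicit remainder.

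What the paper's route buys:
\begin{itemize}
\item It is shorter and needs no $\ell^*$ classification.
\item It shows directly that trap lines riding on $\bar V_s$ move with it.
\end{itemize}
Its weaknesses:
\begin{itemize}
\item The proposition's hypothesis never enters the proof explicitly. It is hidden in treating $\sum_n A_n(\mathbf R)E_n^{(0)}/\sum_n A_n(\mathbf R)$ as the constant $E_{\mathrm{cent}}^{(0)}$, although this ratio depends on $\mathbf R$ through which traps fall in the spot.
\item It uses $\int E\,g_\eta(E-E_n)\,dE=E_n$, which holds for a Lorentzian only as a principal value.
\end{itemize}
Your route turns the hypothesis into a checkable condition: the fluctuating part of $F_S(E^{\mathrm{sharp}}_{\mathrm{cent}}-E^{\mathrm{bg}}_{\mathrm{cent}})$ must be small compared with $W_s$. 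It also catches the first-moment problem and gives an explicit error budget.

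Of your two fixes for the Lorentzian, prefer the Gaussian one, which is what the simulations actually use. Truncating a Lorentzian to the window does not produce a pure offset. It rescales the slope of $E_{\mathrm{cent}}$ versus $\bar V_s$ by a relative amount of order $\eta/\Delta E_{\mathrm{window}}$, which is not negligible for the broad background.

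Two side claims should be tightened, although neither affects the proposition.

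First, the sharp remainder is not short-range in the paper's own model. With the clustered traps of Sec.~\ref{sec:trap_clustering}, $F_S$ follows $\bar V_s$ on the scale $\xi_s$ [Eq.~\eqref{eq:xi_entropy}]. So $\delta F_S\,\langle E^{\mathrm{sharp}}_{\mathrm{cent}}-E^{\mathrm{bg}}_{\mathrm{cent}}\rangle$ contains a long-range piece slaved to $\bar V_s$. That piece renormalizes the coefficient of $\bar V_s$ rather than averaging out. It is negligible only because your hypothesis makes the whole remainder small.

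Second, first-order perturbation theory about disorder-free states presumes that $V_{\mathrm{slow}}$ is weak on the kinetic scale. That is the opposite of the diagonalization regime, where $t_{\mathrm{hop}}/W_s\sim 10^{-3}$. There, the appropriate justification is that $V_{\mathrm{slow}}$ is nearly constant over each compact state.

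Alternatively, take the finite-difference lattice with all eigenstates kept, $f_{\mathrm{occ}}=1$, and Gaussian broadening. The exact sum rule $\sum_n|\psi_n(\mathbf r)|^2E_n=\langle\mathbf r|H|\mathbf r\rangle=4t_{\mathrm{hop}}+V_{\mathrm{slow}}(\mathbf r)+V_{\mathrm{trap}}(\mathbf r)$ then makes the incoherent centroid exactly $4t_{\mathrm{hop}}+\bar V_s(\mathbf R)+\bar V_{\mathrm{trap}}(\mathbf R)$. Here $\bar V_{\mathrm{trap}}$ is the spot average of $V_{\mathrm{trap}}$, so the trap remainder is isolated without any perturbative input.
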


\begin{proof}
From Eqs.\ \eqref{eq:I_eigenstate} and \eqref{eq:Ecent}:
\begin{equation}
E_{\mathrm{cent}}(\mathbf{R})
= \frac{\sum_n A_n(\mathbf{R})\, E_n}{\sum_n A_n(\mathbf{R})}.
\end{equation}
Using the rigid background displacement $E_n \approx E_n^{(0)} + \bar{V}_s(\mathbf{R}_n)$
from \eqref{eq:bg_shift},
and noting that $A_n(\mathbf{R})$ is concentrated near states $\psi_n$ whose support overlaps
the optical spot at $\mathbf{R}$, so $\bar{V}_s(\mathbf{R}_n) \approx \bar{V}_s(\mathbf{R})$
for all significantly contributing states:
\begin{equation}
E_{\mathrm{cent}}(\mathbf{R})
\approx \frac{\sum_n A_n(\mathbf{R})(E_n^{(0)} + \bar{V}_s(\mathbf{R}))}
              {\sum_n A_n(\mathbf{R})}
= E_{\mathrm{cent}}^{(0)} + \bar{V}_s(\mathbf{R}).
\end{equation}
\end{proof}

\begin{corollary}[Centroid correlation function]
\label{cor:Ecent_corr}
\begin{align}
C_{E_{\mathrm{cent}}}(r)
&\equiv \frac{\langle \delta E_{\mathrm{cent}}(\mathbf{R})\,
                     \delta E_{\mathrm{cent}}(\mathbf{R}') \rangle}
            {\langle \delta E_{\mathrm{cent}}^2 \rangle}
\nonumber\\
&\approx \exp\!\left(-\frac{r^2}{2(\xi_s^2 + 2\sigma_{\mathrm{opt}}^2)}\right),
\label{eq:CEcent}
\end{align}
for $r = |\mathbf{R} - \mathbf{R}'|$, where
$\delta E_{\mathrm{cent}} \equiv E_{\mathrm{cent}} - \langle E_{\mathrm{cent}}\rangle$.
\end{corollary}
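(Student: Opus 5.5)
By Proposition~\ref{prop:Ecent}, $\delta E_{\mathrm{cent}}(\mathbf{R}) \approx \bar{V}_s(\mathbf{R}) - \langle \bar{V}_s\rangle$. Since $\langle V_{\mathrm{slow}}\rangle = 0$ and $W_{\mathrm{opt}}$ is normalized, $\langle \bar V_s\rangle = 0$. The normalized centroid correlation function is therefore the normalized autocorrelation of the optically averaged slow potential \eqref{eq:Born_result}. The whole task reduces to computing $\langle \bar V_s(\mathbf{R})\bar V_s(\mathbf{R}')\rangle$ from the Gaussian kernel \eqref{eq:Cslow} and the Gaussian PSF \eqref{eq:W_opt_gaussian}.

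First, insert the definition \eqref{eq:Born_result} twice and exchange the average with the integrals, using \eqref{eq:Vslow_corr}:
\begin{equation}
\langle \bar V_s(\mathbf{R})\bar V_s(\mathbf{R}')\rangle
= W_s^2\!\int\! d^2\mathbf{r}\,d^2\mathbf{r}'\,
W_{\mathrm{opt}}(\mathbf{r}-\mathbf{R})\,W_{\mathrm{opt}}(\mathbf{r}'-\mathbf{R}')\,
\mathcal{C}_s(|\mathbf{r}-\mathbf{r}'|).
\end{equation}
Change variables to $\mathbf{u} = \mathbf{r}-\mathbf{R}$ and $\mathbf{u}' = \mathbf{r}'-\mathbf{R}'$. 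The kernel then depends on $\mathbf{R}-\mathbf{R}' + \mathbf{u}-\mathbf{u}'$. The difference $\mathbf{u}-\mathbf{u}'$ of two independent isotropic Gaussian variables of variance $\sigma_{\mathrm{opt}}^2$ per component is itself Gaussian, with variance $2\sigma_{\mathrm{opt}}^2$ per component. The expression is therefore a single convolution of the normalized 2D Gaussian density $\mathcal{N}(0,2\sigma_{\mathrm{opt}}^2)$ with the unnormalized Gaussian $\exp(-|\mathbf{x}|^2/2\xi_s^2)$.

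Second, evaluate this convolution with the standard 2D Gaussian identity, which follows by completing the square per Cartesian component. The result is
\begin{equation}
\frac{\xi_s^2}{\xi_s^2+2\sigma_{\mathrm{opt}}^2}\exp\!\left(-\frac{r^2}{2(\xi_s^2+2\sigma_{\mathrm{opt}}^2)}\right).
\end{equation}
Setting $r=0$ gives the variance $W_s^2\,\xi_s^2/(\xi_s^2+2\sigma_{\mathrm{opt}}^2)$. Dividing by it cancels the prefactor and yields exactly \eqref{eq:CEcent}. Stationarity and isotropy of $V_{\mathrm{slow}}$ guarantee that the result depends only on $r$. The variance expression can be recorded in passing for later use in parameter extraction.

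The only real obstacle is justifying the "$\approx$". The exact Gaussian calculation holds for $\bar V_s$. Replacing $\delta E_{\mathrm{cent}}$ by $\bar V_s$ discards two contributions. The first is the residual trap-weight contribution to the centroid, which Proposition~\ref{prop:Ecent} assumes is subdominant. The second is the $O(\sigma_{\mathrm{opt}}/\xi_s)$ non-rigid shift corrections from \eqref{eq:bg_shift}.

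I would argue that any residual trap contribution is a Poisson shot-noise field. Its correlations decay over a range set by $\sigma_{\mathrm{opt}}$ and $\sigma_t$, via Campbell's theorem and \eqref{eq:Vtrap_corr}. If trap positions are independent of $V_{\mathrm{slow}}$, this field is uncorrelated with $\bar V_s$. It would therefore only add a short-range piece to the normalized correlation, suppressed by the ratio of its variance to $W_s^2$. Under the hypothesis of the Proposition this piece is neglected, and that is the content of the "$\approx$" in \eqref{eq:CEcent}.
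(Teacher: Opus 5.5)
Your proposal is correct and follows the paper's proof. You reduce $\delta E_{\mathrm{cent}}$ to $\bar V_s$ via Proposition~\ref{prop:Ecent}, evaluate the double Gaussian convolution to obtain width $\sqrt{\xi_s^2+2\sigma_{\mathrm{opt}}^2}$ with prefactor $W_s^2\xi_s^2/(\xi_s^2+2\sigma_{\mathrm{opt}}^2)$, and observe that this prefactor cancels on normalization; your treatment of $\mathbf{u}-\mathbf{u}'$ as an $\mathcal{N}(0,2\sigma_{\mathrm{opt}}^2)$ variable and your discussion of what the ``$\approx$'' discards only spell out what the paper states in one line.
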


\begin{proof}
$\bar{V}_s(\mathbf{R}) = \int d^2\mathbf{r}\, W_{\mathrm{opt}}(\mathbf{r}-\mathbf{R}) V_{\mathrm{slow}}(\mathbf{r})$
is a Gaussian-filtered version of $V_{\mathrm{slow}}$.
Setting $r = |\mathbf{R}-\mathbf{R}'|$, the convolution of two Gaussians gives:
\begin{align}
\langle \bar{V}_s(\mathbf{R}) \bar{V}_s(\mathbf{R}') \rangle
&= \int d^2\mathbf{r}\, d^2\mathbf{r}'\,
   W_{\mathrm{opt}}(\mathbf{r}-\mathbf{R})\,
   W_{\mathrm{opt}}(\mathbf{r}'-\mathbf{R}')\nonumber\\
&\quad \times W_s^2\, e^{-|\mathbf{r}-\mathbf{r}'|^2/2\xi_s^2}
\nonumber\\
&\simeq W_s^2 \exp\!\left(-\frac{r^2}{2(\xi_s^2 + 2\sigma_{\mathrm{opt}}^2)}\right),
\end{align}
since the convolution of Gaussians with widths $\sigma_{\mathrm{opt}}$, $\sigma_{\mathrm{opt}}$, $\xi_s$
yields a Gaussian with width $\sqrt{\xi_s^2 + 2\sigma_{\mathrm{opt}}^2}$.
The optical filtering reduces the prefactor to
$\Var(\bar{V}_s) = W_s^2\,\xi_s^2/(\xi_s^2 + 2\sigma_{\mathrm{opt}}^2)$;
for $\xi_s \gg \sigma_{\mathrm{opt}}$ this is $\simeq W_s^2$, as used above
(equivalently, one may read $W_s$ here as the filtered amplitude $W_{s,\mathrm{eff}} \simeq W_s$).
Since the normalized correlation \eqref{eq:CEcent} divides by the variance, this
prefactor cancels and Eq.~\eqref{eq:CEcent} is exact regardless of the reduction.
The effective correlation length is:
\begin{equation}
\xi_{\mathrm{eff}} = \sqrt{\xi_s^2 + 2\sigma_{\mathrm{opt}}^2}.
\label{eq:xi_eff}
\end{equation}
Since $\xi_s \gg \sigma_{\mathrm{opt}}$ ($2~\mu$m vs.\ $0.36~\mu$m), we have
$\xi_{\mathrm{eff}} \approx \xi_s$.
\end{proof}

\begin{remark}
The optical spot smoothing is negligible when $\xi_s \gg \sigma_{\mathrm{opt}}$,
which is satisfied experimentally.
The measured 1/e correlation length of $E_{\mathrm{cent}}$ directly yields $\xi_s$.
\end{remark}

\subsection{Spectral Entropy and Roughness: Trap Disorder Dominance}

In contrast to $E_{\mathrm{cent}}$, the spectral entropy and roughness
primarily reflect the local spectral manifold complexity generated by trap disorder.

\begin{proposition}[Entropy and roughness track trap density]
\label{prop:entropy}
For fixed slow disorder, the spectral entropy $S_{\mathrm{spec}}$ and
roughness $R_1$ are increasing functions of the effective trap density
$n_t^{\mathrm{eff}}(\mathbf{R}) = n_t \cdot f(\bar{V}_s(\mathbf{R}))$,
where $f$ is an increasing function of the slow disorder background
(deep traps attract excitons more effectively in a background minimum).
\end{proposition}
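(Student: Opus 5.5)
The plan is to work in the two-component decomposition of Eqs.~\eqref{eq:I_decomp}--\eqref{eq:I_sharp} and treat the local spectrum as a smooth background envelope plus a superposition of narrow lines. The background $I_{\mathrm{bg}}(E,\mathbf{R}) \simeq I^{(0)}_{\mathrm{bg}}(E-\bar V_s(\mathbf{R}))$ is rigidly shifted by Eq.~\eqref{eq:bg_shift}. The sharp part is
\[
I_{\mathrm{sharp}}(E,\mathbf{R})=\sum_{i\in\mathcal{T}(\mathbf{R})} a_i\, g_\eta(E-\epsilon_i),
\]
where $\mathcal{T}(\mathbf{R})$ is the set of optically active traps inside the spot. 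Its expected cardinality is $N(\mathbf{R}) \simeq 2\pi\sigma_{\mathrm{opt}}^2\, n_t^{\mathrm{eff}}(\mathbf{R})$.

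The first step is to justify the form $n_t^{\mathrm{eff}} = n_t f(\bar V_s)$ with $f$ increasing in the depth of the background. For this I would use the biased trap statistics, $P(\mathbf{r})\propto \exp(-V_{\mathrm{slow}}/E_{\mathrm{bias}})$, or equivalently the statement that a trap is optically active when $u_i+\bar V_s$ lies below the background band edge. Either way the expected number of active lines per spot is monotone in $-\bar V_s$. Because $\sigma_{\mathrm{opt}}\ll\xi_s$, $\bar V_s$ is essentially constant across the spot. So at fixed slow disorder the only variable is the Poisson intensity $n_t^{\mathrm{eff}}$, and the claim reduces to showing that $\langle R_1\rangle$ and $\langle S_{\mathrm{spec}}\rangle$ increase with the mean line count $N$.

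\textbf{Roughness.} Write $R_1 = \mathrm{TV}(I)/I_{\mathrm{tot}}$. Each isolated Lorentzian line of amplitude $a_i$ contributes total variation $\approx 2a_i/(\pi\eta)$, which is of order its peak height. The smooth envelope contributes only $O(I_{\mathrm{bg}}/\Delta)$, where $\Delta\sim 35$~meV $\gg\eta$. In the non-overlapping regime $N\eta\ll W_t$, the total variation is additive up to corrections from overlapping pairs. Hence
\[
R_1 \simeq \frac{c_0 I_{\mathrm{bg}}/\Delta + N\langle a\rangle/(\pi\eta)}{I_{\mathrm{bg}}+N\langle a\rangle}.
\]
This is increasing in $N$ exactly when $1/(\pi\eta) > c_0/\Delta$, which holds since $\eta\ll\Delta$. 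In the general case I would use Poisson thinning to make the argument robust beyond the non-overlapping regime: a spectrum at intensity $n'>n$ equals one at intensity $n$ plus independent extra lines. I would then argue that adding a narrow line raises the numerator by nearly its full $2a/\pi\eta$ while raising the denominator only by $a$.

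\textbf{Entropy.} Treat $p_k$ as a mixture, $p=(1-F)p_{\mathrm{bg}}+F p_{\mathrm{sharp}}$. Concavity of entropy gives $S \ge (1-F)S_{\mathrm{bg}}+F S_{\mathrm{sharp}}+H_2(F)$, where $H_2$ is the binary entropy. Adding lines at independent random energies spread over $W_t$ makes $p_{\mathrm{sharp}}$ approach a smooth density on the trap band. Its entropy then increases with $N$, roughly like $\ln\min(N, W_t/\eta)$ up to the band-limited value. It also broadens the support beyond the background envelope into the low-energy tail, so the total entropy increases.

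I expect this entropy step to be the main obstacle. Entropy is not monotone under adding a single spike: a very strong isolated line concentrates weight and lowers $S_{\mathrm{spec}}$. So monotonicity can only hold in expectation, and only in the regime where individual trap weights $a_i$ are comparable to $I_{\mathrm{bg}}/N_{\mathrm{bins}}$ rather than dominating the spectrum. I would therefore state the result for ensemble averages at fixed $\bar V_s$. To obtain an explicit increasing function of $N$, I would expand $\langle S\rangle$ to first order in $F_S$ and check the sign of the derivative. The entropy gained from spreading lines across the low-energy tail, of order $F_S\ln(W_t/\Delta)$ plus $H_2(F_S)$, must exceed the loss from peak concentration, of order $F_S\ln(\Delta/\eta)$ times the inverse line count. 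Where this inequality is marginal, I would flag it as the condition under which the proposition holds and defer to the numerical check in Sec.~\ref{sec:numerics}.
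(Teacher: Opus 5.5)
Your reduction matches the paper's. Its justification asserts that each additional resolved trap line adds total variation (raising $R_1$) and adds a spectral component (raising $S_{\mathrm{spec}}$ toward its bound), so both grow with the line count per spot and hence with $n_t^{\mathrm{eff}}$. It concedes only that near-degenerate lines contribute sub-additively. Your non-overlapping $R_1$ estimate is a faithful quantitative version of this. Your display drops the factor $2$ you stated a line earlier, so the condition is $2/(\pi\eta)>c_0/\Delta$. Your warning that a strong isolated line lowers the entropy is correct, and it is exactly what the paper's one-line claim glosses over.

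On $f$, only your first mechanism works inside the model. With $f_{\mathrm{occ}}\simeq 1$, the trap levels $\bar V_s+u_i$ and the local background edge shift together with $\bar V_s$. Since $R_1$ and $S_{\mathrm{spec}}$ are invariant under a common energy shift, an ``active below the band edge'' criterion yields no $\bar V_s$ dependence. The paper's own rationale, that deeper wells bind more effectively, has the same weakness. A monotone $f$ comes only from the biased placement $P(\mathbf{r})\propto e^{-V_{\mathrm{slow}}/E_{\mathrm{bias}}}$.

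The entropy step, however, rests on a reversed inequality. For $p=(1-F)p_{\mathrm{bg}}+Fp_{\mathrm{sharp}}$,
\[
(1-F)S_{\mathrm{bg}}+FS_{\mathrm{sharp}}\;\le\;S\;\le\;(1-F)S_{\mathrm{bg}}+FS_{\mathrm{sharp}}+H_2(F).
\]
The lower bound is concavity; the upper one, with $H_2$, is attained only for disjoint supports. A few narrow lines have $S_{\mathrm{sharp}}<S_{\mathrm{bg}}$, so the valid lower bound lies below $S_{\mathrm{bg}}$ and proves nothing. Treating the lines as sitting in the low-energy tail (near-disjoint supports) instead gives, for $N$ separated lines of comparable weight, $\Delta S\approx H_2(F)-F\ln[\eta_{\mathrm{bg}}/(N\eta)]$. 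Two corrections to your bookkeeping follow:
\begin{itemize}
\item The concentration penalty is reduced additively by $F\ln N$, not divided by $N$.
\item Your ``gain'' $F_S\ln(W_t/\Delta)$ is negative for $W_t<\Delta$; it is the dense-limit form of the same penalty.
\end{itemize}
$\Delta S$ is concave in $F$, vanishes at $F=0$, and equals $-\ln[\eta_{\mathrm{bg}}/(N\eta)]<0$ at $F=1$ when $N<\eta_{\mathrm{bg}}/\eta$. It is therefore positive only below a threshold $F^*$, with $F^*\approx\mathrm{e}\,N\eta/\eta_{\mathrm{bg}}$ when small. For $\eta_{\mathrm{bg}}=10\eta$, a single line carrying more than about a quarter of the intensity already lowers $S_{\mathrm{spec}}$. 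Any overlap with the envelope pushes $S$ below this estimate. The entropy half of the proposition is therefore conditional. State the condition explicitly rather than defer it to the numerics, which quote $S_{\mathrm{spec}}$ only through its correlation with $W_{80}$.

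The Poisson-thinning extension of your $R_1$ argument also fails. A line added in the shallow valley between two nearby lines can merge them into one bump and lower the total variation, so $R_1$ is not pathwise monotone. Once $N\eta\gtrsim W_t$, the lines form a quasi-continuum whose total variation per unit intensity falls from $\sim 1/\eta$ toward $\sim 1/W_t$. So $\langle R_1\rangle$ eventually \emph{decreases} with $n_t^{\mathrm{eff}}$. What you can defend is the dilute-regime result you derived, $N\eta\ll W_t$; at the best-fit parameters there are about $2$--$3$ lines per spot with $\eta/W_t\approx 0.1$. This is the actual content of the paper's sub-additivity caveat.
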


\textbf{Justification.} Each additional trap line resolved within the optical spot
adds total variation to $I(E,\mathbf{R})$, raising $R_1$, and adds a spectral
component, raising $S_{\mathrm{spec}}$ toward its bound; both therefore increase
with the number of trap lines per spot, i.e.\ with $n_t^{\mathrm{eff}}$. The
enhancement $f(\bar{V}_s)$ grows toward slow-potential minima, where deeper local
wells bind excitons more effectively. The argument is monotonic rather than exact
(near-degenerate lines contribute sub-additively), consistent with the
$\propto n_t W_t^2\sigma_t^2$ variance scaling of Appendix~\ref{app:variances}.

This proposition has an important corollary:
even the trap-sensitive descriptors $S_{\mathrm{spec}}$ and $R_1$
should show micron-scale spatial correlations,
because the effective trap coupling is modulated by $\bar{V}_s(\mathbf{R})$.
Specifically, regions with lower $E_{\mathrm{cent}}$ (deeper $V_{\mathrm{slow}}$ minima)
will show enhanced $F_S$ and $R_1$ because traps in such regions
create deeper bound states relative to the local band edge.

This gives the prediction:
\begin{equation}
\xi(S_{\mathrm{spec}}) \approx \xi(R_1) \approx \xi(F_S) \approx \xi_s,
\label{eq:xi_entropy}
\end{equation}
consistent with the experimental observation that all descriptors show
$\xi \approx 1$--$2~\mu$m.

\section{Correlation Length Hierarchy}
\label{sec:hierarchy}

This section derives the central analytical result of the paper: the
correlation-length hierarchy $\xi(E_{\mathrm{cent}}) \geq \xi(E_{\mathrm{dom}})$.
We state it as a proposition and prove it from the decomposition of the
dominant-peak energy into a slow part and a short-range trap-switching part. As
emphasized in the introduction, the informative content is not the mere ordering
(any average is smoother than a peak position) but its \emph{magnitude}, which
encodes the relative strength of the trap and slow disorder.

\subsection{Main Result}

\begin{proposition}[Correlation Length Hierarchy]
\label{prop:hierarchy}
For the model \eqref{eq:H}--\eqref{eq:Vtrap} with $W_s, W_t > 0$:
\begin{equation}
\xi(E_{\mathrm{cent}}) \geq \xi(E_{\mathrm{dom}}) \gtrsim \xi_t \equiv \sqrt{2}\sigma_t,
\label{eq:hierarchy_theorem}
\end{equation}
where $\xi_t$ is the trap correlation length.
The first inequality follows within the present decomposition from the
positivity of the short-range trap-switching variance
(Sec.~\ref{sec:hierarchy}); the second is a
microscopic floor that should be read as an order-of-magnitude lower bound.
In practice $\xi(E_{\mathrm{dom}})$ is not set by $\xi_t$ itself but by the
larger trap-switching scale $\ell_{\mathrm{switch}} \sim \sigma_{\mathrm{opt}}$
and is further broadened by optical and pixel ($\Delta x$) averaging, so the
effective floor is $\gtrsim \max(\xi_t, \sigma_{\mathrm{opt}}, \Delta x)$.
The equality $\xi(E_{\mathrm{cent}}) = \xi(E_{\mathrm{dom}})$ holds when the
trap-switching variance $\sigma_\delta^2$ vanishes, for example in the limit
$W_t \to 0$ (or when the dominant peak is always set by the smooth background,
so that traps never switch $E_{\mathrm{dom}}$).
\end{proposition}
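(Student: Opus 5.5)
The plan is to write $E_{\mathrm{dom}}(\mathbf{R}) = E_{\mathrm{dom}}^{(0)} + \bar{V}_s(\mathbf{R}) + \delta E_{\mathrm{dom}}(\mathbf{R})$. The first two terms are the rigid background shift of Eq.~\eqref{eq:bg_shift}, the same shift that enters $E_{\mathrm{cent}}$ through Proposition~\ref{prop:Ecent}. The remainder $\delta E_{\mathrm{dom}}$ is the trap-switching fluctuation, i.e.\ the offset of whichever trap level currently carries the largest weight $A_n(\mathbf{R})$, measured relative to the local background. I will then argue two structural properties of $\delta E_{\mathrm{dom}}$.
\begin{itemize}
\item[(a)] It is approximately uncorrelated with $\bar{V}_s$. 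The trap depths $u_i$ are drawn independently of $V_{\mathrm{slow}}$, so the conditional statistics of the deepest-level offset, given the local background, are to leading order independent of that background.
\item[(b)] Its autocorrelation $C_\delta(r)$ is short ranged. The identity of the dominant trap is decided by the set of traps inside a spot of radius $\sim\sigma_{\mathrm{opt}}$, and two spots separated by $r \gg \ell_{\mathrm{switch}}\sim\sigma_{\mathrm{opt}}$ see disjoint Poisson configurations. Hence $C_\delta(r)\to 0$ on the scale $\ell_{\mathrm{switch}}$, and it cannot decay faster than the intrinsic trap-potential correlation of Eq.~\eqref{eq:Vtrap_corr}, whose length is $\xi_t=\sqrt{2}\sigma_t$.
\end{itemize}

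With (a) and (b) in hand, the normalized correlation of $E_{\mathrm{dom}}$ becomes a variance-weighted mixture,
\begin{equation*}
C_{E_{\mathrm{dom}}}(r) = \frac{\sigma_s^2\, C_{E_{\mathrm{cent}}}(r) + \sigma_\delta^2\, C_\delta(r)}{\sigma_s^2+\sigma_\delta^2},
\end{equation*}
where $\sigma_s^2=\Var\bar{V}_s$ and $C_{E_{\mathrm{cent}}}$ is the Gaussian of Corollary~\ref{cor:Ecent_corr}. Define the $1/e$ length $\xi(X)$ by $C_X(\xi)=e^{-1}$; note that Eq.~\eqref{eq:CEcent} gives $\xi(E_{\mathrm{cent}})=\sqrt{2}\,\xi_{\mathrm{eff}}$. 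Since $C_\delta(r)\le C_{E_{\mathrm{cent}}}(r)$ for all $r$ of interest, $C_{E_{\mathrm{dom}}}(r)$ is a convex combination of $C_{E_{\mathrm{cent}}}(r)$ with something no larger. It therefore satisfies $C_{E_{\mathrm{dom}}}(r)\le C_{E_{\mathrm{cent}}}(r)$ pointwise, with strict inequality whenever $\sigma_\delta^2>0$. Evaluating at $r=\xi(E_{\mathrm{cent}})$ and using monotone decay gives $\xi(E_{\mathrm{dom}})\le\xi(E_{\mathrm{cent}})$.

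Equality holds exactly when $\sigma_\delta^2=0$. This happens as $W_t\to 0$, or when the dominant peak is always a background state, in which case $\delta E_{\mathrm{dom}}$ vanishes identically. The ratio of the two lengths can be read off from the mixture: to leading order, $C_{E_{\mathrm{dom}}}(r)\approx \sigma_s^2 C_{E_{\mathrm{cent}}}(r)/(\sigma_s^2+\sigma_\delta^2)$ for $r\gg\ell_{\mathrm{switch}}$. This expression is what later encodes $W_t/W_s$ and $n_t$.

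For the floor, I will bound $C_{E_{\mathrm{dom}}}(r)$ from below by its slow part, and also use that $C_\delta$ is itself smoothed by the optical spot and by the pixel pitch $\Delta x$. If $\sigma_s^2\gtrsim\sigma_\delta^2$, the slow component alone keeps $C_{E_{\mathrm{dom}}}$ above $e^{-1}$ out to distances of order $\xi_s$. In the opposite limit, $\xi(E_{\mathrm{dom}})$ approaches the correlation length of $\delta E_{\mathrm{dom}}$, which is at least $\max(\xi_t,\sigma_{\mathrm{opt}},\Delta x)$. Only an order-of-magnitude inequality is claimed here, as the statement says.

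The main obstacle is step (b), in particular the pointwise dominance $C_\delta\le C_{E_{\mathrm{cent}}}$ together with the neglect of cross-correlation in (a). The argmax is a nonlinear, discontinuous functional, and trap positions are biased toward slow-potential minima, so $\delta E_{\mathrm{dom}}$ can inherit a weak correlation with $\bar{V}_s$. I would handle this by splitting $\delta E_{\mathrm{dom}}$ into its conditional mean given $\bar{V}_s$, which is absorbed into a slightly renormalized slow term, plus a conditionally independent residual. For the residual I would argue from Poisson independence on disjoint spot-sized regions, rather than attempt an exact computation of the argmax statistics.
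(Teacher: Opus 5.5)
Your proposal is correct and follows essentially the paper's own route. Both arguments use the split $E_{\mathrm{dom}}=\bar{V}_s+\delta E_{\mathrm{dom}}$, a switching term correlated only over $\ell_{\mathrm{switch}}\sim\sigma_{\mathrm{opt}}$, neglect of the $\bar{V}_s$--$\delta E_{\mathrm{dom}}$ cross term (clustering-induced correlations being slaved to $\bar{V}_s$), and the variance-weighted mixture for $C_{E_{\mathrm{dom}}}$ whose short-range part has already decayed at the centroid's $1/e$ length, with the floor left as an order-of-magnitude statement.

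Your execution is slightly tighter in three places. You evaluate the pointwise bound at $r=\xi(E_{\mathrm{cent}})$ instead of asserting a faster decay. You assign the slaved part of $\delta E_{\mathrm{dom}}$ to the slow term rather than to $\sigma_\delta^2$. And you correctly note that the $1/e$ length of Eq.~\eqref{eq:CEcent} is $\sqrt{2}\,\xi_{\mathrm{eff}}$, a factor the paper drops when it identifies $\xi(E_{\mathrm{cent}})$ with $\xi_s$ in Eq.~\eqref{eq:xi_Edom_implicit}; this leaves the ordering intact but would move the inverted $\sigma_\delta^2/W_s^2$ from about $1.2$ to about $0.8$.
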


\subsection{Proof}

\textbf{Step 1:} Decomposition of dominant-peak energy.
The dominant-peak energy $E_{\mathrm{dom}}$ is set by the position
of the highest spectral peak within the optical spot.
We write:
\begin{equation}
E_{\mathrm{dom}}(\mathbf{R})
= \bar{V}_s(\mathbf{R}) + \delta E_{\mathrm{dom}}(\mathbf{R}),
\label{eq:Edom_decomp}
\end{equation}
where $\delta E_{\mathrm{dom}}(\mathbf{R})$ captures the deviation of the dominant peak
from the smooth background.
Unlike the centroid, which averages over all states,
the dominant-peak energy is set by the single state with the highest optical weight
at position $\mathbf{R}$---which is typically a trap state or
a localized moir\'e state near $\mathbf{R}$.

\textbf{Step 2:} Statistics of $\delta E_{\mathrm{dom}}$.
The fluctuations $\delta E_{\mathrm{dom}}$ arise from:
(a) switching of trap levels within the optical spot as $\mathbf{R}$ varies,
(b) variation in which moir\'e-localized state has the highest weight.
Both sources produce fluctuations with correlation lengths
$\ell_{\mathrm{switch}} \sim \sigma_{\mathrm{opt}}$ (scale over which trap configuration changes
as the optical spot moves) and $\ell_t \sim \sigma_t$ (individual trap scale).
Since $\sigma_t \lesssim \sigma_{\mathrm{opt}} \ll \xi_s$:
\begin{equation}
\xi_{\delta E_{\mathrm{dom}}} \sim \sigma_{\mathrm{opt}} \ll \xi_s.
\end{equation}

\textbf{Step 3:} Correlation length of $E_{\mathrm{dom}}$.
The total $E_{\mathrm{dom}}$ correlation function is:
\begin{align}
\langle E_{\mathrm{dom}}(\mathbf{R}) E_{\mathrm{dom}}(\mathbf{R}') \rangle
&= \langle \bar{V}_s(\mathbf{R}) \bar{V}_s(\mathbf{R}') \rangle\nonumber\\
&\quad + \langle \delta E_{\mathrm{dom}}(\mathbf{R})
                \delta E_{\mathrm{dom}}(\mathbf{R}') \rangle
\nonumber\\
&= W_s^2 e^{-r^2/2\xi_{\mathrm{eff}}^2}
+ \sigma_{\delta}^2\, h(r/\sigma_{\mathrm{opt}}),
\label{eq:Edom_corr}
\end{align}
where $r = |\mathbf{R} - \mathbf{R}'|$ (as in Corollary~\ref{cor:Ecent_corr}),
$\sigma_\delta^2 = \Var(\delta E_{\mathrm{dom}})$, and $h(x) \to 0$ for $x \gg 1$.
In writing Eq.~\eqref{eq:Edom_corr} we have neglected the cross-correlation
$\langle \bar{V}_s(\mathbf{R})\,\delta E_{\mathrm{dom}}(\mathbf{R}')\rangle$.
Weak trap clustering correlates the trap positions with $V_{\mathrm{slow}}$ and
makes this term nonzero, but it is slaved to $\bar{V}_s$ and therefore only
renormalizes the effective short-range variance $\sigma_\delta^2$; it cannot
make $E_{\mathrm{dom}}$ smoother than $E_{\mathrm{cent}}$, so the ordering
$\xi(E_{\mathrm{cent}}) > \xi(E_{\mathrm{dom}})$ is preserved.
The variance satisfies:
\begin{equation}
\Var(E_{\mathrm{dom}}) = W_s^2 + \sigma_\delta^2
> W_s^2 = \Var(E_{\mathrm{cent}}),
\quad \sigma_\delta > 0.
\end{equation}
The normalized correlation function:
\begin{equation}
C_{E_{\mathrm{dom}}}(r)
= \frac{W_s^2 e^{-r^2/2\xi_{\mathrm{eff}}^2} + \sigma_\delta^2 h(r/\sigma_{\mathrm{opt}})}
       {W_s^2 + \sigma_\delta^2},
\end{equation}
decays from $1$ to zero at a scale smaller than $\xi_{\mathrm{eff}}$
because the $\sigma_\delta^2 h(r/\sigma_{\mathrm{opt}})$ term vanishes quickly
while the remaining $W_s^2 e^{-r^2/2\xi_{\mathrm{eff}}^2}/(W_s^2 + \sigma_\delta^2) < 1$
at $r = 0$.

At the $1/e$ scale of the experimentally relevant hierarchy,
$\xi(E_{\mathrm{dom}}) \gtrsim 1~\mu$m $\gg \sigma_{\mathrm{opt}}$, so the
short-range term $\sigma_\delta^2 h(r/\sigma_{\mathrm{opt}})$ has already decayed
and only the smooth background contribution survives. The 1/e length
$\xi(E_{\mathrm{dom}})$ then satisfies the implicit equation
\begin{equation}
\frac{W_s^2}{W_s^2 + \sigma_\delta^2} e^{-\xi(E_{\mathrm{dom}})^2/2\xi_s^2} = e^{-1}.
\label{eq:xi_Edom_implicit}
\end{equation}
Solving Eq.~\eqref{eq:xi_Edom_implicit} in the perturbative regime
$\sigma_\delta^2/W_s^2 < 1$ gives a shortened correlation length
$\xi(E_{\mathrm{dom}}) < \xi_s \approx \xi(E_{\mathrm{cent}})$; in the
weak-trap-noise limit $\sigma_\delta^2 \ll W_s^2$ this reduces to the
closed-form expansion
\begin{equation}
\xi(E_{\mathrm{dom}}) \approx \xi_s\left(1 - \frac{\sigma_\delta^2}{2W_s^2}\right).
\label{eq:xi_Edom_weak}
\end{equation}
Outside the perturbative regime, $\xi(E_{\mathrm{dom}})$ must be obtained
directly from Eq.~\eqref{eq:Edom_corr}; only the qualitative ordering
$\xi(E_{\mathrm{dom}}) < \xi(E_{\mathrm{cent}})$ is preserved within the present
decomposition.

\textbf{Order-of-magnitude estimate.}
From the experiment~\cite{ahmad2025hierarchical}: $\xi(E_{\mathrm{cent}}) \approx 2.00~\mu$m,
$\xi(E_{\mathrm{dom}}) \approx 1.27~\mu$m, so
$\xi(E_{\mathrm{dom}})/\xi(E_{\mathrm{cent}}) \approx 0.64$.
Inserting this ratio into the Gaussian implicit
relation~\eqref{eq:xi_Edom_implicit} (with $\xi(E_{\mathrm{cent}}) \approx \xi_s$)
and solving for the trap-switching variance gives
\begin{equation}
\frac{\sigma_\delta^2}{W_s^2}
= \exp\!\left[\,1 - \frac{\xi(E_{\mathrm{dom}})^2}{2\xi_s^2}\,\right] - 1
\approx 1.2,
\label{eq:sigma_delta_estimate}
\end{equation}
i.e., a trap-switching fluctuation $\sigma_\delta \sim W_s$ of order unity.
The independent estimate from the centroid--dominant correlation
[Eq.~\eqref{eq:rho_centdom} below, $\sigma_\delta/W_s \approx 0.8$]
falls in the same range; the residual spread reflects the assumed
functional form of the mixed correlation decay, so we read this only as an
order-of-magnitude estimate. Both routes place the trap disorder comparable to
the slow disorder amplitude, in the hierarchical regime, and the simulated
correlation functions [Fig.~\ref{fig:correlation_hierarchy}(a)] confirm the
hierarchy directly, as summarized by the correlation-length bar chart
[Fig.~\ref{fig:correlation_hierarchy}(b)].

\textbf{Range of validity.}
The weak-noise expansion Eq.~\eqref{eq:xi_Edom_weak} is strictly controlled
only for $\sigma_\delta^2/W_s^2 \ll 1$. The experimentally inferred ratio
$\sigma_\delta^2/W_s^2 \approx 1.2$ lies outside this perturbative regime,
so the numerical inversion above should be read as an
\emph{order-of-magnitude estimate} based on the full two-component
correlation form Eq.~\eqref{eq:Edom_corr}, not as a small-noise expansion.
The sign of the hierarchy
$\xi(E_{\mathrm{dom}}) < \xi(E_{\mathrm{cent}})$ is robust within the present
decomposition for any $\sigma_\delta > 0$, following from the positivity of the
trap-switching variance; only the precise mapping
$\sigma_\delta^2/W_s^2 \leftrightarrow \xi(E_{\mathrm{dom}})/\xi_s$
is approximation-controlled.

\begin{figure*}[t]
\centering
\includegraphics[width=\textwidth]{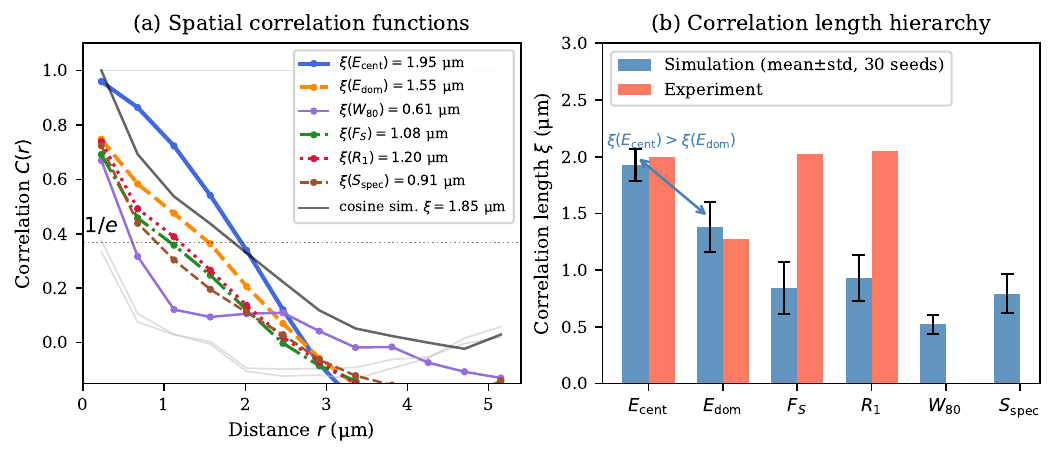}
\caption{
\textbf{Spatial correlation length hierarchy confirmed by simulation.}
(a) Normalized spatial correlation functions $C(r)$ for six spectral descriptors,
computed from a single $20\times 20$ hyperspectral map on an $8\times 8~\mu$m$^2$
domain with best-fit parameters ($W_s=6~$meV, $\xi_s=2~\mu$m, $W_t=12~$meV).
The horizontal dotted line marks the $1/e$ threshold, and the legend lists the
$1/e$ length of each highlighted descriptor for this single realization
(the ensemble means over thirty realizations are shown in panel (b)).
Thin light-gray lines are the remaining, un-highlighted descriptors
($\Delta E_{cd}$, $R_{\mathrm{HL}}$, $I_{\mathrm{tot}}$).
The dark-gray solid curve is the whole-spectrum cosine similarity. Because every
pixel shares the same broad envelope, its raw value sits on a high nonzero
plateau; we therefore subtract that long-range plateau and renormalize to unity
at $r=0$, so that only the physically decaying excess is shown. Its $1/e$ length,
$1.85~\mu$m, matches the experimental whole-spectrum value of $1.87~\mu$m.
(b) Simulated (blue) versus experimental (red) $1/e$ correlation lengths. The
simulated bars are the mean $\pm$ standard deviation over thirty disorder
realizations, with $\xi$ extracted as the $1/e$ decay length of the spatial
autocorrelation (the experimental convention). The simulation confirms the
hierarchy $\xi(E_{\mathrm{cent}}) > \xi(E_{\mathrm{dom}})$ (blue arrow):
$\xi(E_{\mathrm{cent}}) = 1.93\pm0.14~\mu$m and
$\xi(E_{\mathrm{dom}}) = 1.38\pm0.22~\mu$m (ratio $0.72\pm0.10$), consistent with
the experimental $2.00$ and $1.27$ (ratio $0.64$)~\cite{ahmad2025hierarchical}.
The $F_S$ and $R_1$ lengths fall below the experimental values, reflecting the
Poisson-limited trap statistics of the simplified model
(Sec.~\ref{sec:discussion}); their finite-size robustness is examined in
Appendix~\ref{app:finite_size}.
}
\label{fig:correlation_hierarchy}
\end{figure*}

\section{Inter-Descriptor Correlation Theory}
\label{sec:correlations}

By an \emph{inter-descriptor correlation} we mean the correlation between two
spatial descriptor maps $X(\mathbf{R})$ and $Y(\mathbf{R})$ evaluated over all
measured pixels $\mathbf{R}$ of the hyperspectral map; it thus measures how two
coarse descriptors of the local spectrum co-vary across the sample, not a
correlation between individual spectral peaks.
Throughout this section, correlations between two such descriptor fields
$X(\mathbf{R})$ and $Y(\mathbf{R})$ are quantified by the Pearson coefficient
\begin{equation}
\rho_{\mathrm{P}}(X,Y) = \frac{\Cov(X,Y)}{\sqrt{\Var(X)\,\Var(Y)}},
\label{eq:pearson_def}
\end{equation}
where $\Cov(X,Y) = \langle XY\rangle - \langle X\rangle\langle Y\rangle$ and
$\Var(X) = \Cov(X,X)$, with $\langle\cdot\rangle$ denoting the spatial average over
the map. The experimental descriptor matrix, by contrast, is reported using the
Spearman rank-correlation coefficient $\rho_{\mathrm{S}}$~\cite{ahmad2025hierarchical},
defined as the Pearson correlation of the rank-transformed descriptor fields,
\begin{equation}
\rho_{\mathrm{S}}(X,Y) = \rho_{\mathrm{P}}(\mathrm{rank}\,X,\ \mathrm{rank}\,Y),
\label{eq:spearman_def}
\end{equation}
where $\mathrm{rank}\,X$ denotes the rank-transformed values of $X(\mathbf{R})$
over all map pixels. It thus measures monotonic association without assuming a
linear relation. The Spearman coefficient is the
natural choice here because the spectral descriptors are nonlinear functionals of
the local PL spectrum and need not be Gaussian distributed; rank correlations
therefore provide a robust measure of monotonic co-variation that is directly
comparable to the experimental descriptor-correlation matrix. We compute
$\rho_{\mathrm{P}}$
analytically because it is exact for the Gaussian disorder variables underlying our model,
and use it as a sign-and-magnitude estimate for $\rho_{\mathrm{S}}$: for the monotonic
descriptor relations considered here, the two coefficients share the same sign and have
comparable magnitude, so that the analytical $\rho_{\mathrm{P}}$ predicts the measured
$\rho_{\mathrm{S}}$. Where we quote our own simulation values for comparison with experiment
(e.g.\ Sec.~\ref{sec:key_results}), we report $\rho_{\mathrm{S}}$ to match the experimental
convention.

\subsection{Centroid-Dominant Correlation}

From Eqs.\ \eqref{eq:Ecent_approx} and \eqref{eq:Edom_decomp}, the Pearson correlation
between the centroid and peak-position fields is
\begin{align}
\rho_{\mathrm{P}}(E_{\mathrm{cent}}, E_{\mathrm{dom}})
&= \frac{\Cov(\bar{V}_s,\, \bar{V}_s + \delta E_{\mathrm{dom}})}
       {\sqrt{\Var(\bar{V}_s)}\,\sqrt{\Var(\bar{V}_s) + \Var(\delta E_{\mathrm{dom}})}}
\nonumber\\
&= \frac{W_s}{\sqrt{W_s^2 + \sigma_\delta^2}} < 1.
\label{eq:rho_centdom}
\end{align}
Here $W_s$ denotes the optically filtered slow-disorder amplitude
$\Var(\bar{V}_s)^{1/2}$ (Sec.~\ref{sec:spectral_theory}), which is
indistinguishable from the bare $W_s$ in the limit $\xi_s \gg \sigma_{\mathrm{opt}}$.
Identifying this with the experimental Spearman value~\cite{ahmad2025hierarchical} $\rho_{\mathrm{S}} \approx 0.79$
gives $\sigma_\delta/W_s \approx 0.8$, of the same order as the correlation-length estimate above
($\sigma_\delta/W_s \sim 1$); both indicate $\sigma_\delta \sim W_s$.

\subsection{The $\Delta E_{cd}$--$R_{\mathrm{HL}}$ Anti-Correlation}

The strong experimental anti-correlation $\rho_{\mathrm{S}}(\Delta E_{cd}, R_{\mathrm{HL}}) \approx -0.978$~\cite{ahmad2025hierarchical}
is the most striking inter-descriptor correlation.
We now derive this as a robust spectral shape relation for spectra with a dominant unimodal envelope.

\begin{proposition}[$\Delta E_{cd}$--$R_{\mathrm{HL}}$ anti-correlation]
\label{prop:DEcd_RHL}
Consider spectra with a dominant unimodal envelope whose shape is governed by a
single asymmetry parameter $\alpha$, with $\alpha=0$ the case symmetric about the
dominant peak $E_{\mathrm{dom}}$. The centroid-dominant offset
$\Delta E_{cd} = E_{\mathrm{cent}} - E_{\mathrm{dom}}$ and the high/low ratio
$R_{\mathrm{HL}}$ are then monotone in $\alpha$ with opposite sense:
$\Delta E_{cd}$ and $R_{\mathrm{HL}}-1$ vanish together at $\alpha=0$ and carry
opposite signs for $\alpha\neq 0$,
\begin{align*}
\Delta E_{cd} < 0 &\iff R_{\mathrm{HL}} > 1, \\
\Delta E_{cd} > 0 &\iff R_{\mathrm{HL}} < 1,
\end{align*}
so that $\rho_{\mathrm{P}}(\Delta E_{cd}, R_{\mathrm{HL}}) \to -1$ when $\alpha$ is
the dominant source of pixel-to-pixel variation. (Absent the single-parameter
restriction the sign relation is a strong tendency rather than an identity, since
$\Delta E_{cd}$ weights the below/above intensities by their mean energy distances
whereas $R_{\mathrm{HL}}$ weights them equally.)
\end{proposition}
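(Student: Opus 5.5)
Our strategy is to write both descriptors in terms of the intensity split at the dominant peak and then use the one-parameter family to tie them to a single monotone variable. Set $E_{\mathrm{dom}}=0$ by translation. Write $I = I_-+I_+$ for the parts below and above the peak, with weights $w_\pm=\int I_\pm\,dE$. Let $d_\pm>0$ be the mean absolute energy distances of $I_\pm$ from the peak. Then
\begin{equation*}
\Delta E_{cd} = \frac{w_+ d_+ - w_- d_-}{w_++w_-},\qquad R_{\mathrm{HL}}=\frac{w_-}{w_+}.
\end{equation*}
This identity is the backbone of the argument. It already explains the parenthetical remark: $\Delta E_{cd}$ weights $w_\pm$ by $d_\pm$, while $R_{\mathrm{HL}}$ does not.

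\textbf{Step 1: the symmetric point.} At $\alpha=0$ the envelope is symmetric about the peak. Then $w_+=w_-$ and $d_+=d_-$, so $\Delta E_{cd}=0$ and $R_{\mathrm{HL}}=1$ vanish together.

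\textbf{Step 2: monotonicity.} We make the one-parameter hypothesis concrete in the natural way: increasing $\alpha$ moves spectral weight from the low-energy side to the high-energy side of a fixed mode. Two realizations we have in mind are a skew-normal-type family and a two-sided profile with unequal half-widths $\gamma_\pm(\alpha)$, where $\gamma_+$ increases and $\gamma_-$ decreases in $\alpha$.

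\emph{The ratio.} $R_{\mathrm{HL}}=w_-/w_+$ is strictly decreasing in $\alpha$.

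\emph{The offset.} Since $w_+d_+=\int_0^\infty E\,I\,dE$ and $w_-d_-=\int_{-\infty}^0|E|\,I\,dE$, the first moment above the peak grows and the first moment below it shrinks as $\alpha$ increases. Hence $\Delta E_{cd}$ is strictly increasing.

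\emph{Caveat.} Strictly, the normalization $w_++w_-$ also changes with $\alpha$. We would handle this by checking the sign of $d\Delta E_{cd}/d\alpha$ directly for the stretched two-sided profile. There the moments scale as $\gamma_\pm^2$ and the weights as $\gamma_\pm$, so $\Delta E_{cd}\propto \gamma_+-\gamma_-$, which is manifestly monotone.

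\textbf{Step 3: sign equivalences and the correlation.} Combining Steps 1 and 2 gives the equivalences $\Delta E_{cd}<0 \iff \alpha<0 \iff R_{\mathrm{HL}}>1$, and the reverse for $\alpha>0$.

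When $\alpha$ dominates the pixel-to-pixel variation, the pairs $(\Delta E_{cd},R_{\mathrm{HL}})$ lie near a strictly decreasing curve parameterized by $\alpha$. Consequences:
\begin{itemize}
\item The rank correlation $\rho_{\mathrm{S}}$ is exactly $-1$.
\item For $\rho_{\mathrm{P}}$, we linearize about $\alpha=0$. The two descriptors become affine in $\alpha$ with slopes of opposite sign, so $\rho_{\mathrm{P}}\to-1$ as the spread of $\alpha$ shrinks.
\item Residual variation from other shape parameters, such as the overall width or fine trap structure, enters as independent additive noise. It lowers $|\rho|$ by the usual variance-ratio factor, as in Eq.~\eqref{eq:rho_centdom}.
\end{itemize}

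\textbf{Main obstacle.} The hard part is Step 2 in general. Monotonicity in $\alpha$ is not automatic for an arbitrary one-parameter family: $\alpha$ could shift $d_\pm$ and $w_\pm$ in competing directions. We would therefore state the precise hypothesis as ``$\alpha$ transfers weight monotonically across the mode while preserving the mode.'' We would verify it for the stretched two-sided and skew-normal families. For the general case we would defer to the qualitative tendency stated in the parenthetical remark.
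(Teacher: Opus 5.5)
Your proposal is correct and follows essentially the paper's own argument. Like the paper, you split $E_{\mathrm{cent}}-E_{\mathrm{dom}}$ into signed first moments above and below the dominant peak, write $R_{\mathrm{HL}}$ as the ratio of the split weights, and combine strict monotonicity in $\alpha$ with the common zero of $\Delta E_{cd}$ and $R_{\mathrm{HL}}-1$ at $\alpha=0$ to get a decreasing joint locus. Your additions tighten steps the paper only asserts: you make the monotonicity hypothesis explicit, you check it for the two-sided profile with $\Delta E_{cd}\propto\gamma_+-\gamma_-$ and $R_{\mathrm{HL}}=\gamma_-/\gamma_+$, and you linearize about $\alpha=0$ to pass from $\rho_{\mathrm{S}}=-1$ to $\rho_{\mathrm{P}}\to-1$ (a merely monotone curve does not force the Pearson coefficient to $-1$).
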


\begin{proof}
The centroid $E_{\mathrm{cent}} = \int E I(E) dE / \int I(E) dE$
weights high-energy spectral weight more heavily relative to $E_{\mathrm{dom}}$
when the spectrum has a high-energy tail.
A low-energy tail concentrates weight below $E_{\mathrm{dom}}$, giving $R_{\mathrm{HL}} > 1$
and pulling the centroid down so that $E_{\mathrm{cent}} < E_{\mathrm{dom}}$ and $\Delta E_{cd} < 0$.
Conversely, a high-energy tail gives $R_{\mathrm{HL}} < 1$ and $\Delta E_{cd} > 0$.

More precisely, define $m_1^{>} = \int_{E_{\mathrm{dom}}}^{\infty} E I(E) dE$
and $m_1^{<} = \int_{-\infty}^{E_{\mathrm{dom}}} E I(E) dE$.
Then:
\begin{align}
E_{\mathrm{cent}} - E_{\mathrm{dom}}
&= \frac{m_1^{>} + m_1^{<}}{I_{\mathrm{tot}}} - E_{\mathrm{dom}}
\nonumber\\
&= \frac{m_1^{>} - E_{\mathrm{dom}} I_{\mathrm{tot}}^{>}}{I_{\mathrm{tot}}}
  + \frac{m_1^{<} - E_{\mathrm{dom}} I_{\mathrm{tot}}^{<}}{I_{\mathrm{tot}}},
\end{align}
where $I_{\mathrm{tot}}^{>/<}$ are the integrated intensities above/below $E_{\mathrm{dom}}$.
Both terms represent signed ``moments'' of spectral weight about $E_{\mathrm{dom}}$,
which are monotone functions of spectral asymmetry.

The ratio $R_{\mathrm{HL}} = I_{\mathrm{tot}}^{<}/I_{\mathrm{tot}}^{>}$
is a monotone function of the same asymmetry parameter $\alpha$.
Since both $\Delta E_{cd}$ and $R_{\mathrm{HL}}-1$ are monotone in $\alpha$ and
vanish at $\alpha=0$ with opposite sign, their joint locus is a monotone curve of
negative slope, giving $\rho_{\mathrm{P}}(\Delta E_{cd}, R_{\mathrm{HL}}) \to -1$
in the limit of pure asymmetry variation.
This argument refers to the asymmetry of the broad envelope about its dominant
maximum, not to the placement of individual lines: an isolated low-energy trap
peak can itself set $E_{\mathrm{dom}}$ and produce $\Delta E_{cd} > 0$ (as in
the trap-rich pixels of Fig.~\ref{fig:spectra_maps}), but the observed
near-one-dimensional $\Delta E_{cd}$--$R_{\mathrm{HL}}$ relation indicates that
it is the envelope asymmetry, rather than such discrete level switching, that
remains the controlling variable across the map.
Because both descriptors are referenced to the dominant peak, the
anticorrelation requires this shared reference: in our simulated maps
$\rho_{\mathrm{S}}(\Delta E_{cd}, R_{\mathrm{HL}}) = -0.97\pm0.01$ with the
$E_{\mathrm{dom}}$ split, but only $+0.50$ with an $E_{\mathrm{cent}}$ split and
$+0.11$ with a fixed-median split---showing that the strong relation is tied to
the dominant-peak-referenced definition (the experimentally adopted one), rather
than a consequence of an arbitrary normalization choice.
\end{proof}

\begin{remark}
The near-perfect anti-correlation $\rho_{\mathrm{S}} \approx -0.978$ observed experimentally
indicates that spectral asymmetry variation---rather than peak multiplicity or
peak energy modulation---is the dominant source of descriptor variation in this dataset.
This is a nontrivial diagnostic: the mere observation that
$\rho_{\mathrm{S}}(\Delta E_{cd}, R_{\mathrm{HL}}) \approx -1$ is difficult to reconcile with simple models in which
multiple incoherent peaks at unrelated energies dominate the spectrum,
and instead establishes that the broad envelope remains effectively unimodal
across the map---directly confirming the two-fluid picture of Sec.~\ref{sec:localization}.
\end{remark}

\subsection{Sharp Fraction -- Roughness Correlation}

Both $F_S$ and $R_1$ measure the density of sharp spectral structure.
They differ in how the structure is weighted:
$F_S$ measures the fraction of integrated intensity in narrow features,
while $R_1$ measures the total variation of the spectral profile.
For a spectrum consisting of a smooth background plus sharp peaks,
both increase with the number and depth of sharp peaks,
giving a large positive correlation.

Analytically, for a spectrum $I = I_{\mathrm{bg}} + \sum_i A_i g_\eta(E - E_i)$:
\begin{align}
R_1 &\approx R_1^{\mathrm{bg}} + \frac{2}{\eta\sqrt{2\pi}} \cdot \frac{\sum_i A_i}{I_{\mathrm{tot}}},
\\
F_S &\approx \frac{\sum_i A_i}{I_{\mathrm{tot}}},
\end{align}
giving $R_1 \approx R_1^{\mathrm{bg}} + \mathrm{const} \times F_S$,
and thus $\rho_{\mathrm{P}}(F_S, R_1) \to 1$ when trap peak variation dominates.
The experimental value $\rho_{\mathrm{S}} \approx 0.901$~\cite{ahmad2025hierarchical} confirms this.

\subsection{Width -- Entropy Correlation}

The quantile width $W_{80}$ and spectral entropy $S_{\mathrm{spec}}$ both measure
the spread of spectral weight.
For a spectrum spanning $N_{\mathrm{eff}}$ effective resolution elements:
$W_{80} \propto N_{\mathrm{eff}}$ and $S_{\mathrm{spec}} \propto \ln N_{\mathrm{eff}}$.
Since both are monotone functions of $N_{\mathrm{eff}}$, they are positively correlated.
The experimental value $\rho_{\mathrm{S}} \approx 0.846$~\cite{ahmad2025hierarchical} is consistent.
The full simulated Spearman inter-descriptor matrix is shown in
Fig.~\ref{fig:spearman_matrix}.

\begin{figure*}[t]
\centering
\includegraphics[width=0.55\textwidth]{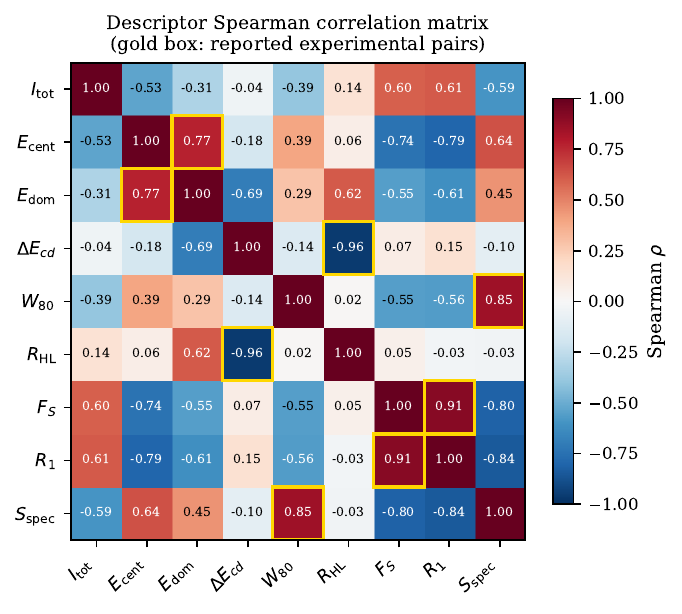}
\caption{
\textbf{Spearman inter-descriptor correlation matrix from simulation.}
Each cell shows the Spearman rank correlation coefficient $\rho_{\mathrm{S}}$ between
a pair of spectral descriptors, computed from a simulated $20\times 20$
hyperspectral map with best-fit parameters.
Gold boxes mark the four pairs also reported experimentally~\cite{ahmad2025hierarchical}---%
$\Delta E_{cd}$--$R_{\mathrm{HL}}$, $F_S$--$R_1$, $W_{80}$--$S_{\mathrm{spec}}$,
and $E_{\mathrm{cent}}$--$E_{\mathrm{dom}}$.
The displayed coefficients are those of a single best-fit map; the five-seed mean
simulated values and their measured counterparts are compared in
Table~\ref{tab:spearman} (mean absolute deviation $0.010$ per pair).
Note the strong positive block structure among
$(I_{\mathrm{tot}}, F_S, R_1)$ (trap-rich pixels) and
the negative block coupling these to $(E_{\mathrm{cent}}, E_{\mathrm{dom}})$,
reflecting the low-energy trap emission at potential minima.
}
\label{fig:spearman_matrix}
\end{figure*}

\section{Disorder Parameter Regimes}
\label{sec:phase_diagram}

The qualitative behavior of the model is set by the two disorder strengths---the
slow-disorder amplitude $W_s$ and the effective trap disorder $\Gamma_t$---measured
against the homogeneous linewidth $\eta$. Which of these exceeds $\eta$ determines
whether the local spectra are essentially featureless, smoothly modulated on the
micron scale, densely peaked, or organized on both scales at once. In this section
we map that dependence onto four qualitatively distinct regimes, in the spirit of a
phase diagram (Fig.~\ref{fig:phase_diagram}), and use the map to locate the
experimental system. This matters because the descriptor correlation-length
hierarchy studied here is prominent only when \emph{both} disorder scales exceed
$\eta$; identifying the regime therefore tells us when the descriptor analysis
applies. We first introduce the two control parameters, then define the regime
boundaries, and finally place the MoSe$_2$/WSe$_2$ system on the resulting map.

\subsection{Control Parameters}

We map the parameter space of the two disorder strengths into qualitatively distinct regimes:
\begin{align}
&W_s \quad \text{(slow disorder amplitude, meV)},
\nonumber\\
&\Gamma_t \equiv n_t W_t^2 \sigma_t^2
\quad \text{(effective trap disorder, meV}^2).
\label{eq:control_params}
\end{align}
The quantity $\Gamma_t$ is the effective trap-potential variance,
combining the trap density $n_t$, the depth variance $W_t^2$, and the trap area
$\sim\sigma_t^2$ through the dimensionless coverage $n_t\sigma_t^2$; its square
root $\Gamma_t^{1/2}$ is the rms trap-potential amplitude (in meV), directly
comparable to $W_s$ and $\eta$.

\subsection{Regime Boundaries}

We define four regimes based on the hierarchy of spectral disorder.
The boundaries below are order-of-magnitude criteria intended to
demarcate qualitatively distinct regimes; the exact boundaries depend
on microscopic details and should be regarded as schematic.
The geometric factors $(a_M/\xi_s)^{1/2}$ that appear below are schematic
coarse-graining factors estimating how nanoscale trap fluctuations (on the
moir\'e scale $a_M$) survive averaging over a slow-disorder domain of size
$\xi_s$; they are heuristic, not the result of a microscopic derivation.

\paragraph{Regime~I: Uniform Excitonic (small $W_s$, small $\Gamma_t$).}
Neither disorder scale produces significant spectral structure.
The PL is a single smooth Gaussian-like peak at the unperturbed exciton energy.
Boundary: $W_s < \eta$ (spectral resolution) and $\Gamma_t^{1/2} < \eta$.

\paragraph{Regime~II: Smooth Correlated-Domain (large $W_s$, small $\Gamma_t$).}
Slow disorder creates micron-scale spectral domains.
The spectrum at each pixel is smooth (low roughness, low entropy, small $F_S$).
$E_{\mathrm{cent}}$ maps show smooth spatial modulation with $\xi(E_\mathrm{cent}) \approx \xi_s$.
Boundary: $W_s > \eta$ and $\Gamma_t^{1/2} < W_s \cdot (a_M/\xi_s)^{1/2}$.

\paragraph{Regime~III: Random Line-Rich (small $W_s$, large $\Gamma_t$).}
Dense sharp spectral peaks appear at random positions.
High roughness and entropy, but no spatial domain structure.
Boundary: $W_s < \eta$ and $\Gamma_t^{1/2} > \eta$.

\paragraph{Regime~IV: Hierarchical Disorder-Dominated (large $W_s$, large $\Gamma_t$).}
Both disorder scales are active.
Micron-scale spectral domains coexist with local multipeak manifolds.
The descriptor correlation-length hierarchy $\xi(E_\mathrm{cent}) > \xi(E_\mathrm{dom})$
is maximally observable and directly measurable in this regime.

\begin{align}
\text{Regime~IV:}\quad
W_s &> \eta, \nonumber\\
\Gamma_t^{1/2} &> W_s \cdot (a_M/\xi_s)^{1/2}.
\label{eq:phase4}
\end{align}

\subsection{Location of the Experimental System}

The experimental data directly constrain the model parameters. The measured $\xi(E_{\mathrm{cent}}) \approx 2.00~\mu$m implies $\xi_s \approx 2~\mu$m; the centroid energy variation across the map implies $W_s \sim 10$--$15~$meV; and the prevalence of multipeak spectra with $F_S > 0.1$ and $R_1 \sim 0.3$--$0.5~$meV$^{-1}$ confirms $\Gamma_t^{1/2} > \eta \sim 2~$meV.
The experimental system therefore falls within Regime~IV (hierarchical disorder-dominated),
as illustrated in the disorder parameter map (Fig.~\ref{fig:phase_diagram}).
The value $W_s \sim 10$--$15~$meV here is the \emph{raw centroid variation}
across the experimental map. In the numerical simulations of
Sec.~\ref{sec:numerics} we use $W_s = 6~$meV as an \emph{effective}
slow-disorder amplitude that accounts for optical-spot averaging and
finite-map normalization, so that the two values refer to different
operational definitions of the slow-disorder strength rather than a
quantitative inconsistency.

\begin{figure}[ht]
\centering
\includegraphics[width=\columnwidth]{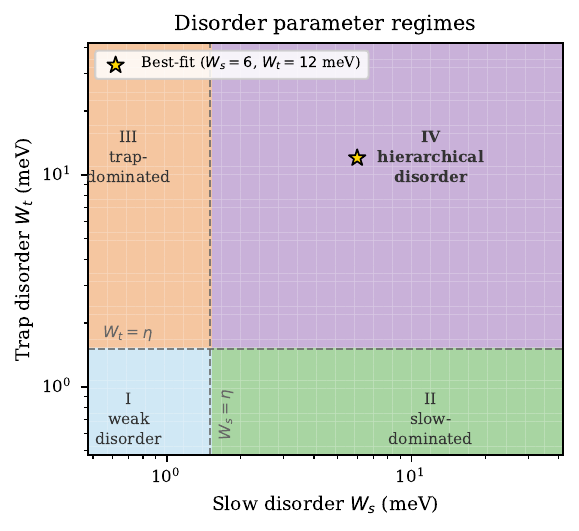}
\caption{
\textbf{Disorder parameter map for the hierarchical disorder model.}
Four qualitatively distinct regimes are defined by the disorder strengths relative to
the single-level linewidth $\eta = 1.5~$meV:
Regime~I (weak disorder, $W_s, W_t < \eta$): spectrally uniform emission;
Regime~II (slow-dominated, $W_s > \eta > W_t$): smooth micron-scale spectral modulation;
Regime~III (trap-dominated, $W_t > \eta > W_s$): dense random sharp lines without spatial organization;
Regime~IV (hierarchical disorder-dominated, $W_s, W_t > \eta$): coexisting micron-scale domains
and dense local trap manifolds; the descriptor correlation-length hierarchy is maximally visible.
The gold star marks the best-fit parameters reproducing the four principal
Spearman descriptor correlations reported experimentally~\cite{ahmad2025hierarchical}
simultaneously.
Here $n_t$ and $\sigma_t$ are held fixed, so the vertical axis $W_t$ represents
the trap-disorder strength $\Gamma_t^{1/2} = (n_t W_t^2 \sigma_t^2)^{1/2}$ of
Eq.~\eqref{eq:control_params} up to a constant factor.
}
\label{fig:phase_diagram}
\end{figure}

\section{Two-Fluid LDOS Structure}
\label{sec:localization}

The eigenstates of $H = -(\hbar^2/2M)\nabla^2 + V_{\rm slow} + V_{\rm trap}$
fall into two populations based on their energy relative to the local slow potential.

\textbf{Background states} ($E_n \gtrsim \bar{V}_s(\mathbf{R}_n)$) are weakly bound
to slow-disorder minima, with spatial extents $\ell_n \gg \sigma_t$.
Each contributes a broad Lorentzian of width $\eta_{\rm bg}$ centered near
$\bar{V}_s(\mathbf{R}_n)$.
The incoherent sum of $\sim$20 background eigenstates per optical spot
produces a smooth, quasi-Gaussian envelope that tracks $\bar{V}_s(\mathbf{R})$
and constitutes the broad background $I_{\rm bg}$.

\textbf{Trap states} (those lying more than ${\sim}\,W_t$ below the local
background, $E_n \lesssim \bar{V}_s(\mathbf{R}_n) - W_t$)
are spatially concentrated near individual trap sites $\mathbf{R}_i$,
with spatial extent $\ell_n \sim \sigma_t$.
Each contributes a narrow peak of width $\eta \ll \eta_{\rm bg}$
at energy $E_n \approx \bar{V}_s(\mathbf{R}_i) + E_n^{(\rm trap)}$.

The resulting local density of states at position $\mathbf{R}$ is:
\begin{align}
\rho(E,\mathbf{R})
&\approx \underbrace{\rho_{\mathrm{bg}}(E - \bar{V}_s(\mathbf{R}))}_{\text{smooth background}}
\nonumber\\
&\quad + \underbrace{
  \sum_{\substack{i:\,\mathbf{R}_i \in \mathrm{spot}(\mathbf{R})}}
  \rho_i^{\mathrm{trap}}(E - \bar{V}_s(\mathbf{R}_i))
}_{\text{sharp trap peaks}},
\label{eq:LDOS_twofluid}
\end{align}
and the local PL spectrum follows from integrating $\rho(E,\mathbf{R})$ against
the optical weight function.
This two-fluid structure is reproduced at the model level by the Hamiltonian diagonalization
in Fig.~\ref{fig:emergence}(b): the filled areas show the background and trap
contributions separately.

A key non-trivial consequence is that the broad-background intensity $I_{\rm bg,tot}$
and sharp-peak intensity $I_{\rm sharp,tot}$ are \emph{positively} correlated:
\begin{equation}
\Cov(I_{\mathrm{bg,tot}}, I_{\mathrm{sharp,tot}}) > 0.
\label{eq:covariance_positive}
\end{equation}
This follows because both populations are enhanced in slow-potential minima:
background eigenstates are preferentially concentrated in slow-potential minima
by disorder selection, while traps are preferentially nucleated at the same sites.
The positive covariance distinguishes the hierarchical disorder landscape
from a two-phase competition model, in which sharp peaks would form
at the expense of the background.

\section{Spectral Families as Correlated Disorder Domains}
\label{sec:families}

\subsection{Origin of Spectral Families}

The experimental observation of three dominant spectral families
in PCA + Gaussian mixture clustering
raises the question: are these families distinct thermodynamic phases,
or emergent statistical features of the disorder landscape?

In our framework, spectral families emerge naturally from the
correlated slow disorder field $V_{\mathrm{slow}}(\mathbf{r})$.
A Gaussian random field on a finite domain $[0,L]^2$ with correlation length
$\xi_s \ll L$ develops $N_{\mathrm{dom}} \sim (L/\xi_s)^2$ quasi-independent
regions with characteristic local values $V_{\mathrm{slow}} \sim W_s$.

For the experiment~\cite{ahmad2025hierarchical}: $L \approx 8~\mu$m, $\xi_s \approx 2~\mu$m,
giving $N_{\mathrm{dom}} \sim 16$ statistically independent regions.
However, PCA+clustering of a 9-dimensional descriptor space may resolve
only the dominant modes of variation, yielding $\sim 3$ effective families.

\begin{remark}[Spectral family number (heuristic estimate)]
We expect the number of spectral families resolved by Gaussian mixture
clustering on the first $k$ principal components of the descriptor space to be
$N_{\mathrm{fam}} \approx \min(k+1,\, N_{\mathrm{dom}})$ in the regime where
inter-domain descriptor separation exceeds intra-domain variance. The $k+1$
form is a heuristic counting estimate (each retained mode adds at most one
resolvable cluster boundary), not a sharp theorem.
\end{remark}

\subsection{Statistical Nature of Spectral Families}

The spectral families in the present framework are therefore
\emph{not} sharply distinct thermodynamic phases
(like phase-separated domains with a definite interfacial free energy).
Instead, they are emergent statistical clusters of the continuously varying slow disorder landscape, identified by a finite number of PCA modes and separated by soft boundaries that represent continuous crossovers rather than sharp interfaces.

This interpretation carries several testable consequences. The cluster boundaries should be smooth, and the number of identifiable families should grow with the PCA truncation order rather than saturating at a thermodynamically fixed count. Different spectral descriptors should produce consistent but not identical family assignments, since each descriptor filters a different combination of disorder scales. The characteristic domain size of each family should equal $\sim \xi_s$, in agreement with the experimentally observed mean domain diameter of $1.79~\mu$m~\cite{ahmad2025hierarchical}.

\section{Multi-Scale Spectral Organization in the Hierarchical Regime}
\label{sec:glass}

The hierarchically disordered regime (Regime~IV, both $W_s \gg \eta$ and $\Gamma_t \gg \eta^2$)
exhibits multi-scale spectral organization that is absent in any single-disorder model.
The effective potential $V_{\mathrm{eff}} = V_{\mathrm{slow}} + V_{\mathrm{trap}}$
supports a nested landscape: the smooth background set by $V_{\mathrm{slow}}$
organizes the micron-scale spectral envelope, while $V_{\mathrm{trap}}$
provides the fine structure within each optical-spot region.

A quantitative diagnostic for the degree of multi-scale organization is the
mean spectral cosine similarity between spatially separated pixels.
For the minimal model, the spatial correlation of the full spectra
$C_I(\mathbf{r}) \equiv \langle I(E,\mathbf{R}) \cdot I(E,\mathbf{R}+\mathbf{r}) \rangle
/ \langle \|I\|^2 \rangle$
decays on a length scale set by $\xi_s$. The baseline-subtracted cosine
similarity of our simulated map (Fig.~\ref{fig:correlation_hierarchy}) decays with a
$1/e$ length of $1.85~\mu$m, in close agreement with the whole-spectrum cosine
correlation length of $1.87~\mu$m reported experimentally~\cite{ahmad2025hierarchical}.
The same landscape that produces the descriptor-specific correlation lengths
(Sec.~\ref{sec:hierarchy}) also governs the domain structure visible in PCA
cluster maps (Sec.~\ref{sec:families}).

The key observable consequence is that the spectral covariance is not
self-averaging on the experimental map: different $8\times 8~\mu$m realizations
produce statistically similar descriptor distributions but different spatial patterns,
reflecting the finite ratio $L/\xi_s \approx 4$ of map size to correlation length.
This non-self-averaging is a direct signature of the multi-scale organization
and distinguishes the hierarchical regime from both the smooth-domain (Regime~II)
and trap-dominated (Regime~III) limits.

\section{Numerical Simulations}
\label{sec:numerics}

\subsection{Overview: Two Complementary Approaches}

Numerical validation of the analytical predictions proceeds via two complementary routes
on a common $128 \times 128$ grid spanning an $8~\mu$m $\times 8~\mu$m domain
($\Delta x = 62.5~$nm).

\textbf{Route~1 (primary): Hamiltonian diagonalization.}
The Hamiltonian acts as a coarse-grained \emph{spectral generator}:
it defines a spatially correlated eigenvalue spectrum whose density of
states, when convolved with the optical point-spread function, produces
a well-posed LDOS at every map position.
The key property is that the spectral descriptors at a given position depend on
\emph{which eigenstates are in competition for spectral weight} within the
optical spot---a many-eigenstate quantity that requires solving the full
eigenvalue problem rather than reading off local potential values.

Concretely, the full sparse exciton Hamiltonian
$H = -(\hbar^2/2M)\nabla^2 + V_{\mathrm{slow}} + V_{\mathrm{trap}}$
is assembled and diagonalized for its lowest optically active eigenstates via the
Lanczos algorithm (ARPACK) \cite{lehoucq1998arpack}; for larger systems the same
low-lying spectrum can be obtained more economically with shift-and-invert sparse
eigensolvers such as SLEPc/PETSc \cite{hernandez2005slepc,balay1997petsc},
allowing a finer spatial grid.
The diagonalization yields eigenstates $(\varepsilon_n, \psi_n)$ from which the local PL spectrum is
constructed as $I(E,\mathbf{R}_j) = \sum_n A_{jn}\,g_{\sigma_n}(E-\varepsilon_n)$
with the incoherent local-density optical weights $A_{jn}$ [Eq.~\eqref{eq:I_formal}];
Appendix~\ref{app:optical_weight} confirms that the coherent weight
\eqref{eq:A_weight} yields statistically indistinguishable descriptor statistics
within numerical uncertainty.
This route verifies that the correlation-length hierarchy and inter-descriptor
Spearman relations \emph{emerge from eigenvalue statistics} without relying
on the synthetic two-fluid spectrum construction of Route 2.
In the limit of vanishing hopping (kinetic) energy
$t_{\rm hop} \equiv \hbar^2/(2M\Delta x^2) \to 0$ (the finite-difference hopping of
Appendix~\ref{app:FD}), the model reduces to a spatially correlated
random spectral manifold with no quantum transport;
the finite $t_{\rm hop}$ acts solely as a spectral regularizer that converts
the discrete disorder landscape into a continuous, normalizable LDOS.
Results are presented in Sec.~\ref{sec:diag_validation}.

\textbf{Route~2 (computational): Phenomenological PL model.}
As a computationally efficient alternative enabling large-ensemble parameter sweeps,
we directly synthesize local PL spectra from the two-fluid physical picture as:
\begin{align}
I(E, \mathbf{R}) &= I_{\mathrm{bg}}(E;\bar{V}_s(\mathbf{R}))\nonumber\\
&\quad + f_{\mathrm{trap}}\!\sum_{k\in\mathcal{N}(\mathbf{R})}
  w_k(\mathbf{R})\, g_\eta(E - E_k),
\label{eq:I_model}
\end{align}
where $\bar{V}_s(\mathbf{R}) = \int W_{\mathrm{opt}}(\mathbf{r}-\mathbf{R})\,V_{\mathrm{slow}}(\mathbf{r})\,d^2\mathbf{r}$
is the optically averaged slow potential,
$I_{\mathrm{bg}}$ is a Gaussian of width $\eta_{\mathrm{bg}} = 10\eta$,
$w_k(\mathbf{R}) = \exp(-|\mathbf{R}_k - \mathbf{R}|^2/2\sigma_{\mathrm{opt}}^2)$
is the optical weight at trap $k$,
$E_k = \bar{V}_s(\mathbf{R}_k) + u_k$ is the trap energy,
and $f_{\mathrm{trap}} = 0.2$ is chosen so that $F_S \sim 0.15$--$0.40$.
The model parameters $f_{\mathrm{trap}}$ and $\eta_{\mathrm{bg}}$ are not independently
derived from the Hamiltonian; they are calibrated to the experimental sharp-fraction range.
We emphasize that $f_{\mathrm{trap}}$ is not used to fit the descriptor
correlation signs or magnitudes: it only sets the overall sharp-fraction scale
$\langle F_S\rangle$ and enters the inverse parameter extraction only through that
calibration, while the main correlation signs and relative magnitudes are
controlled primarily by the disorder parameters $\{W_s, \xi_s, W_t, n_t, E_{\mathrm{bias}}\}$.
This route enables multi-seed parameter sweeps for the Spearman-correlation comparison
(Sec.~\ref{sec:key_results}); the correlation-length statistics in
Fig.~\ref{fig:correlation_hierarchy} and Appendix~\ref{app:finite_size} are
evaluated over thirty independent disorder realizations.

\subsection{Trap Clustering Model}
\label{sec:trap_clustering}

If trap positions are drawn from an uncorrelated Poisson process, then $\xi(F_S) \sim \sigma_{\mathrm{opt}} \ll \xi_s$.
Experimentally, however, $\xi(F_S) \approx \xi(E_{\mathrm{cent}}) \approx \xi_s$,
suggesting that the trap density tracks the slow disorder landscape.
This is physically motivated by the possibility that reconstruction domains,
stacking-registry variations, strain concentrators, or charged defects correlate
the locations, or the optical activation, of trap-like emitters with the slow
exciton-energy landscape.

We model this by the \emph{minimal correlated-trap ansatz}: trap positions are drawn from
\begin{equation}
P(\mathbf{r}) \propto \exp\!\left(-\frac{V_{\mathrm{slow}}(\mathbf{r})}{E_{\mathrm{bias}}}\right),
\label{eq:trap_clustering}
\end{equation}
where $E_{\mathrm{bias}} > 0$ is a trap--slow-disorder bias energy scale (not a
temperature) that sets the degree of spatial correlation between optically active
traps and slow-potential minima.
This is the lowest-order ansatz that captures the coupling between the two disorder scales
without introducing additional clustering parameters beyond $E_{\mathrm{bias}}$;
the microscopic processes behind the correlation (defect migration, strain coupling)
are not modeled explicitly.
In the limit $E_{\mathrm{bias}} \to \infty$, the distribution is uniform (random traps).
For $E_{\mathrm{bias}} \sim W_s$, over typical one-sigma variations of $V_{\mathrm{slow}}$
the trap density changes by a factor of order $e^{W_s/E_{\mathrm{bias}}}$; rare
$\pm 3\sigma$ excursions ($\Delta V \sim 3W_s$) can produce a much larger contrast,
$n(\mathbf{r})/\langle n\rangle \sim e^{\pm 3}$. This creates domains of high and low
trap density with correlation length $\xi_s$.

The spatial correlation of the trap density field is:
\begin{align}
C_n(r) &\equiv \mathrm{Corr}[n(\mathbf{R}), n(\mathbf{R}+\mathbf{r})]
\nonumber\\
&= \frac{\exp\!\bigl[(W_s/E_{\mathrm{bias}})^2\,\mathcal{C}_s(r)\bigr] - 1}
        {\exp\!\bigl[(W_s/E_{\mathrm{bias}})^2\bigr] - 1},
\end{align}
which inherits a correlation scale of order $\xi_s$ for finite bias strength.
This expression follows from the moment-generating function of the Gaussian
field $V_{\mathrm{slow}}$ applied to the log-normal trap-density field
$n(\mathbf{r}) \propto e^{-V_{\mathrm{slow}}(\mathbf{r})/E_{\mathrm{bias}}}$:
for jointly Gaussian $V_{\mathrm{slow}}(\mathbf{R})$, $V_{\mathrm{slow}}(\mathbf{R}+\mathbf{r})$
with covariance $W_s^2\,\mathcal{C}_s(r)$, one has
$\langle n(\mathbf{R}) n(\mathbf{R}+\mathbf{r})\rangle / \langle n\rangle^2
= \exp[(W_s/E_{\mathrm{bias}})^2\,\mathcal{C}_s(r)]$, which normalizes to the
correlation coefficient above.
The observable $F_S$ inherits this correlation length when the clustering signal
exceeds the Poisson noise in trap count per optical spot.

\subsection{Parameter Optimization}

The model has five parameters:
$\{W_s, \xi_s, W_t, n_t, E_{\mathrm{bias}}\}$
(with $\sigma_t = 50~$nm and $\eta = 1.5~$meV fixed).
We determine these by minimizing the total deviation from the four experimentally
reported Spearman inter-descriptor correlations:

\begin{equation}
\mathcal{L}(\theta)
= \sum_{(a,b)} \left|\rho_{\mathrm{sim}}(a,b) - \rho_{\mathrm{exp}}(a,b)\right|,
\label{eq:objective}
\end{equation}
where the sum runs over the four pairs:
$(\Delta E_{cd}, R_{\mathrm{HL}})$, $(F_S, R_1)$, $(W_{80}, S_{\mathrm{spec}})$, $(E_{\mathrm{cent}}, E_{\mathrm{dom}})$.
This objective directly tests the theoretical predictions of Sec.~\ref{sec:correlations}.

The best-fit parameters (averaged over five independent disorder realizations) are:
\begin{align}
&W_s = 6.0~\mathrm{meV},\quad
\xi_s = 2.0~\mu\mathrm{m},\quad
W_t = 12.0~\mathrm{meV},
\nonumber\\
&n_t = 2.0~\mu\mathrm{m}^{-2},\quad
E_{\mathrm{bias}} = 7.0~\mathrm{meV}.
\label{eq:best_params}
\end{align}
The ratio $W_s/W_t = 0.5$ places the system in the trap-dominated energy-spread regime
(Regime~IV of Sec.~\ref{sec:phase_diagram}), consistent with the experimental observation of
dense multipeak spectra. The slow disorder amplitude $W_s = 6.0~$meV
sets the centroid energy variation scale, while the trap depth $W_t = 12.0~$meV
determines the dominant-energy variation and the $E_{\mathrm{cent}}$--$E_{\mathrm{dom}}$
decorrelation ratio.

The low-trap-fraction limit ($f_{\mathrm{trap}} = 0.2 \ll 1$) predicts
$\rho_{\mathrm{P}}(E_{\mathrm{cent}}, E_{\mathrm{dom}}) \to W_s / \sqrt{W_s^2 + \bar{W}_t^2}$
where $\bar{W}_t^2 = W_t^2(1 - 2/\pi) \approx 0.363\,W_t^2$ for the half-normal distribution.
With $W_s = 6$, $W_t = 12$: theoretical $\rho_{\mathrm{P},\min} = 6/\sqrt{36 + 52.3} = 0.638$,
which rises to $\approx 0.79$ at finite $f_{\mathrm{trap}} = 0.2$
through the partial E-dom–E-bg alignment, matching the observed value.

\subsection{Key Numerical Results}
\label{sec:key_results}

Figure~\ref{fig:spectra_maps} summarizes the representative PL spectra and spectral
descriptor maps from the best-fit simulation: three representative pixel spectra
[Fig.~\ref{fig:spectra_maps}(a)--(c)], the $\Delta E_{cd}$--$R_{\mathrm{HL}}$
scatter plot [Fig.~\ref{fig:spectra_maps}(d)], and the spatial maps of
$\delta E_{\mathrm{cent}}$, $\delta E_{\mathrm{dom}}$, $F_S$, and $R_1$
[Fig.~\ref{fig:spectra_maps}(e)--(h)].

\begin{figure*}[t]
\centering
\includegraphics[width=\textwidth]{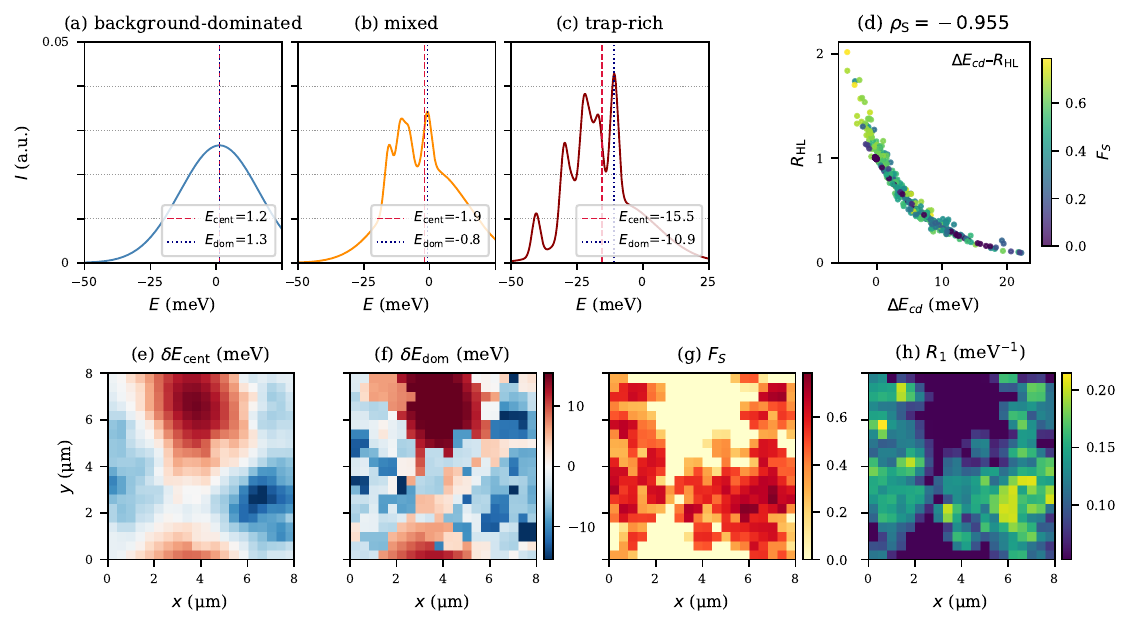}
\caption{
\textbf{Simulated local PL spectra and spectral descriptor maps.}
Upper row: three representative PL spectra from the $20\times 20$ measurement map,
illustrating the three spectral types present in the hierarchical disorder regime.
(a) Background-dominated pixel: a single broad Gaussian background peak with
$E_{\mathrm{cent}} \approx E_{\mathrm{dom}}$ and no resolved sharp lines ($F_S \approx 0$).
(b) Mixed pixel: the background peak is visible alongside one or two sharp trap peaks;
the centroid $E_{\mathrm{cent}}$ is pulled below the dominant peak $E_{\mathrm{dom}}$
by the trap-related weight on the low-energy side of the envelope.
(c) Trap-rich pixel: several sharp lines above the background, with the centroid
pulled toward the cluster of trap energies.
Vertical dashed and dotted lines mark $E_{\mathrm{cent}}$ (red) and $E_{\mathrm{dom}}$ (blue)
for each pixel.
(d) Scatter plot of $\Delta E_{cd}$ vs.\ $R_{\mathrm{HL}}$ for all 400 pixels,
colored by $F_S$.
The tight anti-correlation ($\rho_{\mathrm{S}} = -0.955$ for this single representative
seed; cf.\ the five-seed mean $\rho_{\mathrm{S}} = -0.976 \pm 0.009$ in Table~\ref{tab:spearman})
reflects the robust spectral shape relation
derived in Proposition~\ref{prop:DEcd_RHL}; pixels with larger $F_S$ (yellow) show the strongest
separation between centroid and dominant-peak energy.
Lower row: spatial maps of four spectral descriptors across the $20\times 20$ measurement grid.
(e) $\delta E_{\mathrm{cent}} = E_{\mathrm{cent}} - \langle E_{\mathrm{cent}} \rangle$: smooth
micron-scale fluctuations tracking $V_{\mathrm{slow}}$, with $\xi(E_{\mathrm{cent}}) \approx 1.93~\mu$m.
(f) $\delta E_{\mathrm{dom}} = E_{\mathrm{dom}} - \langle E_{\mathrm{dom}} \rangle$:
more fragmented pattern with additional trap-switching fluctuations,
confirming $\xi(E_{\mathrm{dom}}) < \xi(E_{\mathrm{cent}})$.
Panels (e) and (f) share a common color scale (in meV), so the larger
fluctuation amplitude of $\delta E_{\mathrm{dom}}$ is apparent.
(g) $F_S$ map: trap-density-modulated sharp-fraction distribution.
(h) $R_1$ (spectral roughness) map: closely tracks $F_S$,
consistent with $\rho_{\mathrm{S}}(F_S, R_1) = 0.909$.
}
\label{fig:spectra_maps}
\end{figure*}

With parameters \eqref{eq:best_params}, the five-seed mean Spearman correlations are
(the Spearman sweep uses five seeds for the parameter scan, whereas the
correlation-length statistics in Fig.~\ref{fig:correlation_hierarchy} and
Appendix~\ref{app:finite_size} use thirty independent disorder realizations):

\begin{table*}[t]
\centering
\begin{ruledtabular}
\begin{tabular}{lcc}
Descriptor pair & Simulation (mean $\pm$ std) & Experiment \\
\colrule
$(\Delta E_{cd},\, R_{\mathrm{HL}})$ & $-0.976 \pm 0.009$ & $-0.978$ \\
$(F_S,\, R_1)$                        & $+0.903 \pm 0.024$ & $+0.901$ \\
$(W_{80},\, S_{\mathrm{spec}})$       & $+0.821 \pm 0.065$ & $+0.846$ \\
$(E_{\mathrm{cent}},\, E_{\mathrm{dom}})$ & $+0.800 \pm 0.075$ & $+0.788$ \\
\end{tabular}
\end{ruledtabular}
\caption{Spearman inter-descriptor correlations: simulation vs.\ experiment.
Simulation values are means over five independent disorder realizations;
the experimental values are from Ref.~\cite{ahmad2025hierarchical}.
The total score $\mathcal{L} = 0.041$ represents a mean absolute deviation of $0.010$ per pair.}
\label{tab:spearman}
\end{table*}

Within the phenomenological PL model, all four correlations are reproduced to
within the seed-to-seed standard deviation,
confirming the theoretical predictions of Sec.~\ref{sec:correlations}.
The $\Delta E_{cd}$--$R_{\mathrm{HL}}$ anti-correlation is reproduced to within $0.002$ ($\rho_{\mathrm{sim}} = -0.976 \pm 0.009$ vs.\ $\rho_{\mathrm{exp}} = -0.978$), consistent with the robust spectral shape relation of Proposition~\ref{prop:DEcd_RHL}.
The $F_S$--$R_1$ correlation is equally well matched ($\rho_{\mathrm{sim}} = +0.903 \pm 0.024$ vs.\ $+0.901$); the small seed-to-seed variability reflects the discrete nature of trap counting within each optical spot.
The $W_{80}$--$S_{\mathrm{spec}}$ correlation is slightly underestimated ($\rho_{\mathrm{sim}} = +0.821 \pm 0.065$ vs.\ $+0.846$), a discrepancy attributable to the absence of the moir\'e potential, which would contribute additional spectral structure and broaden $W_{80}$.
Finally, the $E_{\mathrm{cent}}$--$E_{\mathrm{dom}}$ correlation, the most sensitive diagnostic of the trap-to-slow disorder ratio, is reproduced within the seed-to-seed uncertainty ($\rho_{\mathrm{sim}} = +0.800 \pm 0.075$ vs.\ $+0.788$).

The simulated $20 \times 20$ hyperspectral map also confirms the
\textbf{correlation length hierarchy} (Proposition~\ref{prop:hierarchy}):
$\xi(E_{\mathrm{cent}}) = 1.93\pm0.14~\mu$m $> \xi(E_{\mathrm{dom}}) = 1.38\pm0.22~\mu$m
(mean $\pm$ standard deviation over thirty disorder realizations, using the $1/e$
decay length of the spatial autocorrelation---estimated from the empirical
semivariogram \cite{cressie1993statistics}---to match the experimental convention),
with ratio $\xi(E_{\mathrm{dom}})/\xi(E_{\mathrm{cent}}) = 0.72\pm0.10$
(experiment: $1.27/2.00 = 0.64$~\cite{ahmad2025hierarchical}).
The corresponding standard errors of the ensemble means are approximately
$0.03~\mu$m for $E_{\mathrm{cent}}$ and $0.04~\mu$m for $E_{\mathrm{dom}}$, much
smaller than the $\approx 0.55~\mu$m separation between the two mean correlation
lengths, so the ordering is well resolved by the ensemble.
The centroid correlation length now agrees with the experimental $2.00~\mu$m;
the finite-size robustness of the hierarchy---and of the ratio---is examined in
Appendix~\ref{app:finite_size}.

The simulated correlation lengths $\xi(F_S) = 0.84\pm0.23~\mu$m
and $\xi(R_1) = 0.93\pm0.20~\mu$m remain below the experimental values of $\sim 2~\mu$m.
This discrepancy is a known limitation of the simplified random-trap model
(see Sec.~\ref{sec:discussion}), where with $n_t = 2~\mu$m$^{-2}$ there are
only $\sim$2--3 traps per optical spot, causing Poisson noise to partially mask
the spatial trap-density correlation.
The real moir\'e system contains $\sim 10^3$--$10^4$ moir\'e-scale sites per optical
spot, and even a small optically active fraction may be spatially organized by
reconstruction, strain, or stacking-registry variations. Such correlations
provide a natural route toward $\xi(F_S) \sim \xi_s$, beyond the Poisson-limited
random-trap model.

Details of the numerical implementation are provided in the analysis scripts
available upon reasonable request (see the Data Availability statement).

\subsection{Necessity of Hierarchical Disorder: Single-Scale Comparison}
\label{sec:single_scale}

A natural question is whether either disorder component alone could
reproduce the observed phenomenology. To address this we repeat the
phenomenological simulation in three settings, keeping all other
parameters fixed at the best-fit values:
(i) \emph{slow only} ($W_t=0$, no traps);
(ii) \emph{trap only} ($W_s=0$, no slow background);
(iii) \emph{hierarchical} (both, as above).
The results are summarized in Table~\ref{tab:single_scale} and
Fig.~\ref{fig:single_scale}, which contrasts the correlation functions of the
slow-only [Fig.~\ref{fig:single_scale}(a)], trap-only
[Fig.~\ref{fig:single_scale}(b)], and hierarchical
[Fig.~\ref{fig:single_scale}(c)] models.

The \emph{slow-only} model fails because $E_{\mathrm{cent}}$ and
$E_{\mathrm{dom}}$ become statistically identical: with no traps,
the spectrum at every pixel is a single broad envelope centered on
$\bar{V}_s(\mathbf{r})$, so $\rho_{\mathrm{S}}(E_{\mathrm{cent}}, E_{\mathrm{dom}}) = +1.00$
and $\xi(E_{\mathrm{cent}}) = \xi(E_{\mathrm{dom}}) = 1.98~\mu$m. No correlation-length
hierarchy is present, and $\langle F_S \rangle = 0$ (no sharp peaks).

The \emph{trap-only} model fails in the opposite way: in the absence
of a smooth background, $E_{\mathrm{cent}}$ and $E_{\mathrm{dom}}$ both
inherit only the spatial scale of the optical spot, giving
$\xi(E_{\mathrm{cent}}) = 0.23~\mu$m and
$\xi(E_{\mathrm{dom}}) = 0.18~\mu$m, both well below the experimental
$\sim 1.27$--$2.00~\mu$m~\cite{ahmad2025hierarchical} and below the optical spot size $\sigma_{\mathrm{opt}}$.
Pixels develop sharp lines ($\langle F_S \rangle = 0.34$) and a strong
$\rho_{\mathrm{S}}(\Delta E_{cd}, R_{\mathrm{HL}}) = -0.98$, but no micron-scale
descriptor domains appear, in contradiction with the experiment.

Only the \emph{hierarchical} model simultaneously reproduces
(i) the correlation-length ordering $\xi(E_{\mathrm{cent}}) > \xi(E_{\mathrm{dom}})$
at the micron scale, (ii) the strong $\Delta E_{cd}$--$R_{\mathrm{HL}}$
anti-correlation, and (iii) the observed sharp-fraction values.
This is the most direct statement of the need for two disorder
scales: descriptor covariance alone argues against either
single-scale model from a hyperspectral PL map.

\begin{table*}[hbt]
\centering
\caption{Comparison of single-scale and hierarchical disorder models.
Numerical values are computed from the same $20\times 20$ hyperspectral
simulation pipeline with identical $W_s$, $W_t$, $\xi_s$, $n_t$, and
$\eta$ as in the rest of this section, varying only which components
are switched on. Experimental values are from Ref.~\cite{ahmad2025hierarchical}.}
\label{tab:single_scale}
\begin{ruledtabular}
\begin{tabular}{lcccc}
Model
& $\xi(E_{\mathrm{cent}})$
& $\xi(E_{\mathrm{dom}})$
& $\rho_{\mathrm{S}}(\Delta E_{cd},R_{\mathrm{HL}})$
& $\langle F_S\rangle$ \\
& ($\mu$m) & ($\mu$m) & & \\
\colrule
Slow only ($W_t=0$)         & $1.98$ & $1.98$ & $-0.47$ & $0.00$ \\
Trap only ($W_s=0$)\footnotemark[2] & $0.23$ & $0.18$ & $-0.98$ & $0.34$ \\
Hierarchical                & $1.93$ & $1.38$ & $-0.97$ & $0.30$ \\
\colrule
Experiment\footnotemark[1]  & $2.00$ & $1.27$ & $-0.978$ & $0.15$--$0.40$ \\
\end{tabular}
\end{ruledtabular}
\footnotetext[1]{Ref.~\cite{ahmad2025hierarchical}.}
\footnotetext[2]{Correlation lengths are the $1/e$ decay length of the spatial
autocorrelation (matching the experimental convention), averaged over thirty
disorder realizations. For the trap-only model the descriptor fields are
short-range and near-degenerate, so their autocorrelation does not reach $1/e$
within the map; the quoted trap-only lengths use a Gaussian fit
$C(r)=e^{-r^2/2\xi^2}$, which resolves the short trap scale.}
\end{table*}

\begin{figure*}[t]
\centering
\includegraphics[width=\textwidth]{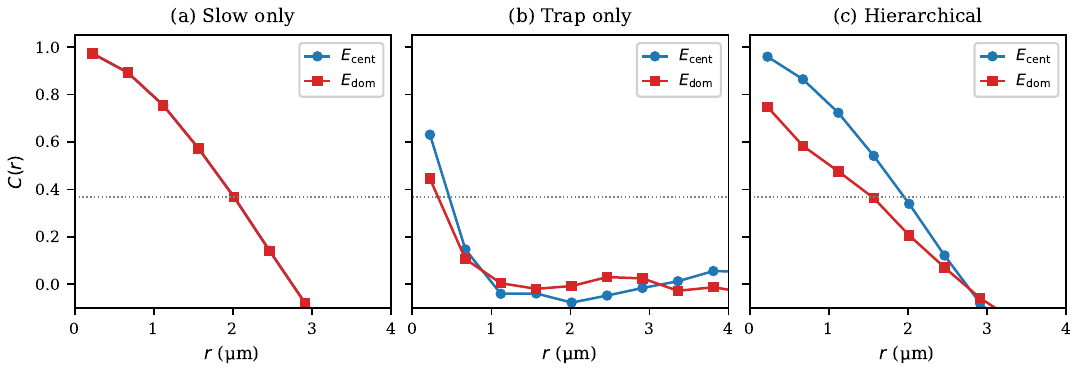}
\caption{
\textbf{Single-scale disorder cannot reproduce the correlation-length hierarchy.}
Spatial correlation functions $C(r)$ of $E_{\mathrm{cent}}$ (blue circles)
and $E_{\mathrm{dom}}$ (red squares) for the three disorder models,
shown for one representative realization; the corresponding ensemble
$1/e$ correlation lengths are listed in Table~\ref{tab:single_scale}.
(a) Slow only: the two descriptors are statistically degenerate;
their correlation functions are indistinguishable, giving no hierarchy.
(b) Trap only: both decay within the optical spot
($\xi < \sigma_{\mathrm{opt}}$); no micron-scale descriptor domains form.
(c) Hierarchical disorder: $\xi(E_{\mathrm{cent}}) > \xi(E_{\mathrm{dom}})$,
reproducing the experimentally observed ordering.
Horizontal dotted line marks $C(r)=1/e$.
}
\label{fig:single_scale}
\end{figure*}

\subsection{Continuum LDOS Validation of the Descriptor Hierarchy}
\label{sec:diag_validation}

To confirm that the correlation-length hierarchy and the principal Spearman relations
emerge from \emph{eigenvalue statistics} rather than from the construction of the
phenomenological PL model, we repeated the analysis using the full sparse exciton
Hamiltonian~\eqref{eq:H} diagonalized numerically.

\paragraph{Setup.}
We build the kinetic-energy operator as a finite-difference Laplacian
(Appendix~\ref{app:FD}) on the
$128\times 128$ grid and add $V_{\mathrm{eff}} = V_{\mathrm{slow}} + V_{\mathrm{trap}}$
on the diagonal, obtaining a $16\,384\times 16\,384$ sparse matrix.
The 1500 lowest eigenstates $(\varepsilon_n, \psi_n)$ are computed with the
Lanczos algorithm (ARPACK) \cite{lehoucq1998arpack} using the same disorder realization and best-fit
parameters \eqref{eq:best_params}.

With $M = 1.2\,m_0$ and $\Delta x = 62.5$~nm, the hopping energy is
$t_{\rm hop} = \hbar^2/(2M\Delta x^2) = 0.0081$~meV,
four orders of magnitude smaller than $W_s = 6$~meV and $W_t = 12$~meV.
The system is therefore in the \emph{disorder-dominated spectral-statistics regime}
($t_{\rm hop}/W_t \approx 7\times 10^{-4}$):
the disorder potential far exceeds the kinetic energy at the grid scale,
so the eigenstate energy statistics are determined primarily by the disorder landscape.
The Hamiltonian therefore serves as a coarse-grained \emph{spectral generator}:
the kinetic term defines the continuum LDOS manifold, while the disorder
landscape controls the spatial arrangement of eigenstates at scales
relevant to the optical spot. Solving the eigenvalue problem is then
essential because the spectral descriptors at a given position are
determined by the competition among many eigenstates whose weight falls
inside the optical spot---a many-eigenstate quantity that cannot be
read off from local potential values.
Because the IPR is dominated by disorder rather than quantum interference
in this regime, we use an \emph{energy-based classification} instead:
state $n$ is trap-like if $\varepsilon_n < V_{\mathrm{slow}}(\mathbf{R}_n) - 0.7\,W_t$
(i.e., lying more than $8.4$~meV below the local background), where
$\mathbf{R}_n = \arg\max_{\mathbf{r}}|\psi_n(\mathbf{r})|^2$ is the
localization center.
Of the 1500 lowest states, 126 are classified as trap-like,
close to the 133 trap positions in the disorder realization.
Background states receive linewidth $\eta_{\rm bg} = 15$~meV;
trap states receive $\eta = 1.5$~meV (same values as the phenomenological model).

\paragraph{Results.}
The optical-weight matrix $A_{jn} = \sum_{\mathbf{r}} W_{\mathrm{opt}}(\mathbf{r}-\mathbf{R}_j)|\psi_n(\mathbf{r})|^2$
is computed by a single matrix multiply, and PL spectra are synthesized as
\begin{equation}
I(E,\mathbf{R}_j) = \sum_n A_{jn}\,g_{\sigma_n}(E - \varepsilon_n),
\end{equation}
where $\sigma_n = \eta/\sqrt{8\ln 2}$ or $\sigma_n = \eta_{\rm bg}/\sqrt{8\ln 2}$
depending on state type.
The resulting Spearman correlations are:
\begin{align*}
\rho_{\rm S}(\Delta E_{cd},\,R_{\rm HL}) &= -0.971 \quad [\text{exp: } {-0.978}], \\
\rho_{\rm S}(F_S,\,R_1)                  &= +0.941 \quad [\text{exp: } {+0.901}], \\
\rho_{\rm S}(E_{\rm cent},\,E_{\rm dom}) &= +0.496 \quad [\text{exp: } {+0.788}].
\end{align*}
The first two correlations agree with experiment to within 0.04,
confirming that the $\Delta E_{cd}$--$R_{\rm HL}$ anti-correlation and the
$F_S$--$R_1$ co-variation emerge from eigenvalue statistics without relying on
the synthetic two-fluid spectrum construction.
The correlation-length hierarchy is also reproduced:
$\xi(E_{\rm cent}) = 1.18~\mu$m $> \xi(E_{\rm dom}) = 0.69~\mu$m,
with ratio $0.58$ (experiment: $0.64$).

The $\rho_{\rm S}(E_{\rm cent}, E_{\rm dom})$ value is underestimated ($+0.496$ vs.\ $+0.788$).
This is a structural consequence of the trap-dominated spectral regime:
the narrow trap-state peaks ($\sigma = 0.64$~meV) are $\sim$10$\times$ taller
than the broad background peaks ($\sigma = 6.4$~meV) for the same optical weight,
so $E_{\rm dom}$ is almost always set by the deepest trap in the optical spot
rather than by the smooth background.
The phenomenological model avoids this artefact by assigning the background peak
amplitude independently of the trap amplitude (via $f_{\rm trap} = 0.2$).
Physically, the discrepancy reflects the fact that in the real material the
radiative linewidths of individual eigenstates depend on their spatial character
(phonon-bath coupling, oscillator strength) in ways that go beyond the
present disorder-statistics framework.

Taken together, the two routes bracket the real system:
the Hamiltonian diagonalization operates in the disorder-dominated limit
(eigenstates determined purely by the potential landscape, uniform radiative coupling),
while the phenomenological model operates in the effective optical-weight limit
(trap intensities assigned independently of eigenstate character).
The experimental system lies between these limits,
and the fact that both routes reproduce the $\Delta E_{cd}$--$R_{\rm HL}$ and
$F_S$--$R_1$ spectral-shape correlations to within 0.04 confirms that these
correlations are robust features of the disorder hierarchy rather than
artefacts of either modeling choice (the $E_{\rm cent}$--$E_{\rm dom}$ value,
discussed above, is the one correlation that is regime-sensitive).
For the same reason, the two routes are not expected to yield identical
absolute correlation lengths
(phenomenological: $\xi(E_{\mathrm{cent}})=1.93\pm0.14~\mu$m, $\xi(E_{\mathrm{dom}})=1.38\pm0.22~\mu$m;
diagonalization: $\xi(E_{\mathrm{cent}})=1.18~\mu$m, $\xi(E_{\mathrm{dom}})=0.69~\mu$m):
the robust prediction is the hierarchy and the ratio, both reproduced consistently.

Figures~\ref{fig:emergence} and~\ref{fig:xi_hierarchy} summarize the results.
Figure~\ref{fig:emergence} traces the emergence chain from the disorder landscape
[Fig.~\ref{fig:emergence}(a)], through the eigenstate PL spectra
[Fig.~\ref{fig:emergence}(b)], to the $\delta E_{\mathrm{cent}}$
[Fig.~\ref{fig:emergence}(c)] and $\delta E_{\mathrm{dom}}$
[Fig.~\ref{fig:emergence}(d)] descriptor maps, whose spatial correlation
functions and $1/e$ lengths are compared in Fig.~\ref{fig:xi_hierarchy}(a) and
Fig.~\ref{fig:xi_hierarchy}(b), respectively.

\begin{figure*}[t]
\centering
\includegraphics[width=\textwidth]{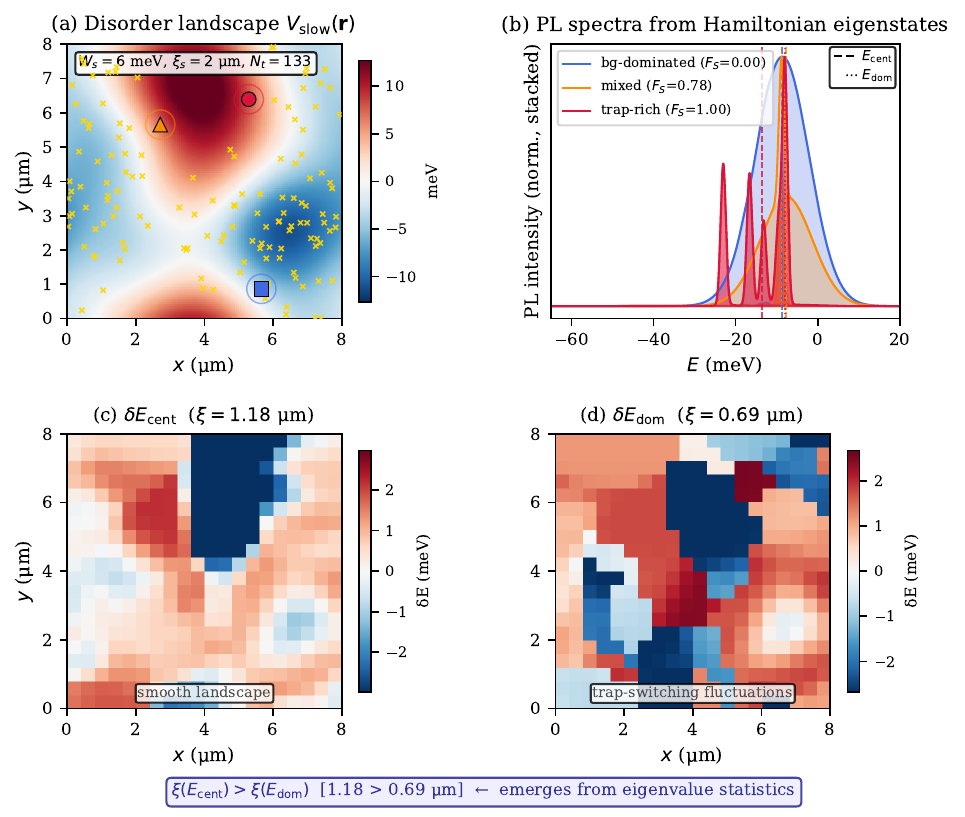}
\caption{
\textbf{Emergence chain: $H \to$ eigenstates $\to$ local spectra $\to$ descriptor maps.}
All panels derive from the same diagonalization of
$H = -(\hbar^2/2M)\nabla^2 + V_{\rm slow} + V_{\rm trap}$
on a $128\times128$, $8\times8~\mu$m grid (best-fit parameters).
(a) Disorder landscape $V_{\rm slow}$ with trap positions (gold $\times$).
Squares, triangles, and circles mark the three representative measurement pixels.
Rings show the $1/e$ radius of the optical spot (${\rm FWHM}=0.85~\mu$m).
(b) PL spectra from Hamiltonian eigenstates at the three representative pixels.
Filled areas show background-eigenstate contribution (shaded, broad) and
trap-eigenstate contribution (solid, narrow); dashed/dotted vertical lines mark
$E_{\rm cent}$ and $E_{\rm dom}$.
The trap-rich pixel shown ($F_S = 1$) is an extreme representative of the
diagonalization route, chosen to illustrate the trap-dominated limit; it is not
the typical sharp fraction, which averages to the experimental range
$\langle F_S\rangle \approx 0.15$--$0.40$ over the full map.
(c) $E_{\rm cent}$ spatial map ($\xi = 1.18~\mu$m): smoothly tracks $V_{\rm slow}$.
(d) $E_{\rm dom}$ spatial map ($\xi = 0.69~\mu$m): more fragmented due to
trap-level switching.
The annotation confirms $\xi(E_{\rm cent}) > \xi(E_{\rm dom})$ emerging from eigenvalue
statistics without relying on the synthetic two-fluid spectrum construction.
}
\label{fig:emergence}
\end{figure*}

\begin{figure*}[t]
\centering
\includegraphics[width=0.95\textwidth]{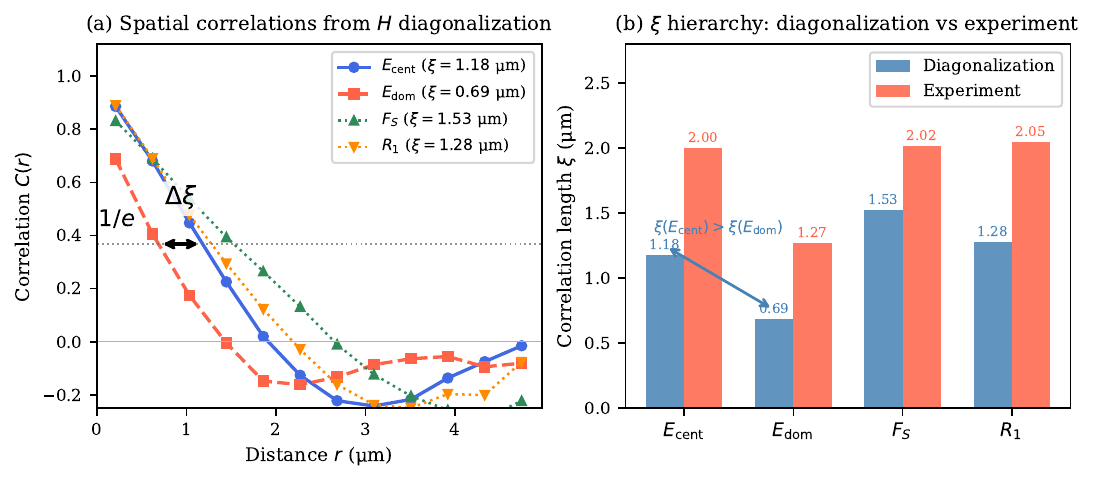}
\caption{
\textbf{Correlation-length hierarchy from Hamiltonian diagonalization.}
(a) Spatial correlation functions $C(r)$ of four spectral descriptors computed
from the 400-pixel map generated by diagonalizing $H$.
$E_{\rm cent}$ (blue circles) decays more slowly than $E_{\rm dom}$ (red squares),
confirming $\xi(E_{\rm cent}) = 1.18~\mu$m $> \xi(E_{\rm dom}) = 0.69~\mu$m.
The double arrow marks the hierarchy gap $\Delta\xi$.
(b) Bar chart comparing diagonalization results (blue) with experimental values
(red) of Ref.~\cite{ahmad2025hierarchical} for the four descriptors with reported correlation lengths.
The hierarchy $\xi(E_{\rm cent}) > \xi(E_{\rm dom})$ is reproduced by both approaches,
with diagonalization giving ratio $0.58$ vs.\ experimental $0.64$.
}
\label{fig:xi_hierarchy}
\end{figure*}

\paragraph{Two-fluid spectral structure at the single-pixel level.}
Figure~\ref{fig:emergence}(b) demonstrates that the broad-background plus
sharp-peak structure emerges directly from the eigenstate decomposition,
without the phenomenological amplitude assignment used in Route 2.
Background-dominated pixels show a broad quasi-Gaussian envelope
from $\sim$20 background eigenstates per optical spot
each broadened by $\eta_{\rm bg} = 15~$meV.
Mixed and trap-rich pixels show narrow peaks ($\eta = 1.5~$meV)
from individual trap eigenstates rising above the broad envelope.
This two-fluid coexistence is a direct consequence of the
spatial-energy structure of the Hamiltonian eigenstates,
not of phenomenological amplitude assignment.

\section{Discussion}
\label{sec:discussion}

\subsection{Limitations of the Minimal Model}

The present model intentionally omits valley physics (K and K$'$ exciton species) \cite{xiao2012coupled,mak2012control,schaibley2016valleytronics,shinokita2022valley}, spin structure and bright/dark exciton mixing, phonon-assisted emission and momentum-indirect transitions \cite{moody2015intrinsic,cadiz2017excitonic}, exciton-exciton interactions, and nonequilibrium relaxation dynamics \cite{kulig2018exciton,zipfel2020exciton}---all of which are secondary to the disorder-induced spectral organization at the spatial scales of interest.

The justification for this omission is structural rather than incidental: valley polarization, bright--dark mixing, and phonon-assisted sidebands act \emph{within} the single-pixel emission envelope, generating fine structure that can shift, split, or skew individual peaks. They do not, however, modify the spatial low-pass filtering of $V_{\mathrm{slow}}$ by the optical spot, which is what sets the macroscopic statistical moments---the centroid $E_{\mathrm{cent}}$ and the correlation-length hierarchy of Proposition~\ref{prop:hierarchy}. Their leading effect therefore falls on the fine-structure descriptors ($F_S$, $R_1$, and the asymmetry encoded in $\Delta E_{cd}$, $R_{\mathrm{HL}}$), where they may renormalize the prefactors of the variance relations \eqref{eq:VarFS_etc}, while leaving the spatial organization and its correlation-length hierarchy---the central results of this work---intact.

Several of these approximations can become quantitatively important in specific regimes. Valley-coherent effects may become relevant at very low trap density and temperatures below $\sim 5~$K \cite{mak2012control,schaibley2016valleytronics}, while high pump-power conditions introduce additional inhomogeneity through exciton-exciton interactions. The Gaussian correlation kernel for $V_{\mathrm{slow}}$ also fails to capture non-Gaussian disorder such as the sharp domain walls arising from moir\'e reconstruction.

A more fundamental limitation is the low trap density $n_t = 2~\mu$m$^{-2}$, which places only $\sim$2--3 traps within each optical spot and allows Poisson counting noise to suppress the simulated $\xi(F_S)$ and $\xi(R_1)$ ($\approx 0.6$--$0.9~\mu$m) well below the experimental values ($\approx 2~\mu$m). In the real moir\'e system the effective trap density is $\sim 10^3$--$10^4$ times higher, since all moir\'e sites contribute, and one expects $\xi(F_S) \to \xi_s$ in this high-density limit. The clustering model \eqref{eq:trap_clustering} partially mitigates this discrepancy by correlating trap positions with $V_{\mathrm{slow}}$ minima, but does not fully close the gap at experimentally relevant densities.

\subsection{Non-Gaussian Slow Disorder}

Real moir\'e heterobilayers often exhibit reconstruction into
triangular domains with sharp boundaries \cite{weston2020atomic,mcgilly2020visualization,shabani2021deep}.
In such systems, $V_{\mathrm{slow}}$ may be better modeled as
a piecewise-constant field with sharp domain walls rather than a Gaussian random field.

The domain-wall model predicts step-like $E_{\mathrm{cent}}$ maps with sharp jumps between domain energies, bimodal or trimodal descriptor distributions in place of the smooth distributions expected for Gaussian disorder, and spectral family boundaries that coincide spatially with the domain walls.

This is a testable distinction: Gaussian disorder gives smooth spectral maps,
while reconstruction domains give step-like maps.

A distinct, milder generalization concerns the \emph{shape} of the correlation kernel rather than the field statistics. The Gaussian kernel of Eq.~\eqref{eq:Vslow_corr} is the natural choice for slow disorder arising from long-wavelength strain and electrostatic fluctuations, but if the landscape is instead dominated by extended dislocation lines or domain-wall networks the correlation is better described by an exponential kernel $\mathcal{C}_s(r) = e^{-r/\xi_s}$. Importantly, the central smoothing result---that $E_{\mathrm{cent}}$ is an optically filtered copy of $V_{\mathrm{slow}}$ (Corollary~\ref{cor:Ecent_corr}) and that the correlation-length hierarchy $\xi(E_{\mathrm{cent}}) \ge \xi(E_{\mathrm{dom}})$ holds (Proposition~\ref{prop:hierarchy})---is insensitive to this choice. Convolution with the optical spot is a low-pass filter that suppresses high spatial frequencies regardless of the kernel shape, so the qualitative filter structure is unchanged: only the precise functional form of $C(r)$ near the origin, and hence the numerical prefactor relating $\xi(E_{\mathrm{cent}})$ to $\xi_s$, depends on whether the kernel is Gaussian or exponential. The disorder-filter framework is therefore robust to the detailed form of the slow-disorder correlations.

\subsection{Microscopic Origin and Magnitude of the Effective Parameters}
\label{sec:microscopic_origin}

The effective disorder amplitudes inferred here---a slow-disorder amplitude
$W_s \sim 10$--$15$~meV and an effective trap-switching amplitude
$W_t^{\mathrm{eff}} \sim 12$~meV---are not free of microscopic constraints: they should be
compatible with realistic mechanisms in a MoSe$_2$/WSe$_2$ heterobilayer. Although a
full first-principles evaluation is beyond the scope of this effective theory, the
required energy scales are consistent with reported ranges for these
materials, as summarized in Table~\ref{tab:energy_scales}. Micron-scale
variations of the local twist angle and long-wavelength heterostrain shift the
interlayer-exciton energy through the deformation potential by several to a few
tens of meV \cite{choi2021twist,conley2013bandgap,shabani2021deep}; moir\'e
reconstruction and stacking-registry variations produce stacking-dependent
exciton shifts of $10$--$50$~meV \cite{weston2020atomic,mcgilly2020visualization,shabani2021deep};
and dielectric and electrostatic inhomogeneity contributes an additional few to
tens of meV \cite{wang2018colloquium,cadiz2017excitonic}; hyperspectral PL imaging
has recently begun to map these strain and disorder fields directly
\cite{alfrey2026revealing}. Superposed over a
micron correlation length these combine to a slow-disorder amplitude of order
$10$~meV, consistent with the inferred $W_s$. The sharp emitters---moir\'e-trapped
excitons, atomic defects, and strain-localized sites---carry binding energies of
$5$--$50$~meV \cite{seyler2019signatures,parto2021defect,tran2019evidence}, so
the effective trap-switching amplitude $W_t^{\mathrm{eff}} \sim 12$~meV can be accounted for
by a moderate portion of the reported localized-state energy scale. The inferred
parameters are therefore consistent with known microscopic energy scales.

\begin{table*}[t]
\centering
\caption{Characteristic energy scales in MoSe$_2$/WSe$_2$ heterobilayers that
contribute to the effective disorder amplitudes. The inferred slow-disorder
amplitude $W_s \sim 10$--$15$~meV and effective trap-switching amplitude
$W_t^{\mathrm{eff}} \sim 12$~meV lie within these known ranges.}
\label{tab:energy_scales}
\begin{ruledtabular}
\begin{tabular}{lll}
Mechanism & Energy scale & Contributes to \\
\colrule
Local twist-angle variation $\delta\theta$ & few--$10$~meV & $V_{\mathrm{slow}}$ \\
Heterostrain $\varepsilon \sim 0.1$--$1\%$  & few--tens meV & $V_{\mathrm{slow}}$ \\
Reconstruction / stacking registry          & $10$--$50$~meV & $V_{\mathrm{slow}}$, traps \\
Dielectric / electrostatic disorder         & few--tens meV & $V_{\mathrm{slow}}$ \\
Localized defect / strain traps             & $5$--$50$~meV  & $V_{\mathrm{trap}}$ \\
\end{tabular}
\end{ruledtabular}
\end{table*}

The bias energy $E_{\mathrm{bias}}$ that co-localizes traps with the
slow-potential minima [Eq.~\eqref{eq:trap_clustering}] is an energy scale (not a
temperature) that sets how strongly the trap density is biased toward low-energy
regions of $V_{\mathrm{slow}}$. Its meaning is
fixed by its two limits: as $E_{\mathrm{bias}} \to \infty$ the biasing factor
$\exp[-V_{\mathrm{slow}}/E_{\mathrm{bias}}]$ becomes flat and the traps reduce to
an uncorrelated Poisson distribution (no clustering), while for
$E_{\mathrm{bias}} \ll W_s$ the traps are pulled almost entirely into the deepest
minima of $V_{\mathrm{slow}}$ (extreme clustering). The best-fit
$E_{\mathrm{bias}} \approx 7$~meV is comparable to $W_s$ and thus lies between
these extremes: an energy change of order $E_{\mathrm{bias}}$ alters the local
trap density by a factor $\sim e$, so the fit describes a moderate bias in which
defect and strain-trap formation is enhanced---but not exclusively confined---at
reconstruction domains and strain concentrators. This correlation energy is of
the same order as the strain, reconstruction, dielectric, and electrostatic
scales collected in Table~\ref{tab:energy_scales}, and so is physically
reasonable for a MoSe$_2$/WSe$_2$ heterobilayer.

\subsection{Toward Quantitative Parameter Extraction}

A key practical strength of the descriptor-filter framework is that it provides
a \emph{peak-decomposition-free} route to infer effective disorder parameters
from hyperspectral PL data, without microscopic line assignment.
The procedure maps four observable statistics to four model parameters sequentially.
The slow-disorder correlation length $\xi_s$ follows directly from Corollary~\ref{cor:Ecent_corr}: $\xi_s \approx \xi(E_{\mathrm{cent}}) = 2.00~\mu$m for the experimental system.
The slow-disorder amplitude is then given by $W_s \approx \Var(E_{\mathrm{cent}})^{1/2}$ (Eq.~\eqref{eq:Ecent_approx}), yielding $W_s \sim 10$--$15~$meV from the observed centroid energy variation across the map.
The relative trap strength follows from the $E_{\mathrm{cent}}$--$E_{\mathrm{dom}}$ correlation via Eq.~\eqref{eq:rho_centdom}, using the experimental Spearman value $\rho_{\mathrm{S}}$ as an estimate of the analytical Pearson coefficient $\rho_{\mathrm{P}}$. The quantity the descriptor covariance determines directly is the effective trap-switching fluctuation $\sigma_\delta$, which is the dominant-energy fluctuation induced by the traps rather than the bare trap-depth standard deviation; denoting this effective trap amplitude $W_t^{\mathrm{eff}} \equiv \sigma_\delta$ gives
\begin{equation}
\frac{\sigma_\delta}{W_s}
= \sqrt{\frac{1}{\rho_{\mathrm{P}}(E_{\mathrm{cent}},E_{\mathrm{dom}})^2} - 1}
\;\equiv\; \frac{W_t^{\mathrm{eff}}}{W_s},
\label{eq:Wt_Ws_extract}
\end{equation}
yielding $\sigma_\delta/W_s \approx 0.8$ from the experimental $\rho_{\mathrm{S}} = 0.788$; the independent estimate from the correlation-length ratio $\xi(E_{\mathrm{dom}})/\xi(E_{\mathrm{cent}}) = 0.64$ gives $\sigma_\delta/W_s \approx 1.1$ ($\sigma_\delta^2/W_s^2 \approx 1.2$) via the Gaussian relation Eq.~\eqref{eq:xi_Edom_implicit}. Both place $\sigma_\delta$ at order $W_s$, consistent with an effective $W_t^{\mathrm{eff}} \gtrsim W_s$ (experimental values from Ref.~\cite{ahmad2025hierarchical}). The four-parameter extraction quotes this $W_t^{\mathrm{eff}}$ as the effective trap amplitude $W_t$.
Finally, the effective sharp-emitter density $n_t$ is constrained by the mean sharp fraction $\langle F_S \rangle \approx 0.15$--$0.40$, which grows with $n_t W_t^2 \sigma_t^2$; calibrating to $\langle F_S \rangle \approx 0.25$ at best-fit $W_t = 12~$meV and $\sigma_t = 50~$nm gives $n_t \approx 2~\mu$m$^{-2}$.
Note that $n_t$ is an effective parameter representing the density of optically resolvable sharp-emitter sites under the experimental conditions, not a direct count of microscopic defects (which would be orders of magnitude higher in a moir\'e lattice). This effective density also carries an excitation- and temperature-dependence: under higher pump power a larger fraction of trap and moir\'e sites becomes optically populated, so $n_t^{\mathrm{eff}}$ grows with the exciton filling factor and ultimately saturates as state filling exhausts the available sites---bounded from above by the $\sim 10^3$--$10^4$ moir\'e sites contained within each optical spot. At low temperature, conversely, rapid relaxation funnels excitons into the deepest traps and reduces the number of \emph{distinct} resolvable sharp emitters. The value $n_t \approx 2~\mu$m$^{-2}$ should therefore be read as the effective sharp-emitter density specific to the excitation power and temperature of the experiment, rather than as a fixed material constant.

The four extracted parameters $(\xi_s, W_s, W_t, n_t)$ specify the leading
disorder sector of the model and can be cross-validated by checking whether the
predicted descriptor correlation length ratio and Spearman matrix reproduce
the observed values---an internal consistency check that does not
require any additional input.
The remaining quantities entering the simulation---the trap radius $\sigma_t$,
the broadenings $\eta$ and $\eta_{\mathrm{bg}}$, and the trap-clustering
parameter $E_{\mathrm{bias}}$---are not part of this minimal four-parameter
extraction: $\sigma_t$ enters only through the combination $n_t W_t^2 \sigma_t^2$
(Sec.~\ref{sec:phase_diagram}) and is fixed at a representative sub-optical
value, while $E_{\mathrm{bias}}$ controls the spatial co-localization of traps
with $V_{\mathrm{slow}}$ and is separately constrained by the correlation
lengths of $F_S$ and $R_1$.
This parameter extraction protocol should be applicable to TMD
heterobilayer systems imaged by hyperspectral PL mapping
\cite{mak2022semiconductor,andrei2021marvels,ciarrocchi2022excitonic},
providing a spectroscopic disorder diagnostic independent of microscopic peak identification.

\begin{table*}[t]
\centering
\caption{%
\textbf{Disorder parameter extraction from experimental observables.}
All quantities are derived from the descriptor covariance structure
without microscopic line assignment.
Here $W_t$ is inferred as an \emph{effective trap-switching amplitude}
$W_t^{\mathrm{eff}} \equiv \sigma_\delta$ (the dominant-energy fluctuation
induced by the traps), not as a microscopic trap-depth distribution width.
The last row requires calibration within the two-component (background plus
trap) PL model, because $\langle F_S\rangle$ depends on the broadenings and the
relative oscillator strengths in addition to the trap density (Sec.~\ref{sec:key_results}).
}
\label{tab:param_extract}
\begin{ruledtabular}
\begin{tabular}{llll}
Observable & Extracted parameter & Formula & Value \\
\colrule
$\xi(E_{\rm cent})$ & $\xi_s$ & direct & $\approx 2.00~\mu$m \\
$\Var(E_{\rm cent})^{1/2}$ & $W_s$ & Eq.~\eqref{eq:Ecent_approx} & $\sim 10$--$15~$meV \\
$\rho_{\rm S}(E_{\rm cent}, E_{\rm dom})$ & $\sigma_\delta/W_s$ (eff.\ $W_t/W_s$) & Eq.~\eqref{eq:Wt_Ws_extract} & $\approx 0.8$ \\
$\langle F_S \rangle$ & $n_t W_t^2 \sigma_t^2$ & calibration & $n_t \approx 2~\mu$m$^{-2}$ \\
\end{tabular}
\end{ruledtabular}
\end{table*}

\subsection{Generic Versus Moir\'e-Specific Content}
\label{sec:moire_specific}

It is important to delineate which of our results are generic to any
multi-scale-disordered emitter and which are specific to a moir\'e heterobilayer.
The descriptor-filter \emph{mechanism} itself is general: the correlation-length
hierarchy $\xi(E_{\mathrm{cent}}) \ge \xi(E_{\mathrm{dom}})$, the
$\Delta E_{cd}$--$R_{\mathrm{HL}}$ anticorrelation, and the descriptor covariance
structure follow from the coexistence of a smooth, micron-correlated background
and a dense set of sharp, sub-spot emitters, and would arise equally in a
monolayer TMD or a disordered quantum well with the same two-scale spectral
content. Indeed, the underlying disorder-localized COM-exciton picture and the
optically filtered correlation functions are essentially those developed for
quantum-well excitons \cite{runge1998spatially,runge1998level,zimmermann2003theory};
in that sense the framework is a spatially resolved, descriptor-based counterpart of
the mobility-edge phenomenology known there. The energy-dependent
localization it implies has recently been observed directly in the same material
system as a localisation-to-delocalisation transition of moir\'e excitons
\cite{blundo2024localisation,fang2023localization}, and localized interlayer
excitons occur in MoSe$_2$/WSe$_2$ even in the absence of a moir\'e potential
\cite{schaibley2022localized}---underscoring that the localization physics
underlying our descriptors is generic rather than moir\'e-specific.

What the moir\'e heterobilayer provides is a \emph{natural and especially rich
realization} of the required two-scale structure. First, the moir\'e superlattice
can provide a dense manifold of localized interlayer-exciton states and a high areal
density of optically active sites---of order $10^3$--$10^4$ within a single
optical spot---so that the sharp-emitter ensemble needed for $F_S$, $R_1$, and the
trap-switching fluctuation $\sigma_\delta$ need not rely on sparse extrinsic defects. Second, moir\'e reconstruction, stacking-registry
variation, and strain can correlate the slow potential with the locations or
activation of trap-like emitters, providing a plausible microscopic route to the
trap--$V_{\mathrm{slow}}$ clustering [Eq.~\eqref{eq:trap_clustering}] that lengthens
$\xi(F_S)$ and $\xi(R_1)$ and can contribute to the observed spectral-family
structure. At the spatial scales analyzed here the moir\'e potential thus enters
principally by setting the density and clustering of the local excitonic manifold
(and by renormalizing the effective mass and local DOS), rather than by imprinting
its own $\sim 10$--$20$~nm periodicity on the descriptors, which is averaged over
by the optical spot. More specifically moir\'e-related signatures---as
opposed to generic multi-scale disorder---would include the triangular reconstruction-domain
organization of the descriptor maps and recurrent peaks at the moir\'e miniband
spacing, both accessible only below the optical-spot scale (near-field or
low-temperature single-site spectroscopy), as discussed next.

\subsection{Extensions: Moir\'e Potential and Quantum Emitter Regime}

When $V_{\mathrm{moir\acute{e}}}$ is included, the model predicts additional structure. The moir\'e potential introduces a period-$a_M$ modulation of the local DOS, which may manifest as recurrent spectral peaks at energies separated by the moir\'e miniband spacing. When $V_m$ exceeds the relevant thermal and linewidth scales, individual moir\'e sites can host localized quantum-emitter-like exciton states \cite{yu2021moire,he2015single,srivastava2015optically,branny2017deterministic,brotonsgisbert2024interlayer}, and their spatial brightness distribution would then be modulated by the slow disorder field $V_{\mathrm{slow}}$---a testable prediction accessible via near-field or variable-temperature PL mapping.

\section{Conclusion}
\label{sec:conclusion}

We have developed a minimal theory for the spatial organization of spectral descriptors
in moir\'e exciton photoluminescence, using the descriptor-based disorder-filter
picture as an organizing principle for hyperspectral PL data.
Starting from an effective exciton Hamiltonian and the chain
$H \to \rho(E,\mathbf{R}) \to I(E,\mathbf{R}) \to \text{descriptor fields}$,
the paper establishes the following results.

The \textbf{descriptor-based disorder-filter} principle provides the unifying framework: each spectral descriptor is preferentially sensitive to a different component of the multi-scale disorder landscape. The centroid energy integrates over the optical spot, filtering out sub-spot trap fluctuations, so that it primarily tracks the slow disorder landscape. The dominant-peak energy, by contrast, is additionally randomized by discrete trap-level switching within the optical spot. This differential filtering provides a microscopic mechanism for the observed inter-descriptor differences and the correlation-length hierarchy.

The \textbf{correlation length hierarchy} $\xi(E_{\mathrm{cent}}) \geq \xi(E_{\mathrm{dom}})$ is established analytically from the decomposition $E_{\mathrm{dom}} = \bar{V}_s + \delta E_{\mathrm{dom}}$. The experimental ratio $\xi(E_{\mathrm{dom}})/\xi(E_{\mathrm{cent}}) \approx 0.64$~\cite{ahmad2025hierarchical} implies an effective trap-switching fluctuation $\sigma_\delta/W_s \approx 1.1$ (of order unity), and Hamiltonian diagonalization provides an independent model-level check of the hierarchy from the eigenvalue statistics: $\xi(E_{\mathrm{cent}}) = 1.18~\mu$m $> \xi(E_{\mathrm{dom}}) = 0.69~\mu$m.

The \textbf{sign and approximate magnitude of the principal Spearman inter-descriptor correlations} are derived analytically. The near-perfect $\Delta E_{cd}$--$R_{\mathrm{HL}}$ anti-correlation (measured $\rho_{\mathrm{S}} = -0.978$~\cite{ahmad2025hierarchical}) is a robust spectral shape relation that follows from the geometry of a broad class of spectra with a dominant unimodal envelope, while the positive $F_S$--$R_1$ correlation (measured $\rho_{\mathrm{S}} = +0.901$~\cite{ahmad2025hierarchical}) reflects both descriptors measuring the same trap-peak density. The calibrated phenomenological PL simulations reproduce all four principal correlations simultaneously, with a mean absolute deviation of $0.010$ per pair.

\textbf{Spectral families} are interpreted as correlated disorder domains of the Gaussian slow disorder field, rather than as thermodynamically distinct phases, with the predicted family count consistent with the experimentally observed three families~\cite{ahmad2025hierarchical}.

Four qualitatively distinct \textbf{disorder regimes} are identified on a two-dimensional disorder parameter map. The experimental system falls in the hierarchically disordered regime where both slow and trap disorder exceed the homogeneous linewidth, producing multi-scale spectral organization reflected in the descriptor correlation-length hierarchy.

The framework provides a peak-decomposition-free route to constrain the leading
effective disorder parameters $(\xi_s, W_s, W_t, n_t)$ from hyperspectral PL data via
the descriptor covariance structure, without any microscopic line assignment
(Table~\ref{tab:param_extract}).
The predicted correlation-length hierarchy and inter-descriptor Spearman matrix
constitute directly testable signatures that can distinguish disorder regimes
and guide future experiments with higher spatial resolution or variable temperature.
Although motivated by MoSe$_2$/WSe$_2$, the framework should be applicable to
hyperspectral PL maps in which spectra contain a smooth envelope together
with unresolved local fine structure; descriptor covariance then serves as
a general spectroscopic disorder diagnostic for two-dimensional materials,
quantum-dot arrays, and disordered semiconductor heterostructures.
Future work should address valley physics, moir\'e reconstruction domain walls,
and nonequilibrium relaxation in modifying the spectral landscape.

\section*{Acknowledgments}
We thank R. Kitaura for stimulating discussions on hyperspectral PL organization in moir\'e heterobilayers.
This work was supported by JSPS KAKENHI (Grants No.
JP25K01609, No. JP22H05473, and No. JP21H01019), JST
CREST (Grant No. JPMJCR19T1). K.W. acknowledges the
financial support for Basic Science Research Projects (Grant
No. 2401203) from the Sumitomo Foundation.

\section*{Data Availability}
The analysis scripts and numerical data that support the findings of this
study are available from the corresponding author upon reasonable request. The
experimental data used for comparison are available in
Ref.~\cite{ahmad2025hierarchical}.

\appendix

\section{Gaussian Random Field Generation}
\label{app:GRF}

The slow potential $V_{\mathrm{slow}}$ is a Gaussian random field with zero mean
and a prescribed two-point correlation
$\langle V(\mathbf{r}) V(\mathbf{r}') \rangle = W^2 C(|\mathbf{r}-\mathbf{r}'|)$.
We generate it with the spectral (Fourier-filtering) method, which is exact and
costs only $O(N\log N)$. The method rests on the Wiener--Khinchin theorem: for a
stationary field the power spectral density is the Fourier transform of the
autocorrelation,
$S(\mathbf{k}) = \int d^2\mathbf{r}\, C(r)\, e^{-i\mathbf{k}\cdot\mathbf{r}}$.
Coloring spectrally flat (white) noise with $\sqrt{S(\mathbf{k})}$ therefore
imprints precisely the target correlation $C(r)$. Concretely:
\begin{enumerate}
\item Generate a white noise field $\xi(\mathbf{r})$ with $\langle \xi(\mathbf{r}) \xi(\mathbf{r}') \rangle = \delta^{(2)}(\mathbf{r}-\mathbf{r}')$.
\item Compute its Fourier transform $\tilde{\xi}(\mathbf{k})$.
\item Multiply by the square root of the power spectrum,
      $\tilde{V}(\mathbf{k}) = W\sqrt{S(\mathbf{k})}\, \tilde{\xi}(\mathbf{k})$.
\item Inverse Fourier transform: $V(\mathbf{r}) = \mathrm{IFFT}[\tilde{V}(\mathbf{k})]$.
\end{enumerate}
For the Gaussian kernel $C(r) = e^{-r^2/2\xi_s^2}$ used throughout, the power
spectrum is itself Gaussian, $S(\mathbf{k}) = 2\pi\xi_s^2\, e^{-\xi_s^2 k^2/2}$,
so the generated field is smooth on the scale $\xi_s$.

Two implementation points are worth noting. First, the discrete transform carries
grid-dependent prefactors (factors of $N_x N_y$ and the cell area
$\mathrm{d}x^2$); rather than tracking these, we rescale the output field to its
exact target standard deviation, $V \to W\,V/\mathrm{std}(V)$, which fixes the
amplitude $W$ unambiguously. Second, the FFT imposes periodic boundary conditions;
because the map satisfies $L \gg \xi_s$, the resulting wrap-around correlations are
negligible.

\section{Finite-Difference Hamiltonian}
\label{app:FD}

We discretize the exciton Hamiltonian
$H = -(\hbar^2/2M)\nabla^2 + V_{\mathrm{eff}}$ on a 2D lattice with grid spacing
$a$ and $N_x \times N_y$ sites. The Laplacian is approximated by the standard
second-order five-point stencil,
\begin{align}
-\frac{\hbar^2}{2M}\nabla^2 \psi(i,j)
&\approx \frac{\hbar^2}{2Ma^2}\bigl[4\psi(i,j)\nonumber\\
&\quad - \psi(i{+}1,j) - \psi(i{-}1,j)\nonumber\\
&\quad - \psi(i,j{+}1) - \psi(i,j{-}1)\bigr].
\end{align}
Assembling this together with the potential, which is diagonal in the
site basis, gives a sparse Hamiltonian: each site $(i,j)$ carries a diagonal entry
$4t_{\rm hop} + V_{\mathrm{eff}}(i,j)$ and is coupled to its four nearest neighbors by a
hopping $-t_{\rm hop}$, where $t_{\rm hop} = \hbar^2/(2Ma^2)$ is the kinetic (hopping) energy scale. The matrix
has dimension $N_x N_y$ with at most five non-zero entries per row. We impose
periodic boundary conditions, consistent with the periodic slow-disorder field of
Appendix~\ref{app:GRF}.

For the simulation parameters $M = 1.2\,m_0$ and $\Delta x = 62.5~$nm
(a $128\times128$ lattice on an $8~\mu$m domain), $t_{\rm hop} = \hbar^2/(2M\Delta x^2)
\approx 0.0081~$meV, consistent with Sec.~\ref{sec:model}.
Because $t_{\rm hop} \ll W_s \sim 10~$meV, the disorder energy dominates the kinetic energy
at the grid scale; the kinetic term therefore plays the role of the
\emph{coarse-grained spectral generator} described in
Sec.~\ref{sec:diag_validation} rather than governing eigenstate localization.

Only the lowest optically active eigenstates are required. We obtain them with the
Lanczos algorithm (ARPACK, smallest-algebraic mode) \cite{lehoucq1998arpack}, as
described in Sec.~\ref{sec:numerics}; for larger grids a shift-and-invert sparse
eigensolver (SLEPc/PETSc) \cite{hernandez2005slepc,balay1997petsc} targets the same
low-lying band more economically.

\section{Spectral Descriptor Variance from Disorder Theory}
\label{app:variances}

For completeness, we collect the leading-order descriptor variances derived in the
main text. They split into two groups, according to which part of the disorder each
descriptor measures.

The energy-position descriptors follow from the decomposition
$E_{\mathrm{dom}} = \bar{V}_s + \delta E_{\mathrm{dom}}$
(Secs.~\ref{sec:Ecent_theory} and \ref{sec:correlations}). The centroid tracks the
optically filtered slow potential, whereas the dominant-peak energy carries the
additional trap-switching fluctuation $\sigma_\delta$; because both share the same
$\bar{V}_s$, it cancels in their difference:
\begin{align}
\Var(E_{\mathrm{cent}}) &\approx W_s^2,
\\
\Var(E_{\mathrm{dom}}) &\approx W_s^2 + \sigma_\delta^2,
\\
\Var(\Delta E_{cd}) &\approx \sigma_\delta^2,
\\
\Var(R_{\mathrm{HL}}) &\propto \sigma_\delta^2.
\end{align}
The fine-structure descriptors instead measure the trap loading within the optical
spot. They scale with the combination $\Gamma_t = n_t W_t^2 \sigma_t^2$
(Sec.~\ref{sec:phase_diagram}), reduced by the spot area $\sigma_{\mathrm{opt}}^2$
(Sec.~\ref{sec:descriptor_theory}):
\begin{align}
\Var(F_S) &\propto n_t W_t^2 \sigma_t^2 / \sigma_{\mathrm{opt}}^2,
\label{eq:VarFS_etc}
\\
\Var(R_1) &\propto n_t W_t^2 \sigma_t^2 / (\eta^2 \sigma_{\mathrm{opt}}^2),
\\
\Var(S_{\mathrm{spec}}) &\propto n_t W_t^2 \sigma_t^2 / \sigma_{\mathrm{opt}}^2.
\end{align}
These relations give the scaling of the descriptor variances with the disorder
parameters, providing additional testable predictions beyond the spatial
correlation lengths.
For the dimensionless descriptors ($R_{\mathrm{HL}}$, $F_S$, $S_{\mathrm{spec}}$)
and for $R_1$ (of dimension inverse energy), the proportionality constants carry
the residual dimensions---fixed by the spectral envelope width and the broadening
$\eta$---so these last four relations express the parameter scaling rather than
dimensionally complete identities.

\section{Robustness to the Trap Width $\sigma_t$}
\label{app:sigma_robust}

To verify that the effective trap width $\sigma_t$ does not control the main
results---as argued in Sec.~\ref{sec:model}, where $\sigma_t$ enters only through
the degenerate combination $n_t W_t^2 \sigma_t^2$ and through the
non-binding hierarchy floor $\xi_t = \sqrt{2}\,\sigma_t$---we repeat the full
Hamiltonian diagonalization (Sec.~\ref{sec:numerics}) while sweeping $\sigma_t$
over $50$--$125~$nm, holding the disorder realization (slow field, trap positions,
and trap depths) fixed through common random seeds.
Table~\ref{tab:sigma_robust} reports the outcome in two modes.
In the \emph{uncompensated} sweep ($n_t$ fixed), increasing $\sigma_t$ deepens
and broadens the wells, raising the trap loading (and hence $\langle F_S\rangle$
and the trap-induced variance), yet the correlation-length hierarchy
$\xi(E_{\mathrm{cent}}) > \xi(E_{\mathrm{dom}})$ and the two principal
shape correlations $\rho_{\mathrm{S}}(\Delta E_{cd}, R_{\mathrm{HL}}) \approx -0.95$ and
$\rho_{\mathrm{S}}(F_S, R_1) \approx +0.93$ are unchanged.
The $E_{\mathrm{cent}}$--$E_{\mathrm{dom}}$ correlation, by contrast, is itself
loading-dependent and is not stabilized in this uncompensated sweep; its
anomalously small value at $\sigma_t = 125~$nm ($\rho_{EE} = -0.01$) reflects
overloading of the broadened trap wells and lies outside the best-fit regime,
and is included only as a stress test.
In the \emph{compensated} sweep, where $n_t \propto \sigma_t^{-2}$ holds the
combination $n_t \sigma_t^2$ fixed, the loading-dependent quantities
$\langle F_S\rangle$ and $\rho_{\mathrm{S}}(E_{\mathrm{cent}}, E_{\mathrm{dom}})$ also stabilize,
directly confirming the $n_t$--$\sigma_t$ degeneracy.
In all eight cases the hierarchy ordering is preserved; only the magnitude of the
gap narrows as $\sigma_t$ grows (the trap-induced fluctuation becomes spatially
broader, lengthening $\xi(E_{\mathrm{dom}})$ toward $\xi(E_{\mathrm{cent}})$),
never inverting.

\begin{table}[t]
\centering
\begin{ruledtabular}
\begin{tabular}{rcccccc}
$\sigma_t$ & $n_t$ & $\xi_{\mathrm{cent}}$ & $\xi_{\mathrm{dom}}$
 & $\rho_{cd}$ & $\rho_{FR}$ & $\rho_{EE}$ \\
(nm) & ($\mu$m$^{-2}$) & ($\mu$m) & ($\mu$m) & & & \\
\colrule
\multicolumn{7}{l}{\emph{Uncompensated} ($n_t = 2.0$ fixed)} \\
$50$  & $2.00$ & $1.18$ & $0.69$ & $-0.97$ & $+0.94$ & $+0.51$ \\
$75$  & $2.00$ & $1.36$ & $1.04$ & $-0.93$ & $+0.96$ & $+0.70$ \\
$100$ & $2.00$ & $1.16$ & $0.87$ & $-0.96$ & $+0.94$ & $+0.45$ \\
$125$ & $2.00$ & $1.21$ & $1.07$ & $-0.98$ & $+0.93$ & $-0.01$ \\
\colrule
\multicolumn{7}{l}{\emph{Compensated} ($n_t \propto \sigma_t^{-2}$)} \\
$50$  & $2.00$ & $1.18$ & $0.69$ & $-0.97$ & $+0.94$ & $+0.51$ \\
$75$  & $0.89$ & $0.98$ & $0.71$ & $-0.96$ & $+0.92$ & $+0.62$ \\
$100$ & $0.50$ & $1.15$ & $1.03$ & $-0.98$ & $+0.85$ & $+0.82$ \\
$125$ & $0.32$ & $1.18$ & $1.11$ & $-0.98$ & $+0.86$ & $+0.65$ \\
\end{tabular}
\end{ruledtabular}
\caption{Robustness of the diagonalization results to the trap width $\sigma_t$.
$\rho_{cd} \equiv \rho_{\mathrm{S}}(\Delta E_{cd}, R_{\mathrm{HL}})$,
$\rho_{FR} \equiv \rho_{\mathrm{S}}(F_S, R_1)$, and
$\rho_{EE} \equiv \rho_{\mathrm{S}}(E_{\mathrm{cent}}, E_{\mathrm{dom}})$.
The hierarchy $\xi(E_{\mathrm{cent}}) > \xi(E_{\mathrm{dom}})$ holds in every row.
The compensated sweep keeps $n_t \sigma_t^2$ fixed, stabilizing the
loading-dependent observables $\langle F_S\rangle$ and $\rho_{EE}$, and thereby
confirming the $n_t W_t^2 \sigma_t^2$ degeneracy.}
\label{tab:sigma_robust}
\end{table}

\section{Robustness to the Optical-Weight Model}
\label{app:optical_weight}

The local PL weight of a COM eigenstate is, physically, the coherent radiative
(oscillator-strength) matrix element $A_n(\mathbf{R}) = |\!\int\! W_{\mathrm{amp}}(\mathbf{r}-\mathbf{R})\psi_n(\mathbf{r})\,d^2\mathbf{r}|^2$
of Eq.~\eqref{eq:A_weight}
\cite{runge1998spatially,runge1998level,zimmermann2003theory}, rather than the
incoherent local density $\int W_{\mathrm{opt}}|\psi_n|^2$ used in the transparent
LDOS form \eqref{eq:I_formal}. Here we verify that the descriptor statistics are
insensitive to this choice. Using the same Hamiltonian eigenstates
(Sec.~\ref{sec:key_results}), we recompute all descriptors with both weightings
for the same $128\times128$, $8\times8~\mu$m$^2$ realizations, averaged over five
independent disorder seeds. As shown in Fig.~\ref{fig:optical_weight}, every
descriptor statistic---$\xi(E_{\mathrm{cent}})$, $\xi(E_{\mathrm{dom}})$, their
ratio, the three principal Spearman correlations, and $\langle F_S\rangle$---agrees
between the two weightings to well within the seed-to-seed scatter
(the coherent and incoherent values differ by $\le 0.01$ in every case, e.g.\
$\rho_{\mathrm{S}}(\Delta E_{cd},R_{\mathrm{HL}}) = -0.928$ vs.\ $-0.929$ and
$\xi(E_{\mathrm{cent}}) = 1.143$ vs.\ $1.142~\mu$m). The reason is structural:
both weights are localized within the optical spot, so the spatial statistics of
the descriptor fields---and hence the correlation-length hierarchy and descriptor
covariance that are the subject of this work---are set by the disorder landscape,
not by the detailed form of the per-state emission weight. The coherent weight
does redistribute intensity toward the nodeless localized (trap) states, but this
common re-weighting cancels in the descriptor \emph{correlations}.

\begin{figure}[t]
\centering
\includegraphics[width=0.72\columnwidth]{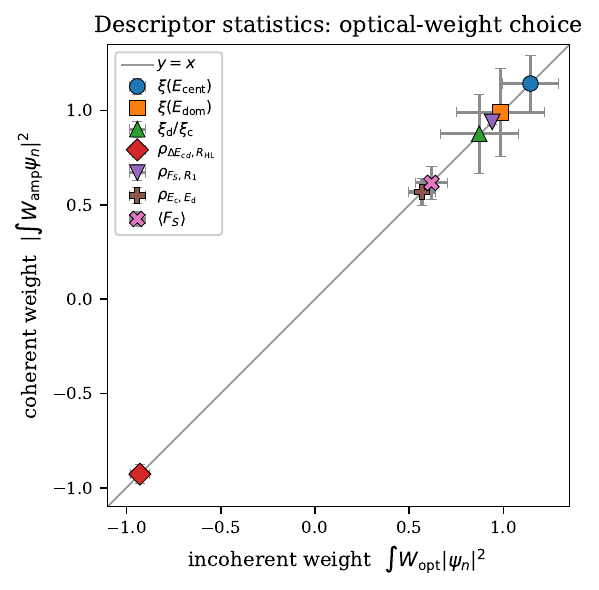}
\caption{
\textbf{Descriptor statistics are insensitive to the optical-weight model.}
For each descriptor statistic (labeled), the value obtained with the coherent
radiative weight $|\!\int\! W_{\mathrm{amp}}\psi_n|^2$ (vertical axis) is plotted
against that from the incoherent local-density weight
$\int W_{\mathrm{opt}}|\psi_n|^2$ (horizontal axis); error bars are the
seed-to-seed standard deviation over five disorder realizations. All points lie
close to the $y=x$ line, showing that the two weightings yield statistically
indistinguishable descriptor covariance and correlation-length hierarchy within
the seed-to-seed uncertainty.
}
\label{fig:optical_weight}
\end{figure}

\section{Finite-Size, Finite-Step, and Finite-Spot Robustness of the Hierarchy}
\label{app:finite_size}

The experimental map spans $8\times8~\mu$m$^2$, only a few times the correlation
lengths, so the extracted $1/e$ lengths carry finite-size uncertainty. We test the
robustness of the hierarchy $\xi(E_{\mathrm{cent}}) > \xi(E_{\mathrm{dom}})$ by
sweeping the map size $L$, the scan pitch, and the optical-spot width, each over
thirty disorder realizations, using the fast phenomenological descriptor generator
(Sec.~\ref{sec:key_results}) so that many configurations can be sampled.
Figure~\ref{fig:finite_size} summarizes the domain-size sweep, showing the
correlation lengths $\xi(E_{\mathrm{cent}})$ and $\xi(E_{\mathrm{dom}})$ versus
map size [Fig.~\ref{fig:finite_size}(a)] and the finite-size-robust
hierarchy ratio [Fig.~\ref{fig:finite_size}(b)]. Two features stand out.
First, in this phenomenological sweep the hierarchy holds across the sampled
configurations, and the ratio $\xi(E_{\mathrm{dom}})/\xi(E_{\mathrm{cent}})$ is much
less sensitive to sampling than the absolute lengths:
it stays at $0.69$--$0.70$ across
$L = 8$--$20~\mu$m and scan pitches $0.25$--$0.60~\mu$m, and varies only mildly
(from $0.74$ to $0.64$) as the optical FWHM is swept over $0.60$--$1.10~\mu$m,
all bracketing the experimental ratio $0.64$. Second, the \emph{absolute} $1/e$
lengths are biased low on the smallest map and drift upward with $L$ (a
finite-size estimator bias); at the experimental map size the simulation
reproduces the measured values ($\xi(E_{\mathrm{cent}}) = 1.92\pm0.17~\mu$m vs.\ $2.00$;
$\xi(E_{\mathrm{dom}}) = 1.33\pm0.25~\mu$m vs.\ $1.27$). These differ slightly from
the Fig.~\ref{fig:correlation_hierarchy} ensemble ($1.93$ and $1.38~\mu$m) because
the finite-size sweep uses an independently generated set of realizations. Because
finite sampling biases both lengths in the same direction, the physically
meaningful ratio is comparatively stable. The principal anticorrelation is likewise robust,
$\rho_{\mathrm{S}}(\Delta E_{cd}, R_{\mathrm{HL}}) = -0.97\pm0.01$ throughout.
This supports the conclusion that the correlation-length hierarchy is not primarily
a finite-sampling or coarse-gridding artifact, consistent with the ensemble-average
inequality of Proposition~\ref{prop:hierarchy}; occasional per-realization inversions seen in
the microscopic diagonalization on a single $8~\mu$m map are the expected
finite-size scatter about this ensemble result.

\begin{figure}[t]
\centering
\includegraphics[width=\columnwidth]{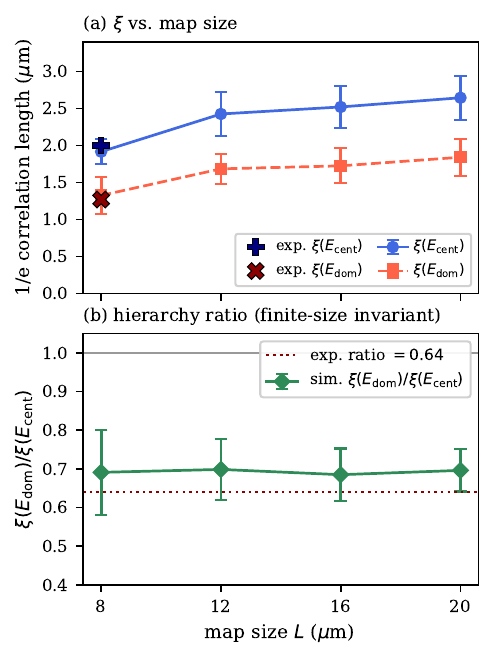}
\caption{
\textbf{Finite-size behavior of the correlation-length hierarchy.}
(a) $1/e$ correlation lengths $\xi(E_{\mathrm{cent}})$ and $\xi(E_{\mathrm{dom}})$
versus map size $L$ (mean $\pm$ std over thirty disorder realizations; fixed
$0.4~\mu$m pitch and $0.85~\mu$m optical FWHM); filled markers are the
experimental values at $L = 8~\mu$m.
(b) The hierarchy ratio $\xi(E_{\mathrm{dom}})/\xi(E_{\mathrm{cent}})$ is
finite-size-robust ($\approx 0.69$--$0.70$) and close to the experimental
value $0.64$ (dotted line), even though the absolute lengths in (a) drift with
$L$.
}
\label{fig:finite_size}
\end{figure}

\bibliographystyle{apsrev4-2}
\bibliography{refs}

\end{document}